\renewcommand\footnotetextcopyrightpermission[1]{}
\setlist{noitemsep}
\newcommand*{\tv}[2]{\big[{#1\atop #2}\big]}
\newcommand*{\hlchange}[1]{{\color{RawSienna}\bm{#1}}}
\newcommand*{\hlcrossoff}[1]{{\color{Tan}\bm{\xcancel{#1}}}}
\newcommand*{\smallsf}[1]{\textsf{\small #1}}
\newcommand*{\onef}[0]{\mathbf{1}}
\newcommand*{\minarity}[0]{\mu}
\lstdefinelanguage{MyPseudoCode}{
  columns=flexible,
  keepspaces=true,
  basicstyle=\footnotesize\ttfamily,
  morekeywords={and,each,else,false,for,fun,if,in,not,null,or,return,true,while},
  keywordstyle=\bfseries,
  identifierstyle=\slshape,
  numbers=left,
  numberstyle=\sffamily\tiny,
  numbersep=2mm,
  xleftmargin=4mm,
  mathescape=true,
  escapeinside={(*@}{@*)},
}
\newcommand*{\myparagraph}[2][\medskip]{#1
  \noindent\textbf{#2}.}
\newcommand{\TITLE}{Sub-O(log n) Out-of-Order\\ Sliding-Window Aggregation}
\newtheorem{theorem}{Theorem}[section] % use section
\newtheorem{lemma}[theorem]{Lemma}
\newcommand*{\dsName}[0]{FiBA}
\begin{document}

\title[Sub-O(log n) Out-of-Order Sliding-Window Aggregation]{\TITLE}

\author{Kanat Tangwongsan}
\affiliation{\institution{{\small Mahidol University International College}}}
\email{kanat.tan@mahidol.edu}

\author{Martin Hirzel}
\affiliation{\institution{IBM Research}}
\email{hirzel@us.ibm.com}

\author{Scott Schneider}
\affiliation{\institution{IBM Research}}
\email{scott.a.s@us.ibm.com}
% \author{
% \alignauthor
% Kanat Tangwongsan\\
%       \affaddr{\mbox{\hspace*{-7mm}Mahidol University International College}}\\
%       \email{kanat.tan@mahidol.edu}
% \alignauthor
% Martin Hirzel\\
%       \affaddr{IBM Research}\\
%       \email{hirzel@us.ibm.com}
% \alignauthor
% Scott Schneider\\
%       \affaddr{IBM Research}\\
%       \email{scott.a.s@us.ibm.com}}

% \author{(Blind Review)}
\begin{abstract}
  Sliding-window aggregation summarizes the most recent information in
  a data stream. Users specify how that summary is computed, usually
  as an associative binary operator because this is the most general
  known form for which it is possible to avoid na\"ively scanning
  every window. For strictly in-order arrivals, there are algorithms
  with $O(1)$ time per window change assuming associative operators.
  Meanwhile, it is common in practice for streams to have data
  arriving slightly out of order, for instance, due to clock drifts or
  communication delays. Unfortunately, for out-of-order streams, one
  has to resort to latency-prone buffering or pay $O(\log n)$ time per
  insert or evict, where $n$ is the window size.

  This paper presents the design, analysis, and implementation of
  \dsName{}, a novel sliding-window aggregation algorithm with an
  amortized upper bound of $O(\log d)$ time per insert or
  evict, where $d$ is the distance of the inserted or evicted value to the
  closer end of the window.
  This means $O(1)$ time for in-order arrivals and nearly $O(1)$ time
  for slightly out-of-order arrivals, with a smooth transition towards
  $O(\log n)$ as $d$ approaches~$n$. We also prove a matching lower
  bound on running time, showing optimality.
  Our algorithm is as general as the prior state-of-the-art: it
  requires associativity, but not invertibility nor commutativity.
  At the heart of the algorithm is a careful combination of
  finger-searching techniques, lazy rebalancing, and
  position-aware partial aggregates. We further show how to answer
  range queries that aggregate subwindows for window sharing. Finally,
  our experimental evaluation shows that \dsName{} performs well
  in practice and supports the theoretical findings.
  % 
  % 
  % Furthermore, our algorithm supports efficient window sharing.
  % 
  % Overall, this paper shows that it is no longer necessary to compromise on either
  % latency or accuracy to maintain sliding window aggregation on out-of-order
  % streams.
  % \textbf{XXX - experimental eval: we're great.}
  % \textbf{XXX - say why previous guys compromise on accuracy or latency}
\end{abstract}
  % Sliding-window aggregation is an essential stream processing
  % construct that can be quite versatile by supporting any associative
  % binary aggregation operator. Unfortunately, sliding-window
  % aggregation can also be expensive for large windows, especially when
  % data arrives out-of-order. Previous solutions either require
  % in-order streams or take time $O(\log n)$ per insert or evict, where
  % $n$ is the window size. This paper uses finger \mbox{B-trees} to
  % solve this problem in $O(\log d)$ time, where $d$ is the distance of
  % the inserted or evicted value from the start or end of the
  % window. The $O(\log d)$ complexity implies constant time for the
  % in-order case, with a smooth transition towards $O(\log n)$ as $d$
  % approaches~$n$.  Our algorithm is as general as the prior
  % state-of-the-art: it does not require the aggregation operator to be
  % invertible or commutative.  Furthermore, our
  % algorithm is amenable to window sharing. Overall, this paper means
  % it is no longer necessary to compromise on either latency or
  % accuracy to perform sliding window aggregation over out-of-order
  % streams.

%%% Local Variables:
%%% mode: latex
%%% TeX-master: "paper"
%%% End:

\maketitle

\section{Introduction}
\label{sec:intro}

Stream processing is now in widespread production use in domains as varied as
telecommunication, personalized advertisement, medicine, transportation, and
finance. It is generally the paradigm of choice for applications that expect
high throughput and low latency. Regardless of domain, nearly every stream
processing application involves some form of aggregation or another, with one of
the most common being sliding-window aggregation.

Sliding-window aggregation derives a summary statistic over a user-specified
amount of recent streaming data. Users also define how that summary statistic is
computed, usually in the form of an associative binary operator~\cite{boykin_et_al_2014}, as that is the
most general known form for which computation can be effectively incrementalized
to avoid na\"ively scanning every window. While some associative
aggregation operators, such as sum, are also invertible, many, such as maximum
or Bloom filters, are merely associative but not invertible.

Recent algorithmic research on sliding-window aggregation has given much
attention to streams with strictly in-order arrivals. The standard interface for
sliding-window aggregation supports insert, evict, and query. In the in-order
setting, there are
algorithms~\cite{shein_chrysanthis_labrinidis_2017,tangwongsan_hirzel_schneider_2017}
for associative operators that take only $O(1)$ time per window change, without
requiring the operator to be invertible nor commutative.

In reality, however, out-of-order streams are the norm~\cite{akidau_et_al_2013}.
Clock drift and disparate latency in computation and communication, for example,
can cause values in a stream to arrive in a different order than their
timestamps. Processing out-of-order streams is already supported in many stream
processing platforms
(e.g.,~\cite{akidau_et_al_2013,zaharia_et_al_2013,carbone_et_al_2015,akidau_et_al_2015}).
Still, in terms of performance, users who want the full generality of
associative operators have to resort to latency-prone buffering or,
alternatively, use an augmented balanced tree, such as a B-tree, at a cost of
$O(\log n)$ time per insert or evict, where $n$ is the window size. This stands
in stark contrast with the in-order setting, especially for when the streams are
nearly in order. Thus, we ask whether there exists a sub-$O(\log n)$ algorithm
for out-of-order streams; this paper is our affirmative answer.

This paper introduces the finger B-tree aggregator (\dsName{}), a novel
algorithm that efficiently aggregates sliding windows on out-of-order streams
and in-order streams alike. Each insert or evict takes amortized $O(\log d)$
time\footnote{See Theorem~\ref{trm_finger_complexity} for a more formal
  statement.}, where the \emph{out-of-order distance} $d$ is the distance from the
inserted or evicted value to the closer end of the window. The $O(\log d)$
complexity means $O(1)$ for in-order streams, nearly $O(1)$ for slightly
out-of-order streams, and never more than $O(\log n)$ even for severely
out-of-order streams. The worst-case time for any one particular insert or evict
is $O(\log n)$, which only happens in the rare case of rebalancing all the way
up the tree. \dsName{} requires $O(n)$ space and takes $O(1)$ time for a
whole-window query. Furthermore, it is as general as the prior state-of-the-art,
supporting variable-sized windows and only requiring associativity from the
operator.

%% This paper presents the design, analysis, and implementation of finger B-tree
%% aggregator (\dsName{}), which efficiently handles out-of-order streams and
%% in-order streams alike. \dsName{} requires $O(n)$ space and supports each insert
%% or evict in $O(\log d + r(n))$ time, where $d$ is the distance from the affected
%% value to the closer end of the window and the rebalancing cost $r(n)$ is
%% amortized $O(1)$ with worst-case $O(\log n)$, which rarely happens. Because
%% $r(n)$ is amortized constant, the $O(\log d)$ term means, $O(1)$ time for
%% in-order arrivals and nearly $O(1)$ time for nearly in-order arrivals---with
%% worst-case $O(\log n)$. \dsName{} is as general as the prior state-of-the-art,
%% supporting variable-sized windows and only requiring associativity from the
%% operator.

Our solution can be summarized as finger B-trees~\cite{guibas_et_al_1977} with position-aware partial
aggregates. Starting with the classic B-trees, we first add pointers, or
\emph{fingers}, to the start and end of the tree. These fingers make it possible
to perform the search for the value to insert or evict in $O(\log d)$ worst-case
time. Second, we adapt a specific variant of B-trees where the rebalance to fix
the size invariants takes amortized $O(1)$ time; specifically, we use B-trees
with \lstinline{MAX_ARITY$=2\cdot$MIN_ARITY} and where rebalancing happens
after-the-fact~\cite{huddleston_mehlhorn_1982}. Third and most importantly, we
develop novel position-aware partial aggregates and a corresponding algorithm to
bound the cost of aggregate repairs to the cost of search plus rebalance.

The running time of \dsName{} is asymptotically the best possible in
general. We prove a lower bound showing that for insert and evict
operations with out-of-order distance up to $d$, the amortized cost of
an operation in the worst case must be at least $\Omega(\log d)$.

Furthermore, we show how \dsName{} can support window sharing with query time
logarithmic in the subwindow size and the distance from the largest window's
boundaries. Here, the space complexity is $O(n_{\max})$, where $O(n_{\max})$ is
the size of the largest window.

Our experiments confirm the theoretical findings and show that \dsName{}
performs well in practice. For out-of-order streams, it is a substantial
improvement over existing algorithms in terms of both latency and throughput.
For strictly in-order streams (i.e., FIFO), it demonstrates constant time
performance and remains competitive with specialized algorithms for in-order
streams.

% Our solution requires the base aggregate operator to be associative but not
% invertible or commutative. This paper proves sub-$O(\log n)$ time cost and
% reports experiments measuring high throughput. This paper proves that the
% worst-case latency is $O(\log n)$ time even for variable-sized windows and
% reports experiments measuring low latency. This paper also shows that our
% algorithm can support window sharing with query times logarithmic in the
% subwindow size and the distance from the ends of the tree. The space complexity
% is $O(n)$, or $O(n_{\max})$ with window sharing, where $O(n_{\max})$ is the size
% of the largest window. Overall, this paper makes the following contributions:
% \begin{itemize}
%   \item Sub-$O(\log n)$ out-of-order sliding window aggregation.
%   \item Finger B-trees with position-aware partial aggregates.
%   \item Query sharing support for aggregates on finger B-trees.
% \end{itemize}

We hope \dsName{} will be used to make streaming applications less
resource-hungry and more responsive for out-of-order streams.

%%% Local Variables:
%%% mode: latex
%%% TeX-master: "paper"
%%% End:

\section{Problem Statement: OoO SWAG}
\label{sec:problem}

This section states the problem addressed in this paper more formally.
Consider a data stream where each value carries a logical time in the form of a
timestamp. Throughout, we denote a timestamped value as $\tv{t}{v}$. For
example, $\tv{17}{4}$ is the value $4$ at logical time~$17$. The
examples in this paper use natural numbers for timestamps, but our
algorithms do not depend on any properties of the natural numbers
besides being totally ordered. For instance, our algorithms work
just as well with date/time representations or with real numbers.

It is intuitive to assume that values in such a stream arrive in nondecreasing
order of time (in order). However, due to clock drift and disparate latency in
computation and communication, among other factors, values in a stream often
arrive in a different order than their timestamps. Such a stream is said to have
\emph{out-of-order} (OoO) arrivals---there exists a later-arriving value that
has an earlier logical time than a previously-arrived value.

Our goal in this paper is to maintain the aggregate value of a time-ordered
sliding window in the face of out-of-order arrivals. To motivate our formulation
below, consider the following example, which maintains the \textsf{max}
and the \textsf{maxcount}, i.e., the number of times the \textsf{max} occurs in
the sliding window.

\vspace*{1mm}
$\tv{17}{4},\tv{19}{3},\tv{20}{0},\tv{21}{4}$\hfill$\textsf{max}\;4,\textsf{maxcount}\;2$
\vspace*{1mm}

Initially, the values \mbox{$4,3,0,4$} arrive in the same order as their
associated timestamps \mbox{$17,19,20,21$}. The maximum value is $4$, and
\textsf{maxcount} is $2$ because $4$ occurs twice. When stream values arrive in
order, they are simply appended. For instance, when $\tv{22}{4}$ arrives, it is
inserted at the end:

\vspace*{1mm}
$\tv{17}{4},\tv{19}{3},\tv{20}{0},\tv{21}{4},\hlchange{\tv{22}{4}}$\hfill$\textsf{max}\;4,\textsf{maxcount}\;3$
\vspace*{1mm}

However, when values arrive out-of-order, they must be inserted into the
appropriate spots to keep the sliding window time-ordered. For instance, when
$\tv{18}{5}$ arrives, it is inserted between timestamps $17$ and $19$:

\vspace*{1mm}
$\tv{17}{4},\hlchange{\tv{18}{5}},\tv{19}{3},\tv{20}{0},\tv{21}{4},\tv{22}{4}$\hfill$\textsf{max}\;5,\textsf{maxcount}\;1$
\vspace*{1mm}

As for eviction, stream values are usually removed from a window in order, for
instance, evicting $\tv{17}{4}$ from the front:

\vspace*{1mm}
$\hlcrossoff{\tv{17}{4}},\tv{18}{5},\tv{19}{3},\tv{20}{0},\tv{21}{4},\tv{22}{4}$\hfill$\textsf{max}\;5,\textsf{maxcount}\;1$
\vspace*{1mm}

Notice that, in general, eviction cannot always be accomplished by simply
inverting the aggregation value. For instance, evicting $\tv{18}{5}$ cannot be
done by ``subtracting off'' the value $5$ from the current aggregation value.
The algorithm needs to efficiently discover the new \textsf{max}~4 and
\textsf{maxcount}~2:

\vspace*{1mm}
$\hlcrossoff{\tv{18}{5}},\tv{19}{3},\tv{20}{0},\tv{21}{4},\tv{22}{4}$\hfill$\textsf{max}\;4,\textsf{maxcount}\;2$
\vspace*{1mm}

\medskip

% ---------------------------------------
\noindent{}\textbf{Monoids.} There are other streaming aggregations
besides \textsf{max} and \textsf{maxcount}. Monoids capture a large class of
commonly used aggregations~\cite{boykin_et_al_2014,tangwongsan_et_al_2015}. A
\emph{monoid} is a triple $\mathcal{M} = (S, \otimes, \onef{})$, where $\otimes:
S \times S \to S$ is a binary associative operator on $S$, with $\onef{}$ being
its identity element. Notice that $\otimes$ only needs to be associative; it
does not need not be commutative or invertible. For example, to express
\textsf{max} and \textsf{maxcount} as a monoid, if $m$ and $c$ are the max and
maxcount, then

\begin{equation*}
  \textstyle
  \langle m_1,c_1\rangle\otimes_{\mathsf{max},\mathsf{maxcount}}\langle m_2,c_2\rangle
  = \left\{\begin{array}{l@{\;}l}
             \langle m_1,c_1\rangle     & \mathrm{if}\; m_1> m_2\\
             \langle m_2,c_2\rangle     & \mathrm{if}\; m_1< m_2\\
             \langle m_1,c_1+c_2\rangle & \mathrm{if}\; m_1= m_2
           \end{array}\right.
\end{equation*}

Since $\otimes$ is associative, no parentheses are needed for repeated
application. When the context is clear, we even omit $\otimes$, for example,
writing \textsf{qstu} for
\mbox{$\textsf{q}\otimes\textsf{s}\otimes\textsf{t}\otimes\textsf{u}$}. This
concise notation is borrowed from the mathematicians' convention of omitting
explicit multiplication operators.

\medskip
\noindent\textbf{OoO SWAG.} This paper is concerned with maintaining an aggregation on a
time-ordered sliding window where the aggregation operator can be expressed as a
monoid. This can be formulated as an abstract data type (ADT) as follows:
\begin{definition}
  Let $(\otimes, \onef{})$ be a binary operator operator from a monoid and its
  identity. The \emph{out-of-order sliding-window aggregation} (OoO SWAG) ADT is
  to maintain a time-ordered sliding window $\tv{t_1}{v_1}, \dots,
  \tv{t_n}{v_n}$, $t_i < t_{i+1}$, supporting the following operations:
\begin{itemize}[label=---, topsep=2pt,leftmargin=1.5\parindent]
\item \lstinline{insert($t$ : Time, $v$ : Agg)} checks whether $t$ is already in
  the window, i.e., whether there is an $i$ such that \mbox{$t=t_i$}. If so, it
  replaces $\tv{t_i}{v_i}$ by \mbox{$\tv{t_i}{v_i\otimes v}$}. Otherwise, it
  inserts \mbox{$\tv{t}{v}$} into the window at the appropriate location.
\item \lstinline{evict($t$ : Time)} checks whether $t$ is in the window, i.e.,
  whether there is an $i$ such that \mbox{$t=t_i$}. If so, it removes
  \mbox{$\tv{t_i}{v_i}$} from the window. Otherwise, it does nothing.
\item \lstinline{query() : Agg} combines the values in time order using the
  $\otimes$ operator. In other words, it returns \mbox{$v_1\otimes\ldots\otimes
    v_n$} if the window is non-empty, or $\onef{}$ if empty.
\end{itemize}
\end{definition}

\noindent\textbf{Lower Bound.}
How fast can OoO SWAG operations be supported? For in-order streams, the SWAG
operations can be handled in $O(1)$ time per
operation~\cite{tangwongsan_hirzel_schneider_2017,shein_chrysanthis_labrinidis_2017}.
But the problem becomes more difficult when the stream has out-of-order
arrivals. We prove in this paper that to handle out-of-order distance up to $d$,
the amortized cost of a OoO SWAG operation in the worst case must be at least
$\Omega(\log d)$.

\begin{theorem}
  \label{thm:swag-lb}
  Let $m, d \in \mathbb{Z}$ be given such that $m \geq 1$ and $0 \leq d \leq m$.
  For any OoO SWAG algorithm, there exists a sequence of $3m$ operations, each
  with out-of-order distance at most $d$, for which the algorithm requires a total
  of at least $\Omega(m \log (1+d))$ time.
\end{theorem}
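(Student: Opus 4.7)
My plan is to reduce sorting to a short block of OoO SWAG operations and then sum the lower bound across $\Theta(m/d)$ disjoint blocks. I would instantiate the monoid as the free monoid of finite strings over a large enough alphabet under concatenation, so that a query returns the concatenation of all window values in timestamp order. Because concatenation is not commutative, any two distinct permutations of $d$ distinct singleton-string values yield different query results; hence the internal state of any correct algorithm immediately before a query must distinguish all $d!$ orderings of the inserted timestamps.

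Next I would describe the adversarial input. Partition the $3m$ operations into $R = \lceil 3m/(2d+1)\rceil = \Theta(m/d)$ blocks. In each block, insert $d$ timestamped values at the tail of the current window, feeding them in an adversarially chosen permutation of $d$ consecutive fresh timestamps (the worst permutation for the algorithm given its current state). Because the block only extends the window by at most $d$ elements, the $k$th insert lands at a position whose distance to the right end is at most $k-1\le d-1$. Then perform one query, followed by $d$ evicts that remove the newly inserted values from the tail (each evict has out-of-order distance $0$). This fits within the $3m$-operation budget and the $d$ out-of-order constraint, and restores the configuration at the start of the next block, making the blocks independent.

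The per-block lower bound is the core step. Working in the comparison-based model on timestamps---the natural model, since the problem statement only assumes timestamps form a totally ordered set, and the monoid values are treated as opaque objects accessible only via $\otimes$---any correct algorithm must gather $\Omega(\log d!) = \Omega(d\log d)$ bits about the inserted permutation to produce the correct query output. An adversary argument via a decision tree over timestamp comparisons then forces $\Omega(d\log d)$ comparisons per block, and hence that much time. Summing over $R$ blocks yields $\Omega((m/d)\cdot d\log d) = \Omega(m\log d)$. Combined with the trivial $\Omega(m)$ cost of processing $\Theta(m)$ operations at all, this gives the claimed $\Omega(m\log(1+d))$ bound, handling the small-$d$ regimes $d\in\{0,1\}$ (where $\log(1+d)=O(1)$) automatically.

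The main obstacle I anticipate is justifying the comparison-only access to timestamps rigorously, since in principle an algorithm could try to exploit the bit representation of integer timestamps for an integer-sort style speedup. This is handled by the universality of the timestamp domain: the algorithm must work for timestamps drawn from an arbitrary totally ordered type (naturals, reals, or dates, as the introduction notes), which leaves comparison as the only portable primitive; formally, one plants the timestamps as order-preserving images of an unknown injection into a rich ordered universe, so that any two sequences differing only by a relabeling that preserves all comparisons performed by the algorithm are indistinguishable to it.
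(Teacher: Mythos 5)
Your proof is correct, but it takes a genuinely different route from the paper's. The paper gives a single reduction: it instantiates the monoid with the left projection $x \otimes y = x$, inserts all $m$ timestamped values with their timestamps permuted (OoO distance $\le d$), then performs $m$ query-evict rounds to extract the sorted order, so that one run of the OoO SWAG sorts one permutation from the class $\mathcal{G}_d$ of permutations with bounded discrepancy. The information-theoretic step is then a bespoke counting lemma, $|\mathcal{G}_d(\{1,\dots,m\})| = d!\,(d+1)^{m-d}$, whose logarithm is $\Omega(m\log(1+d))$ after a two-case analysis on $d \lessgtr m/2$. You instead decompose the $3m$ operations into $\Theta(m/d)$ blocks of $d$ inserts, one query, and $d$ evicts that reset the window, use the free monoid so that a single query exposes the whole block's permutation, and invoke the textbook $\Omega(d\log d)$ sorting lower bound per block, summing and patching in the trivial $\Omega(m)$ bound for $d\in\{0,1\}$. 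Each approach buys something: the paper's single-shot counting avoids block bookkeeping and the separate small-$d$ case, and it needs only one query per extracted element rather than a concatenation-revealing monoid; your decomposition needs only the standard sorting lower bound rather than a specialized count of bounded-discrepancy permutations, and---a real virtue---it spells out the comparison-model and opaque-monoid assumptions that the paper hides behind the phrase ``a standard information-theoretic argument.'' Two small points worth tightening if you write this up: (i) you want $R = \lfloor 3m/(2d+1)\rfloor$ rather than the ceiling so you stay within the $3m$-operation budget (pad with no-ops), and (ii) it is cleaner to run the decision-tree argument once over the whole sequence, observing that the $(d!)^R$ joint inputs all produce distinct query-output sequences, than to argue per block that the adversary resets to a worst case; the global version sidesteps any worry about state carried between blocks.
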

The proof, which appears in Appendix~\ref{sec:appdx-lb}, shows this in two
steps. First, it establishes a sorting lower bound for permutations on
$m$ elements with out-of-order distance at most~$d$. Second, it gives a
reduction proving that maintaining
OoO SWAG is no easier than sorting such permutations.

\medskip

\noindent\textbf{Orthogonal Techniques.}
OoO SWAG operations are designed to work well with other stream aggregation
techniques.

The \lstinline{insert($t,v$)} operation supports the case where $t$ is already
in the window, so it works with pre-aggregation schemes such as window
panes~\cite{li_et_al_2005}, paired
windows~\cite{krishnamurthy_wu_franklin_2006}, cutty
windows~\cite{carbone_et_al_2016}, or Scotty~\cite{traub_et_al_2018}. For
instance, for a 5-hour sliding window that advances in 1-minute increments, the
logical times can be rounded to minutes, leading to more cases where $t$ is
already in the window.

The \lstinline{evict($t$)} operation accommodates the case where $t$
is not the oldest time in the window, so it works with streaming
systems that use retractions
\cite{abadi_et_al_2005,akidau_et_al_2013,akidau_et_al_2015,barga_et_al_2007,brito_et_al_2008,chandramouli_goldstein_maier_2010,li_et_al_2008,zaharia_et_al_2013}.

Neither \lstinline{insert($t,v$)} nor \lstinline{evict($t$)} are limited to
values of $t$ that are near either end of the window, so they work in the
general case, not just in cases where the out-of-order distance is bounded by
buffer sizes or low watermarks.

\medskip

\noindent{}\textbf{Query Sharing.}
As defined above, OoO SWAG does not support query sharing. However, query sharing for
different window sizes can be accommodated via a range query:
\begin{itemize}[label=---, topsep=2pt,leftmargin=1.5\parindent]

\item \lstinline{query($t_\texttt{from}$ : Time, $t_\texttt{to}$ : Time) : Agg }
  aggregates exactly the values from the window whose times fall between
  $t_\texttt{from}$ and~$t_\texttt{to}$. That is, it returns
  \mbox{$v_{i_\texttt{from}}\otimes\ldots\otimes v_{i_\texttt{to}}$}, where
  $i_\texttt{from}$ is the largest $i$ such that $t_\texttt{from}\le
  t_{i_\texttt{from}}$ and $i_\texttt{to}$ is the smallest $i$ such that
  $t_{i_\texttt{to}}\le t_\texttt{to}$. If the subrange contains no values, the
  operation returns~$\onef{}$.
\end{itemize}

% The problem statement of this paper is to implement the OoO SWAG
% abstract data type correctly and efficiently for arbitrary
% monoids~$(\otimes,\onef{})$.

In these terms, the problem statement of this paper is
\begin{quote}
  to design and implement efficient OoO SWAG operations as well as range-query support
  for arbitrary monoids~$(S,\otimes,\onef{})$.
\end{quote}

% In the following sections, we will show how to efficiently implement OoO SWAG
% with range-query support that works with an arbitrary
% monoid~$(\otimes,\onef{})$.

%%% Local Variables:
%%% mode: latex
%%% TeX-master: "paper"
%%% End:

%\section{OOO SWAG with B-Finger-Trees}
%\section{OoO SWAG: B-Finger-Tree}
\section{Finger B-Tree Aggregator (FiBA)}
\label{sec:btree}

This section introduces our algorithm gradually, giving intuition along the way.
It begins by describing a basic algorithm (Section~\ref{sec:classic}) that
utilizes a B-tree augmented with aggregates. This algorithm takes $O(\log n)$
time for each \lstinline{insert} or \lstinline{evict} operation. Reducing the
time complexity below $\log n$ requires further observations and ideas. This is
explored intuitively in Section~\ref{sec:intuition} with details fleshed out in
Section~\ref{sec:algorithm}.

\subsection{Basic Algorithm: Augmented B-Tree}
\label{sec:classic}

One way to implement the OoO SWAG is to start with a classic \mbox{B-tree} with
timestamps as keys and augment that tree with aggregates. This is a baseline
implementation, which will be built upon. Even though any balanced trees can, in
fact, be used, we chose the B-tree because it is well-studied and has
customizable fan-out degree, providing opportunities for experimentation.

There are many B-tree variations. The range of permissible
\emph{arity}, or fan-out degree of a node, is controlled by two
parameters \lstinline{MIN_ARITY} and \lstinline{MAX_ARITY}.
While \lstinline{MIN_ARITY} can be any integer
greater or equal to $2$, most B-tree variations require that
\lstinline{MAX_ARITY} be at least $2\cdot\mbox{\lstinline{MIN_ARITY}} - 1$.
Hence, if $a(y)$---or simply $a$ when the context is clear---denotes the arity
of a node $y$, then a B-tree obeys the following \emph{size invariants}:
\begin{itemize}[leftmargin=1em]
\item For a non-root node $y$, \lstinline{MIN_ARITY$\,\le a(y)$}; for the root,
  $2\le a$.
  \item For all nodes, \lstinline{$a\le\,$MAX_ARITY}.
  \item All nodes have $a-1$ timestamps and values
    $\tv{t_0}{v_0},\ldots,\tv{t_{a-2}}{v_{a-2}}$.
  \item All non-leaf nodes have $a$ child pointers $z_0,\ldots,z_{a-1}$.
\end{itemize}

\begin{figure}[tb]
  \centerline{\includegraphics[scale=0.41]{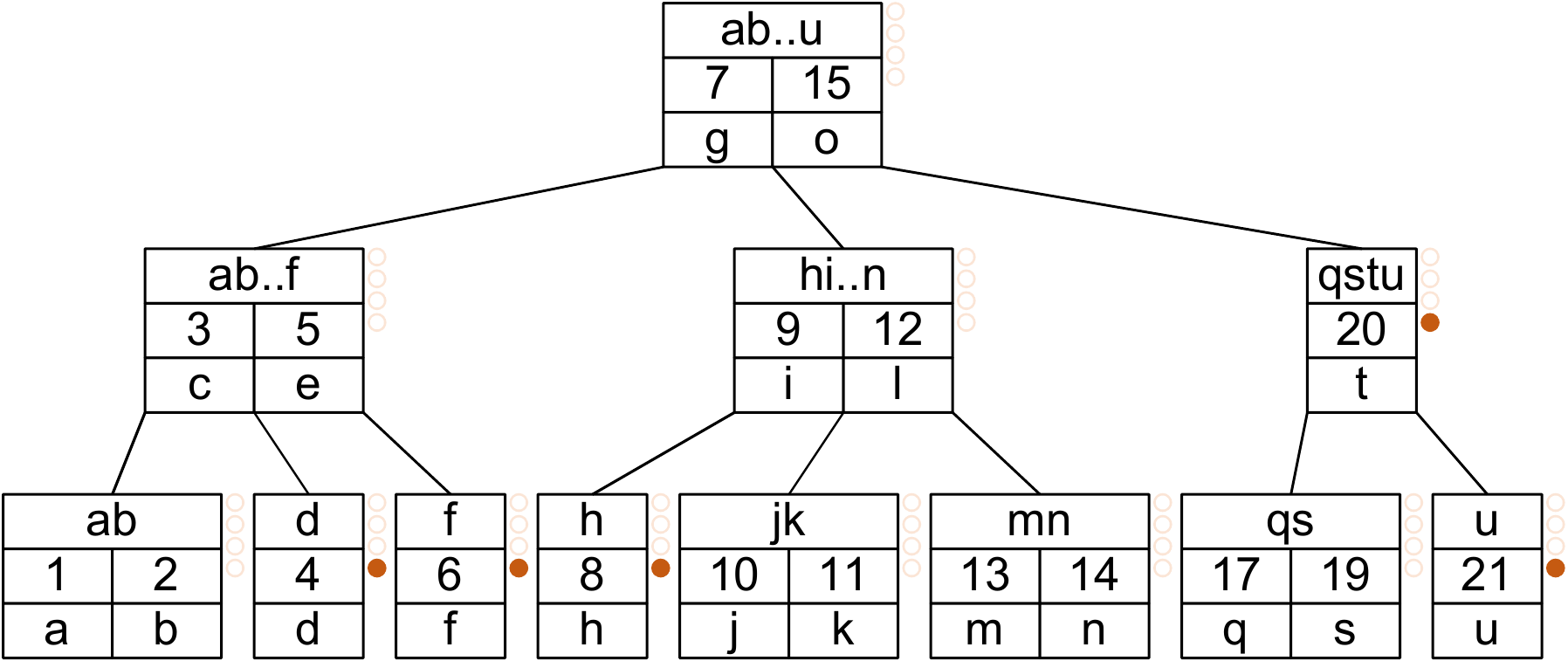}}
  \vspace*{-3mm}
  \caption{\label{fig_classic}Classic B-tree augmented with aggregates.}
\end{figure}

Figure~\ref{fig_classic} illustrates a B-tree augmented with aggregates. In this
example, \lstinline{MIN_ARITY} is~2 and \lstinline{MAX_ARITY} is
$2\cdot\mbox{\lstinline{MIN_ARITY}} = 4$. Consequently, all nodes have 1--3
timestamps and values, and non-leaf nodes have 2--4 children. Each node in the
tree contains an aggregate, an array of timestamps and values, and optionally
pointers to the children. For instance, the root node contains the aggregate
\smallsf{ab..u}, the values and their timestamps
\mbox{$\mathsf{\tv{7}{g},\tv{15}{o}}$}, and pointers to three children. Because
we use timestamps as keys, the entries are time-ordered, both within a node and
across nodes, with timestamps stored in a parent node separating and limiting the time in
the subtrees it points to. The tree is always height-balanced. Additionally, all
leaves are at the same depth.

\emph{What aggregate is kept in a node?} For each node~$y$, the aggregate
$\Pi_\uparrow(y)$ stored at that node obeys the up-aggre\-gation invariant:
\[\Pi_\uparrow(y)=\Pi_\uparrow(z_0)\otimes v_0\otimes\Pi_\uparrow(z_1)\otimes\ldots\otimes v_{a-2}\otimes\Pi_\uparrow(z_{a-1})\]
By a standard inductive argument, $\Pi_\uparrow(y)$ is the aggregation of the
values inside the subtree rooted at $y$. This means the \lstinline{query()}
operation can simply return the aggregation value at the root
(\lstinline{root.agg}).

The operations \lstinline{insert($t,v$)} or \lstinline{evict($t$)} first search
for the node where $t$ belongs. Second, they locally insert or evict at that
node, updating the aggregate stored at that node. Then, they rebalance the tree
starting at that node and going up towards the root as necessary to fix any size
invariant violations, while also repairing aggregate values along the way.
Finally, they repair any remaining aggregate values not repaired during
rebalancing, starting above the node where rebalancing topped out and visiting
all ancestors up to the root.

\begin{theorem}\label{trm_classic_correctness}
  In a classic B-tree augmented with aggregates, if it stores
  \mbox{$\tv{t_1}{v_1},\ldots,\tv{t_n}{v_n}$}, the operation \lstinline{query()}
  returns \mbox{$v_1\otimes\ldots\otimes v_n$}.
  % In a classic B-tree containing \mbox{$\tv{t_1}{v_1},\ldots,\tv{t_n}{v_n}$} In
  % a classic B-tree with aggregates that contains
  % \mbox{$\tv{t_1}{v_1},\ldots,\tv{t_n}{v_n}$}, operation \lstinline{query()}
  % returns \mbox{$v_1\otimes\ldots\otimes v_n$}.
\end{theorem}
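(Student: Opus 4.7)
The plan is to prove the statement by structural induction on the B-tree, showing the stronger claim that for every node $y$, the stored aggregate $\Pi_\uparrow(y)$ equals the $\otimes$-product, in time order, of all values contained in the subtree rooted at $y$. Applying this to the root node, combined with the fact that \lstinline{query()} returns \lstinline{root.agg}, yields the theorem. Implicit in the proof is that the insert/evict/rebalance procedures preserve both the up-aggregation invariant and the B-tree's time-ordering property; I would note this explicitly but appeal to the maintenance argument sketched just before the theorem rather than re-derive it.

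For the base case, a leaf node $y$ has no children, so the up-aggregation invariant degenerates to $\Pi_\uparrow(y) = v_0 \otimes v_1 \otimes \cdots \otimes v_{a-2}$. Since a B-tree node stores its entries in increasing timestamp order, this is exactly the time-ordered aggregation of the values in the subtree rooted at $y$. For the inductive step, assume the claim holds for each child $z_0, \ldots, z_{a-1}$ of an internal node $y$, and let $V_i$ denote the time-ordered sequence of values in the subtree rooted at $z_i$. Writing $\Pi(V_i)$ for its aggregation, the inductive hypothesis gives $\Pi_\uparrow(z_i) = \Pi(V_i)$. Substituting into the up-aggregation invariant yields
\[
  \Pi_\uparrow(y) = \Pi(V_0) \otimes v_0 \otimes \Pi(V_1) \otimes v_1 \otimes \cdots \otimes v_{a-2} \otimes \Pi(V_{a-1}).
\]
By the B-tree's separator property, every timestamp in $V_i$ is smaller than $t_i$, which is in turn smaller than every timestamp in $V_{i+1}$. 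Hence the concatenated sequence $V_0 \cdot \tv{t_0}{v_0} \cdot V_1 \cdot \tv{t_1}{v_1} \cdots \tv{t_{a-2}}{v_{a-2}} \cdot V_{a-1}$ is exactly the time-ordered listing of all timestamped values in the subtree rooted at $y$. Associativity of $\otimes$ then lets us reparenthesize the right-hand side freely, so $\Pi_\uparrow(y)$ equals the $\otimes$-product of this concatenation, completing the induction.

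Evaluating the invariant at the root then gives $\Pi_\uparrow(\text{root}) = v_1 \otimes \cdots \otimes v_n$, which is what \lstinline{query()} returns. The main subtlety I expect is the reliance on associativity: the up-aggregation invariant already fixes a specific parenthesization, and one must be careful that the ``time-ordered product'' $v_1 \otimes \cdots \otimes v_n$ is well-defined only because $\otimes$ is associative; commutativity and invertibility are never invoked, which matches the generality claimed for the algorithm. Everything else is routine B-tree bookkeeping.
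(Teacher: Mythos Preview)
Your proposal is correct and follows essentially the same approach as the paper: the paper's one-line proof asserts that the aggregation invariant holds after each operation and that $\Pi_\uparrow(\text{root})$ therefore equals $v_1\otimes\cdots\otimes v_n$, having earlier remarked that this follows ``by a standard inductive argument.'' You have simply written out that standard structural induction in full, which is a faithful and more detailed rendering of the same idea.
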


\begin{proof}
  After each operation, all nodes obey the aggregation invariant, and
  \lstinline{$\Pi_\uparrow($root$)$} contains
  \mbox{$v_1\otimes\ldots\otimes v_n$}.
\end{proof}

\begin{theorem}\label{trm_classic_complexity}
  In a classic B-tree augmented with aggregates, the operation
  \lstinline{query()} costs at most~$O(1)$ time and operations
  \lstinline{insert($t,v$)} or
  \lstinline{evict($t$)} take at most $O(\log n)$ time.
\end{theorem}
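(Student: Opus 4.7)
The plan is to handle the two claims separately, starting from the easy \lstinline{query()} bound and then dissecting the \lstinline{insert}/\lstinline{evict} operations phase by phase. For \lstinline{query()}, the up-aggregation invariant guarantees that $\Pi_\uparrow(\mathrm{root})$ already equals the aggregation of the entire window, so returning \lstinline{root.agg} takes a single memory read, giving the $O(1)$ bound immediately.

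For the $O(\log n)$ bound on \lstinline{insert} and \lstinline{evict}, the first thing I would establish is the standard height bound for the B-tree. Since every non-root node has arity at least \lstinline{MIN_ARITY}$\ge 2$, a B-tree holding $n$ timestamped values has height $h = O(\log_{\texttt{MIN\_ARITY}} n) = O(\log n)$. I would then walk through the four phases of the operation and charge $O(1)$ work to each visited node, using that $a \le \texttt{MAX\_ARITY}$ is a constant throughout.

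The four phases and their charges are as follows. (i) The top-down search for the node owning timestamp $t$ visits one node per level, and at each node a constant number of key comparisons suffice to pick the correct child pointer; this contributes $O(h) = O(\log n)$. (ii) The local insertion or deletion of a single entry in the found node, together with recomputing that node's aggregate from its at most $\texttt{MAX\_ARITY}-1$ values and $\texttt{MAX\_ARITY}$ child aggregates, is $O(1)$. (iii) The rebalancing walk goes upward one level at a time; at each level it performs at most one split, merge, or rotation, each of which touches $O(1)$ nodes and recomputes $O(1)$ aggregates, for a per-level cost of $O(1)$ and a total of $O(\log n)$. (iv) Once rebalancing stops, the remaining aggregate repairs continue up to the root; each such repair at a node is just one application of the up-aggregation formula over a constant-sized node, again $O(1)$ per level.

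Summing the four contributions gives $O(\log n)$ for \lstinline{insert} and \lstinline{evict}, as claimed. There is no genuine obstacle here beyond bookkeeping; the only thing worth double-checking is that the rebalancing cascade in this B-tree variant indeed touches at most one node per level on its way up and that every modified node's aggregate can be recomputed locally from information already stored in that node and its children. Both facts follow directly from the size invariants listed for the classic B-tree and the up-aggregation invariant used to define $\Pi_\uparrow$.
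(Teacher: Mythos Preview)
Your proposal is correct and follows essentially the same approach as the paper: treat the arity as a constant, observe that \lstinline{query()} touches only the root, and for \lstinline{insert}/\lstinline{evict} decompose into search, local update, rebalance, and aggregate repair, each of which does $O(1)$ work per tree level over a height-$O(\log n)$ tree. The paper's own proof is a terse five-line version of exactly this argument (it says ``at most two nodes per tree level'' rather than one, which is slightly more accurate for merges and rotations that touch a sibling, but your phrase ``touches $O(1)$ nodes'' already covers this).
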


\begin{proof}
  As is standard, we treat the arity of a node as bounded by a constant. The
  query operation and the local insert or evict visit only a single node. The
  search, rebalance, and repair visit at most two nodes per tree level. The work
  is thus bounded by the tree height, which is $O(\log n)$ since the tree is
  height-balanced~\cite{bayer_mccreight_1972,cormen_leiserson_rivest_1990,huddleston_mehlhorn_1982}.
  Hence, the total cost per operation is $O(\log n)$.
\end{proof}

\subsection{Breaking the $O(\log n)$ Barrier}
\label{sec:intuition}

The basic algorithm just described supports OoO SWAG operations in $O(\log n)$
time using an augmented classic B-tree. To improve upon the time complexity, we
now discuss the bottlenecks in the basic algorithm and outline a plan to resolve
them.

In the basic algorithm, the \lstinline{insert($t,v$)} and \lstinline{evict($t$)}
operations involve four steps: (1)~search for the node where $t$ belongs;
(2)~locally insert or evict; (3)~rebalance to repair size invariants; and
(4)~repair
remaining aggregation invariants. If one treats arity as constant, the local
insertion or eviction operation takes constant time, as does the
\lstinline{query()} operation. But each of the steps for search, rebalance, and
repair takes up to $O(\log n)$ time. Hence, these are the bottleneck steps and
will be improved upon as follows:
\begin{enumerate}[label=(\roman*)]
\item By maintaining ``fingers'' to the leftmost and rightmost leaves, we will
  reduce the search complexity to $O(\log d)$, where $d$ is the distance to the
  closer end of the sliding-window boundary. This means that in the FIFO or
  near-FIFO case, the search complexity will be constant.
\item By using an appropriate \lstinline{MAX_ARITY} and a somewhat lazy strategy
  for rebalancing, we will make sure that rebalance takes no more than constant
  in the amortized sense. This means that for any operation that affects the
  tree structure, the cost to restore the proper tree structure amounts to
  constant per operation, regardless of out-of-order distance.
\item By introducing position-dependent aggregates, we will ensure that repairs
  to the aggregate values are made only to nodes along the search path or
  involved in restructuring. This means that the repairs cost no more than the
  cost of search and rebalance.
\end{enumerate}

We combine the above ideas into a novel sub-$O(\log n)$ algorithm for OoO SWAG.
Below, we describe how these ideas will be implemented intuitively, leaving
detailed algorithms and proofs to Section~\ref{sec:algorithm}.

\medskip

\noindent{}\textbf{Sub-$O(\log n)$ Search.}
In classic B-trees, a search starts at the root and ends at the node being
searched, henceforth called~$y$. Often, $y$ is a leaf, so the search visits
$O(\log n)$ nodes. However, instead of starting at the root, one can start at
the left-most or right-most leaf in the tree. This requires pointers to the
left-most or right-most leaf, henceforth called the left and right
fingers~\cite{guibas_et_al_1977}. In addition, we keep a parent pointer at each
node. Hence, the search can start at the nearest finger, walk up to the nearest
common ancestor of the finger and~$y$, and walk down from there to~$y$. The
resulting algorithm runs in $O(\log d)$, where~$d$ is the distance from the
nearest end of the window--or more precisely, $d$ is the number of timed values
from~$y$ to the nearest end of the window.

\smallskip

\noindent\textbf{Sub-$O(\log n)$ Rebalance.}
Insertions and evictions can cause nodes to overflow or underflow, thus
violating the size invariants. There are two popular strategies that address
this: either before or after the fact. The before-the-fact strategy ensures that
ancestors of the affected node are not at risk of overflow or underflow by
preventive rebalancing, so that the arity~$a$ is at least one further away from
the threshold required by the size invariants
(e.g.,~\cite{cormen_leiserson_rivest_1990}). The after-the-fact strategy first
performs the local insert or evict step, then repairs any resulting overflow or
underflow to ensure the size invariants hold again by the end of the entire
insert or evict operation. We adopt the after-the-fact strategy, which has been
shown to take amortized constant time~\cite{huddleston_mehlhorn_1982} as long as
$\mbox{\lstinline{MAX_ARITY}} \geq 2\cdot{}\mbox{\lstinline{MIN_ARITY}}$. For
simplicity, we use $\mbox{\lstinline{MAX_ARITY}} =
2\cdot{}\mbox{\lstinline{MIN_ARITY}}$. The amortized cost is $O(1)$ as
rebalancing rarely goes all the way up the tree. The worst-case cost is $O(\log
n)$, bounded by the tree height.

\begin{figure}
  \centerline{\includegraphics[width=\columnwidth]{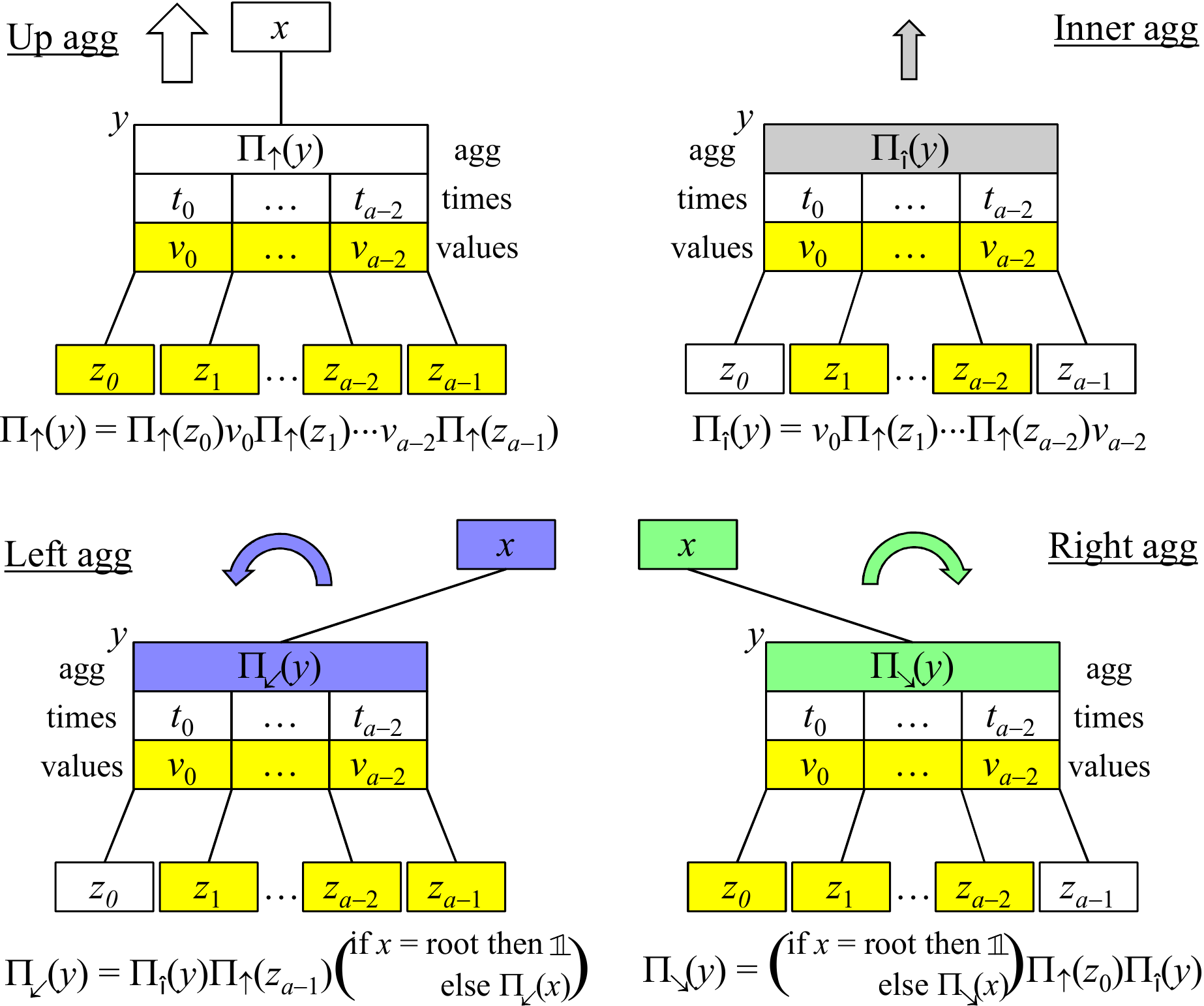}}
  \vspace*{-3mm}
  \caption{\label{fig_partials}Partial aggregates definitions.}
\end{figure}

\smallskip

\noindent{}\textbf{Sub-$O(\log n)$ Repair.}
The basic algorithm stores at each node~$y$ the up-aggregate $\Pi_\uparrow(y)$,
i.e., the partial aggregate of the subtree under~$y$. This is problematic,
because it means that an
insertion or eviction at a node~$z$, usually a leaf, affects the partial
aggregates stored in all ancestors of $z$---that is, the entire path up to the
root. To circumvent this issue, we need an arrangement of aggregates that can
be repaired by traversing to a finger, \emph{without} always traversing to the
root. For this, we make each node store the kind of partial aggregate suitable
for its position in the tree. Furthermore, because the root no longer contains
the aggregate of the whole tree, we will ensure that \lstinline{query()} can be
answered by combining partial aggregates at the left finger, the root, and the
right finger.

To meet these requirements, we define four kinds of partial aggregates in
Figure~\ref{fig_partials}. As illustrated in Figure~\ref{fig_finger}, they are
used in a B-tree according to the following \emph{aggregation invariants}:
\begin{itemize}[leftmargin=0em,label={},itemsep=2pt]
\item $\rhd$\textbf{{Non-spine nodes store the up-aggregate~$\Pi_\uparrow$.}}
  Such a node is neither a finger nor an ancestor of a finger. This aggregate
  must be repaired whenever the subtree below it changes.
  Figure~\ref{fig_finger}(A) shows nodes with up-aggregates in white, light
  blue, or light green. For example, the center child of the root contains the
  aggregate \smallsf{hijklmn}, comprising its entire subtree.

\item $\rhd$\textbf{{The root stores the inner
      aggregate~$\Pi_{\hat{\scriptscriptstyle |}}$.}}
  This aggregate is only affected by changes to the inner part of the tree, and
  not by changes below the left-most or right-most child of the root.
  Figure~\ref{fig_finger}(A) shows the inner parts of the tree in white and the
  root in gray, and the root stores the aggregate \smallsf{ghijklmno}.

\item $\rhd$\textbf{{Non-root nodes on the left spine store the left
      aggregate~$\Pi_\swarrow$.}} For a given node~$y$, the left aggregate
  encompasses all nodes under the left-most child of the root except for $y$'s
  left-most child~$z_0$. When a change occurs below the left-most child of the
  root, the only aggregates that need to be repaired are those on a traversal up
  to the left spine and then down to the left finger. Figure~\ref{fig_finger}(A)
  shows the left spine in dark blue and nodes affecting it in light blue. For
  example, the node in the middle of the left spine contains the aggregate
  \smallsf{cdef}, comprising the left subtree of the root except for the left
  finger.

\item $\rhd$\textbf{{Non-root nodes on the right spine store the right
      aggregate~$\Pi_\searrow$}}. This is symmetric to the left
  aggregate~$\Pi_\swarrow$. When a change occurs below the right-most child of
  the root, only aggregates on a traversal to the right finger are repaired.
  Figure~\ref{fig_finger}(A) shows the right spine in dark green and nodes
  affecting it in light green. For example, the node in the middle of the right
  spine contains the aggregate \smallsf{qst} of the right subtree of the root
  except for the right finger.
\end{itemize}

\begin{figure}
  \includegraphics[scale=0.41]{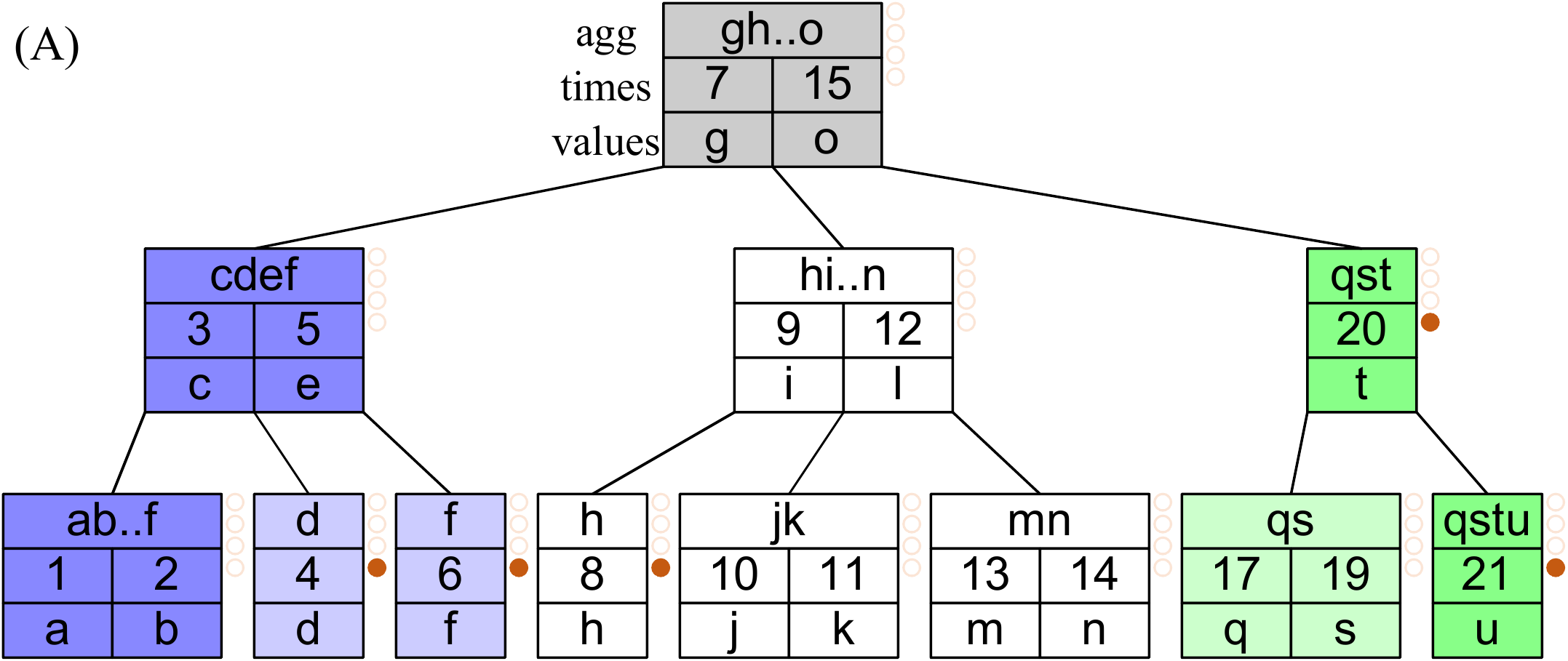}\\
  \noindent\mbox{\small Step A$\to$B, in-order insert \textsf{22:v}. Spent 0, refunded 1.}\\
  \includegraphics[scale=0.41]{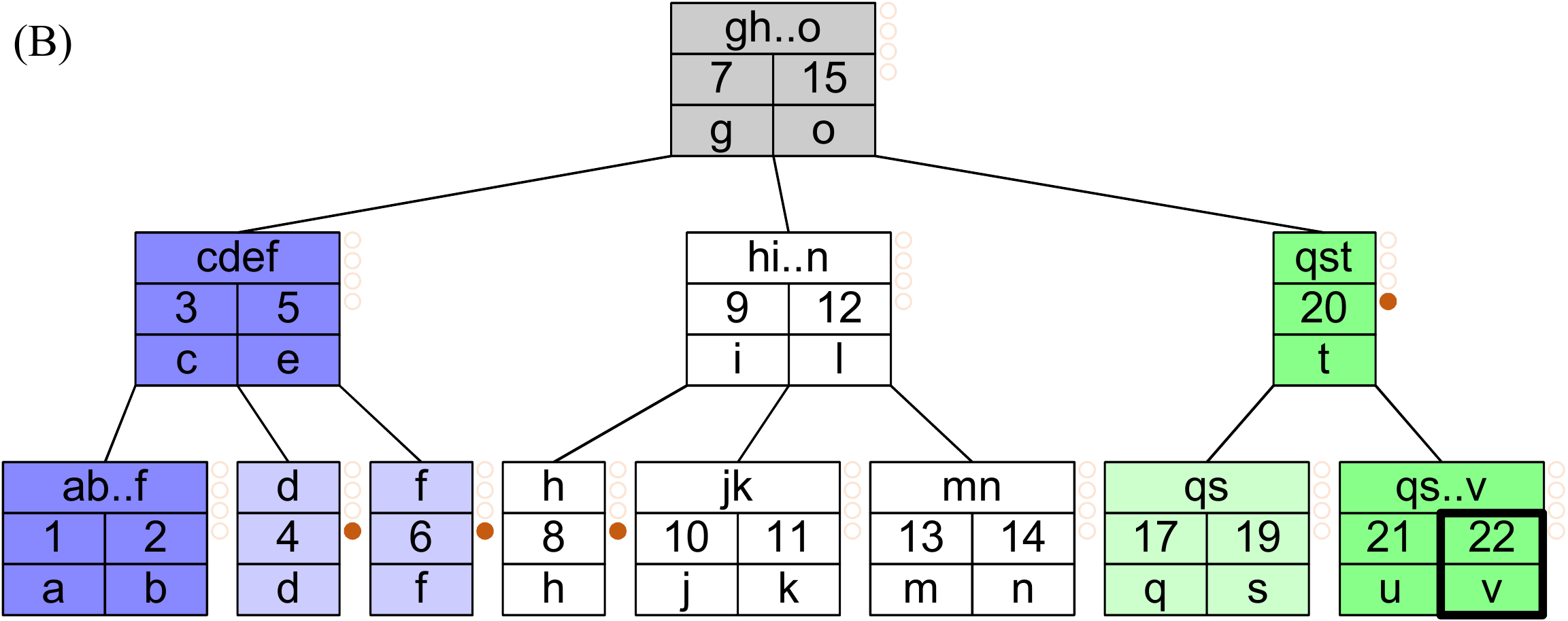}\\
  \noindent\mbox{\small Step B$\to$C, out-of-order insert \textsf{18:r}. Spent 0, billed 2.}\\
  \includegraphics[scale=0.41]{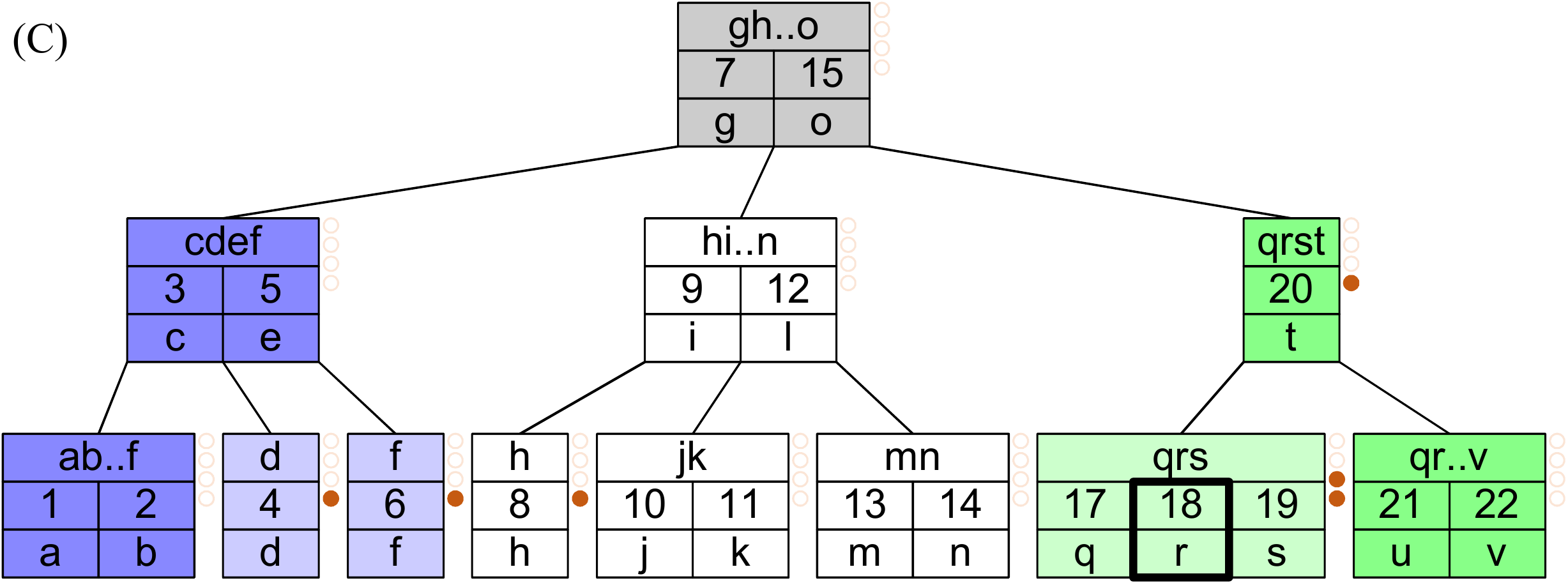}\\
  \noindent\mbox{\small Step C$\to$D, evict \textsf{1:a}. Spent 0, billed 1.}\\
  \includegraphics[scale=0.41]{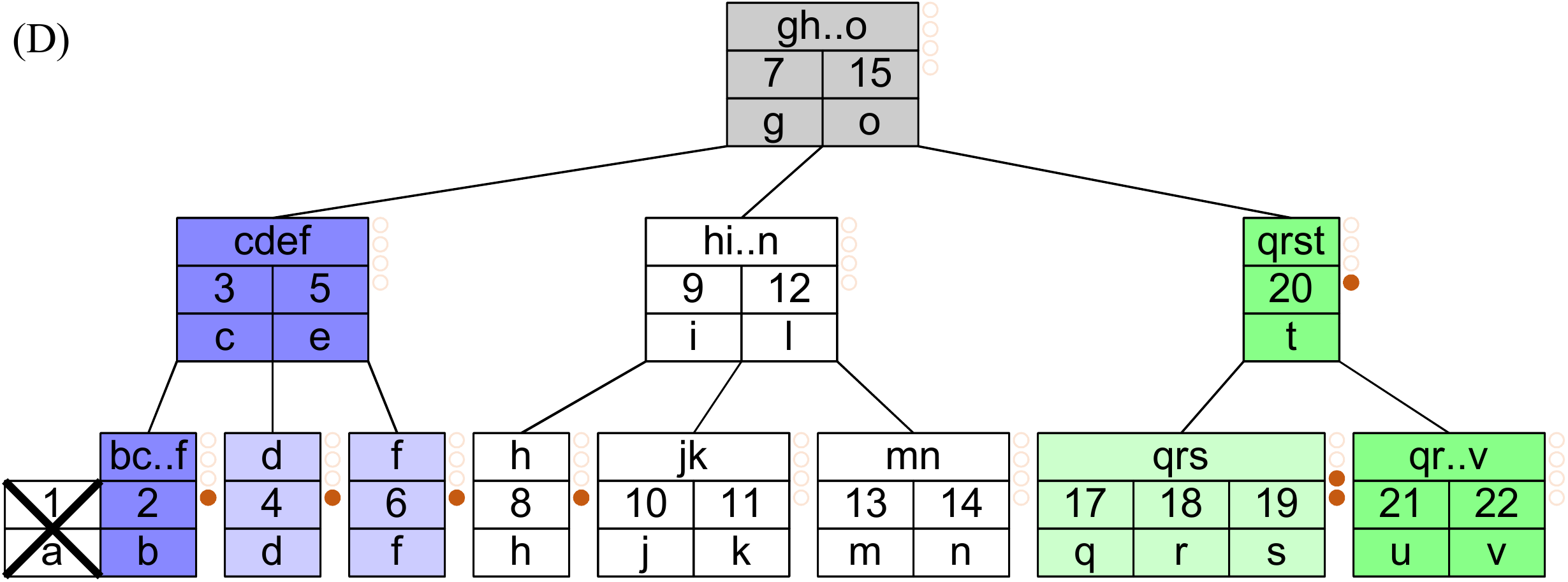}\\
  \noindent\mbox{\small Step D$\to$E, out-of-order insert \textsf{16:p}, split. Spent 1, refunded 1.}\\
  \includegraphics[scale=0.41]{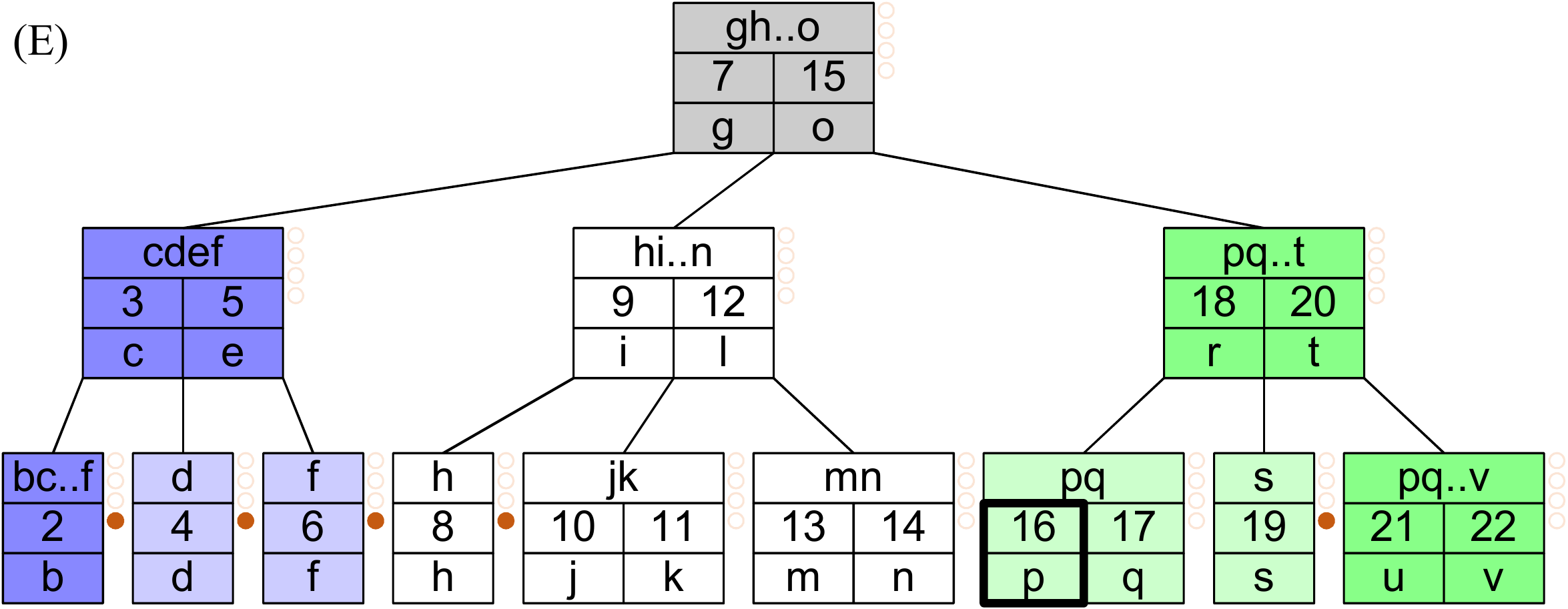}\\
  \noindent\mbox{\small Step E$\to$F, evict \textsf{2:b}, merge. Spent 1, billed 0.}\\
  \includegraphics[scale=0.41]{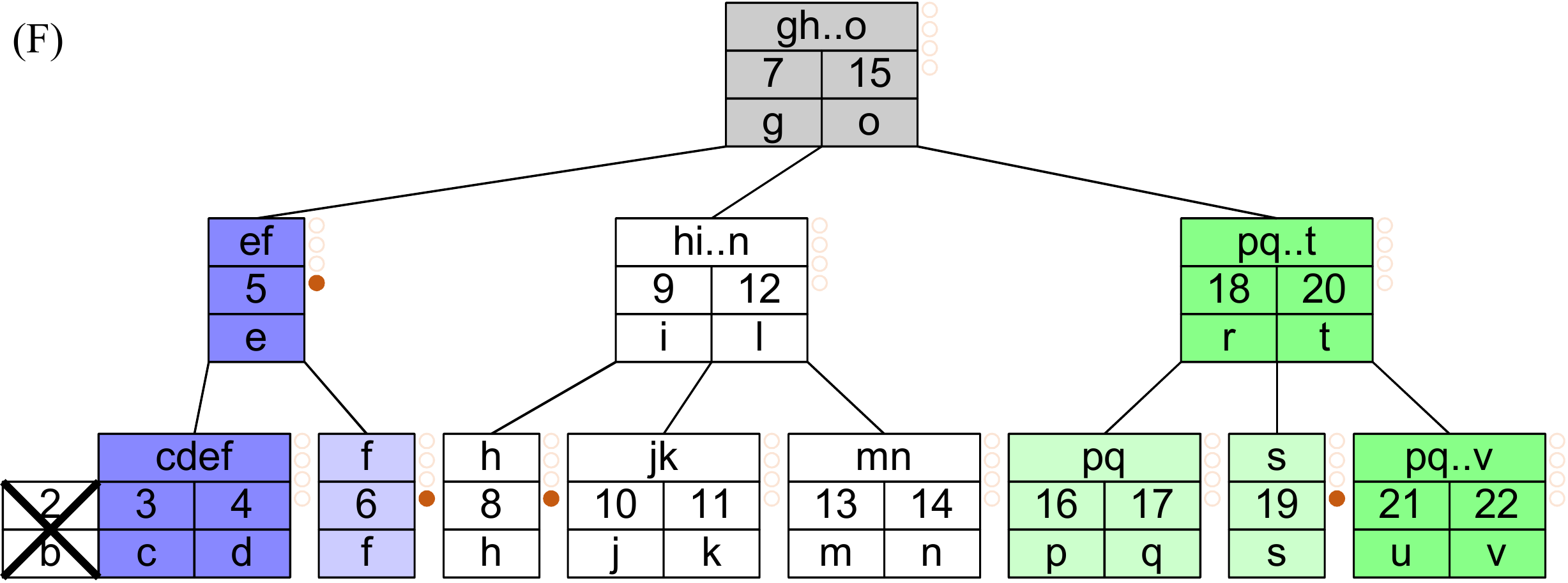}\\
  \vspace*{-3mm}
  \caption{\label{fig_finger}Finger B-tree with aggregates: example.}
\end{figure}

\begin{figure}
  \includegraphics[scale=0.41]{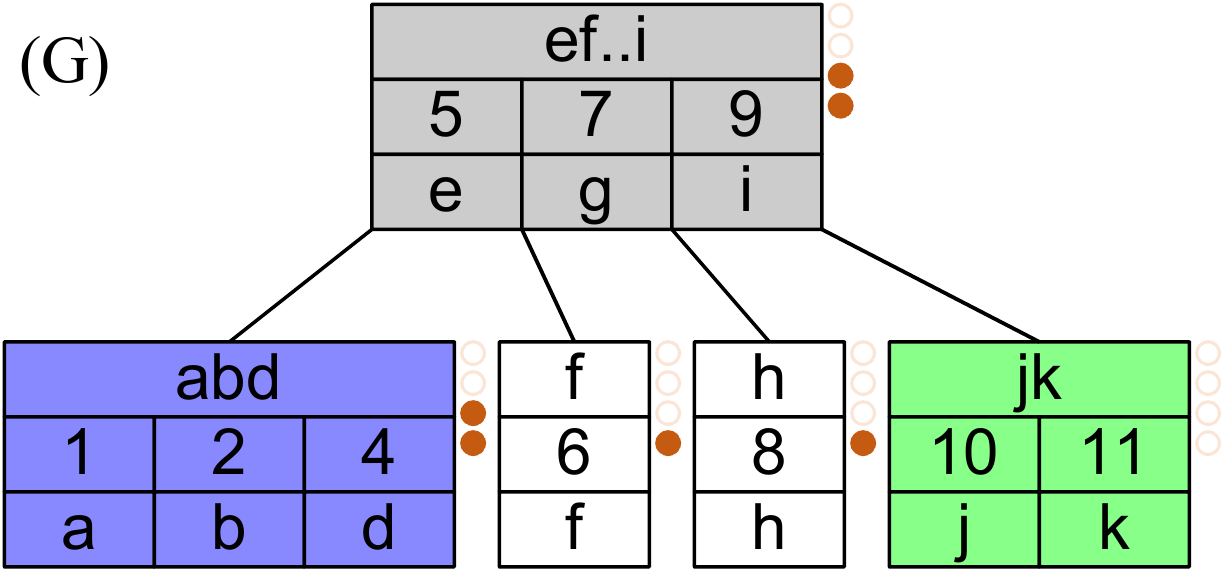}\\
  \noindent\mbox{\small Step G$\to$H, insert \textsf{3:c}, split, height increase and split.}\\
  \mbox{\small Spent 2, billed 0.}\\
  \includegraphics[scale=0.41]{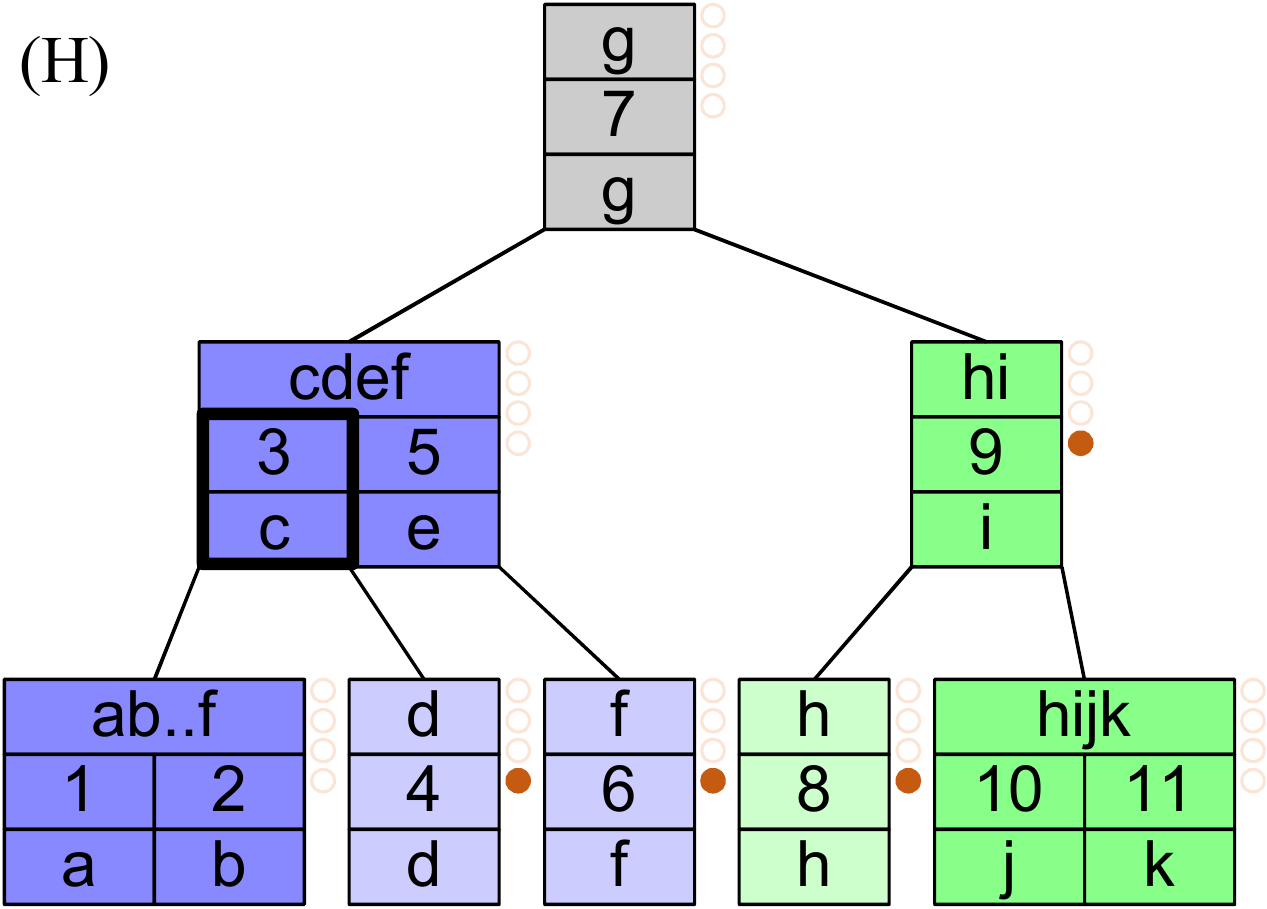}
  \vspace*{-3mm}
  \caption{\label{fig_inc_height}Finger B-tree height increase and split.}
  \vspace*{5mm}
  \includegraphics[scale=0.41]{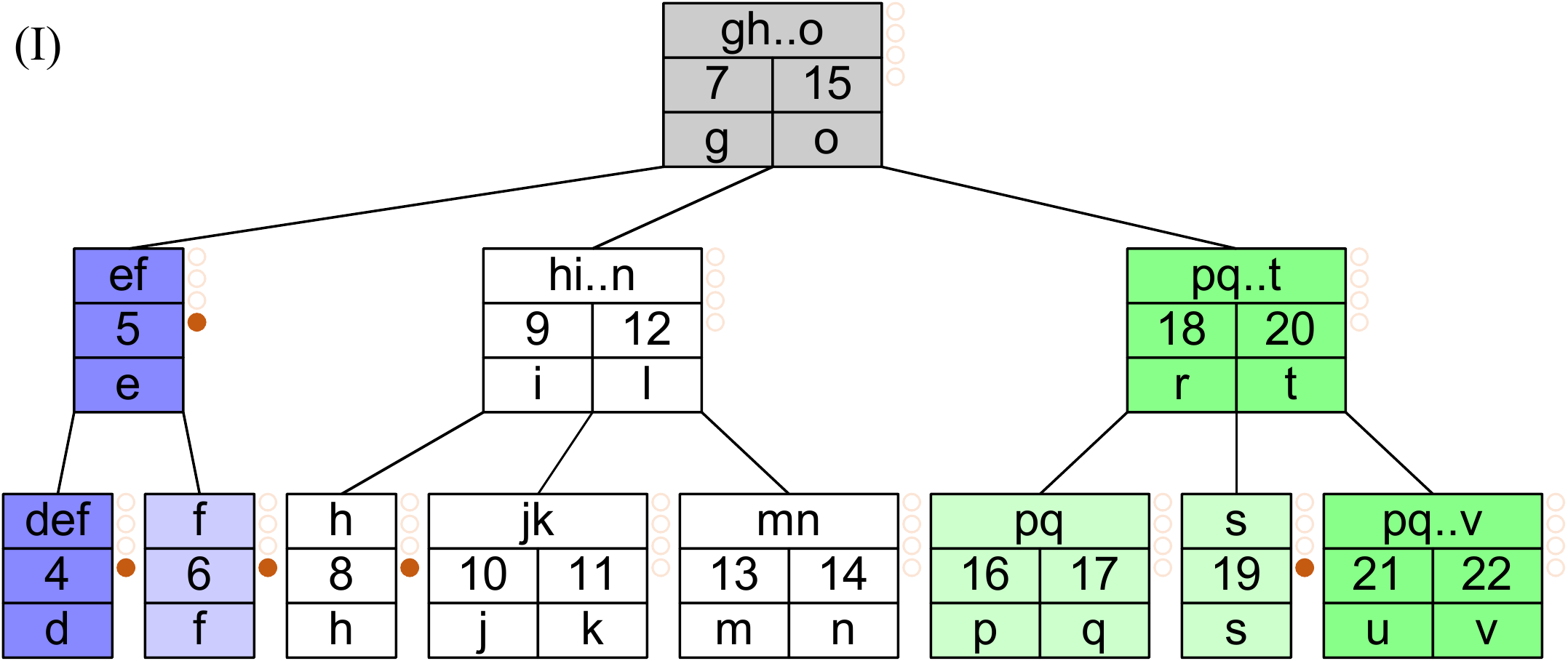}\\
  \noindent{\small Step I$\to$J, evict \textsf{4:d}, merge, move. Spent 2, billed 1.}\\
  \includegraphics[scale=0.41]{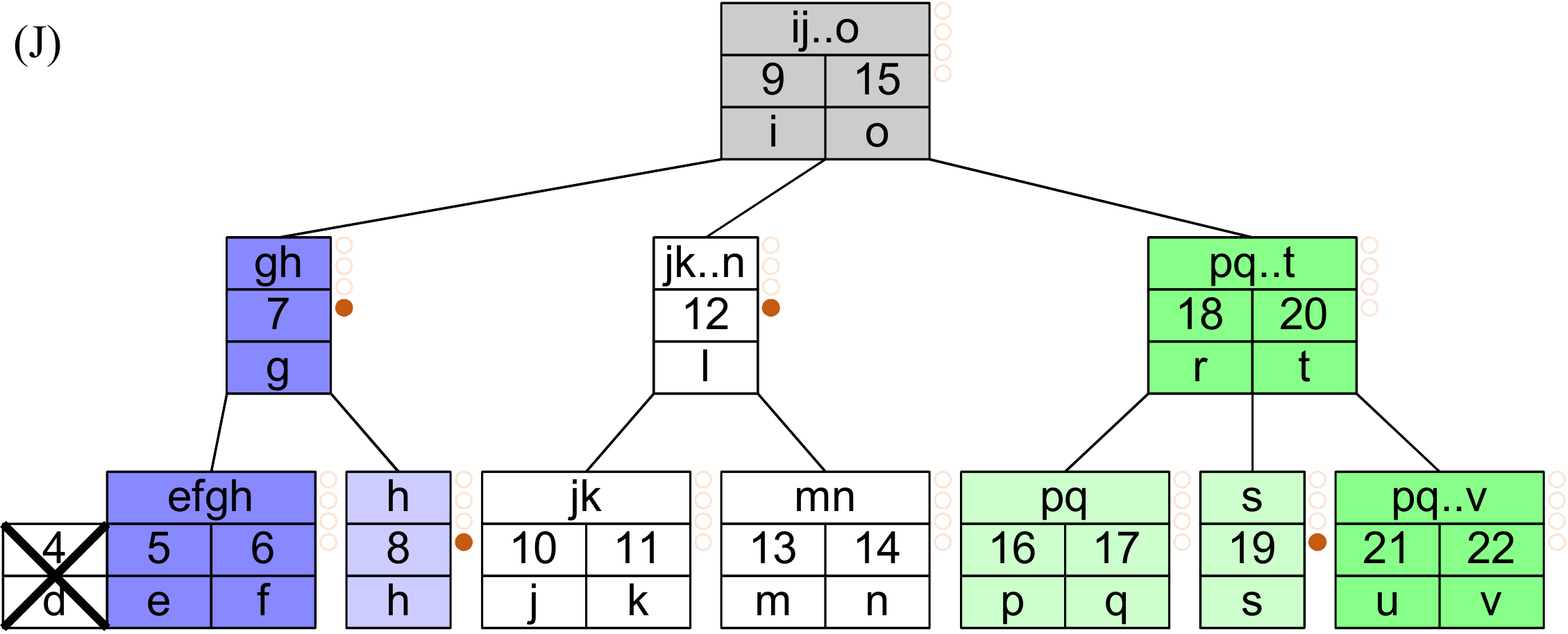}\\
  \vspace*{-3mm}
  \caption{\label{fig_move}Finger B-tree move.}
  \vspace*{5mm}
  \includegraphics[scale=0.41]{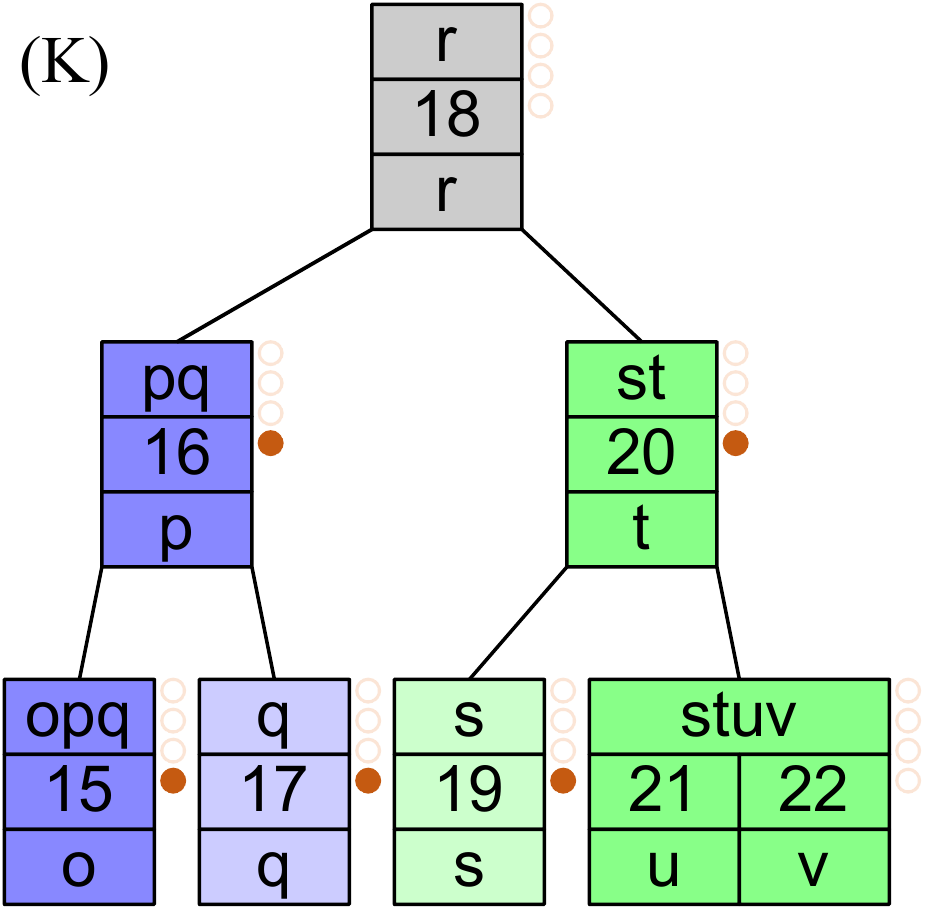}\\
  \noindent\mbox{\small Step K$\to$L, evict \textsf{15:o}, merge, merge and height decrease.}\\
  \noindent\mbox{\small Spent 2, refunded 2.}\\
  \includegraphics[scale=0.41]{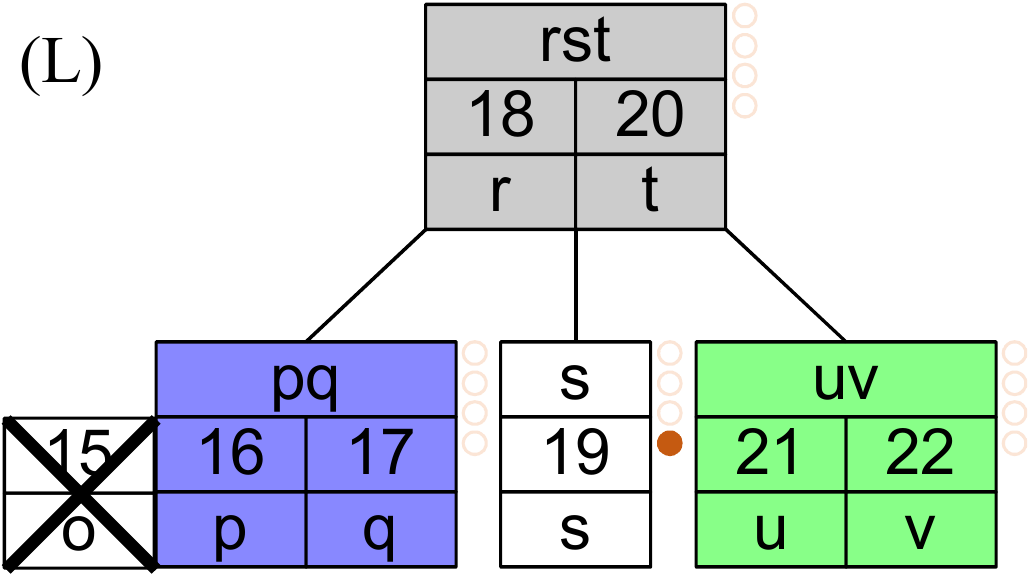}\\
  \vspace*{-3mm}
  \caption{\label{fig_dec_height}Finger B-tree merge and height decrease.}
\end{figure}

\subsection{Using Finger B-Trees}
\label{sec:algorithm}

This section describes an algorithm that implements the OoO SWAG using a finger
B-tree augmented with aggregates. It achieves sub-$O(\log n)$ time complexity by
maintaining the size invariants from Section~\ref{sec:classic} and the
aggregation invariants from Section~\ref{sec:intuition}.

The algorithmic complexity analysis will account for the cost of
split, merge, or move operations by counting \emph{coins}.
Specifically, the analysis counts the number of split, merge, or move
steps of an insert or evict operation as \emph{spent} coins.  Coins
can be imagined as being stored at tree nodes, so they can be used to
pay for split, merge, or move operations later.  Throughout this
paper, coins are visualized as little golden circles next to tree
nodes. Sometimes, coins must be added or removed from the outside to
make up the difference between spent coins and coins in the tree
before and after each step. We refer to these coins as being
\emph{billed} or \emph{refunded}. The key result of the proof will be
that billed coins never exceed~2 for any \lstinline{insert($t,v$)} or
\lstinline{evict($t$)}, hence rebalancing has amortized constant
time complexity.

Figures~\ref{fig_finger}--\ref{fig_dec_height} show concrete examples
covering all the interesting cases of the algorithm. Each state, for
instance~(A), shows a tree with aggregates and coins. Each step, for
instance~A$\to$B, shows an insert or evict, illustrating how it affects the
tree, its partial aggregates, and coins.
\begin{itemize}[leftmargin=1em]
  \item In Figure~\ref{fig_finger}, Step~A$\to$B is an in-order insert
    without rebalance, which only affects the aggregate at a single
    node, the right finger.
  \item Step~B$\to$C is an out-of-order insert without rebalance,
    affecting aggregates on a walk to the right finger.
  \item Step~C$\to$D is an in-order evict without rebalance,
    affecting the aggregate at a single node, the left finger.
  \item Step~D$\to$E is an out-of-order insert to a node with arity
    $a=2\,\cdot\,$\lstinline{MIN_ARITY}, causing an
    overflow; rebalancing splits it.
  \item Step~E$\to$F is an evict from a node with
    $a\,=\,$\lstinline{MIN_ARITY}, causing the node to underflow;
    rebalancing merges it with its neighbor.
  \item In Figure~\ref{fig_inc_height}, Step~G$\to$H is an insert that
    causes nodes to overflow all the way up to the root, causing a
    height increase followed by splitting the old root. This affects
    aggregates on all split nodes and on both spines.
  \item In Figure~\ref{fig_move}, Step~I$\to$J is an evict that causes
    first an underflow that is fixed by a merge, and then an underflow
    at the next level where the neighbor node is too big to merge.
    The algorithm repairs the size invariant with a move of a child
    and a timed value from the neighbor.  This step affects aggregates
    on all nodes affected by rebalancing plus a walk to the left
    finger.
  \item In Figure~\ref{fig_dec_height}, Step~K$\to$L is an evict that
    causes nodes to underflow all the way up to the root, causing a
    height decrease to eliminate the old empty root. This affects
    aggregates on all merged nodes and on both spines.
\end{itemize}

\begin{figure*}[!t]
\begin{minipage}{\columnwidth}\begin{lstlisting}
fun query() : Agg
  if root.isLeaf()
    return root.agg
  return leftFinger.agg $\otimes$ root.agg $\otimes$ rightFinger.agg

fun insert($t$ : Time, $v$ : Agg)
  node $\gets$ searchNode($t$)
  node.localInsertTimeAndValue($t$, $v$)
  top, hit$_\texttt{left}$, hit$_\texttt{right}$ $\gets$ rebalanceForInsert(node)
  repairAggs(top, hit$_\texttt{left}$, hit$_\texttt{right}$)

fun evict($t$ : Time)
  node $\gets$ searchNode($t$)
  found, idx $\gets$ node.localSearch($t$)
  if found
    if node.isLeaf()
      node.localEvictTimeAndValue($t$)
      top,hit$_\texttt{left}$,hit$_\texttt{right}$ $\gets$ rebalanceForEvict(node, null)
    else
      top,hit$_\texttt{left}$,hit$_\texttt{right}$ $\gets$ evictInner(node, idx)
    repairAggs(top, hit$_\texttt{left}$, hit$_\texttt{right}$)

fun repairAggs(top : Node, hit$_\texttt{left}$ : Bool, hit$_\texttt{right}$ : Bool)
  if top.hasAggUp()
    while top.hasAggUp()
      top $\gets$ top.parent
      top.localRepairAgg()
  else
    top.localRepairAgg()
  if top.leftSpine or top.isRoot() and hit$_\texttt{left}$
    left $\gets$ top
    while not left.isLeaf()
      left $\gets$ left.getChild(0)
      left.localRepairAgg()
  if top.rightSpine or top.isRoot() and hit$_\texttt{right}$
    right $\gets$ top
    while not right.isLeaf()
      right $\gets$ right.getChild(right.arity - 1)
      right.localRepairAgg()
\end{lstlisting}\end{minipage}\hspace*{\columnsep}\begin{minipage}{\columnwidth}\begin{lstlisting}[firstnumber=last]
fun rebalanceForInsert(node : Node) : Node$\times$Bool$\times$Bool
  hit$_\texttt{left}$, hit$_\texttt{right}$ $\gets$ node.leftSpine, node.rightSpine
  while node.arity > MAX_ARITY
    if node.isRoot()
      heightIncrease()
      hit$_\texttt{left}$, hit$_\texttt{right}$ $\gets$ true, true
    split(node)
    node $\gets$ node.parent
    hit$_\texttt{left}$ $\gets$ hit$_\texttt{left}$ or node.leftSpine
    hit$_\texttt{right}$ $\gets$ hit$_\texttt{right}$ or node.rightSpine
  return node, hit$_\texttt{left}$, hit$_\texttt{right}$

fun rebalanceForEvict(node : Node, toRepair : Node)
                     : Node$\times$Bool$\times$Bool
  hit$_\texttt{left}$, hit$_\texttt{right}$ $\gets$ node.leftSpine, node.rightSpine
  if node $=$ toRepair
    node.localRepairAggIfUp()
  while not node.isRoot() and node.arity < MIN_ARITY
    parent $\gets$ node.parent
    nodeIdx, siblingIdx $\gets$ pickEvictionSibling(node)
    sibling $\gets$ parent.getChild(siblingIdx)
    hit$_\texttt{right}$ $\gets$ hit$_\texttt{right}$ or sibling.rightSpine
    if sibling.arity $\leq$ MIN_ARITY
      node $\gets$ merge(parent, nodeIdx, siblingIdx)
      if parent.isRoot() and parent.arity $=$ 1
        heightDecrease()
      else
        node $\gets$ parent
    else
      move(parent, nodeIdx, siblingIdx)
      node $\gets$ parent
    if node $=$ toRepair
      node.localRepairAggIfUp()
    hit$_\texttt{left}$ $\gets$ hit$_\texttt{left}$ or node.leftSpine
    hit$_\texttt{right}$ $\gets$ hit$_\texttt{right}$ or node.rightSpine
  return node, hit$_\texttt{left}$, hit$_\texttt{right}$
\end{lstlisting}\end{minipage}
\caption{\label{fig_algo}Finger B-Tree with aggregates: algorithm.}
\end{figure*}

\begin{figure*}
  \small
\begin{minipage}{\columnwidth}\begin{lstlisting}[firstnumber=last]
fun evictInner(node : Node, idx : Int) : Node$\times$Bool$\times$Bool
  left, right $\gets$ node.getChild(idx), node.getChild(idx+1)
  if right.arity > MIN_ARITY
    leaf, $t_\texttt{leaf}$, $v_\texttt{leaf}$ $\gets$ oldest(right)
  else
    leaf, $t_\texttt{leaf}$, $v_\texttt{leaf}$ $\gets$ youngest(left)
  leaf.localEvictTimeAndValue($t_\texttt{leaf}$)
  node.setTimeAndValue(idx, $t_\texttt{leaf}$, $v_\texttt{leaf}$)
  top,hit$_\texttt{left}$,hit$_\texttt{right}$ $\gets$ rebalanceForEvict(leaf, node)
  if top.isDescendent(node)
    while top $\neq$ node
      top $\gets$ top.parent
      hit$_\texttt{left}$ $\gets$ hit$_\texttt{left}$ or top.leftSpine
      hit$_\texttt{right}$ $\gets$ hit$_\texttt{right}$ or top.rightSpine
      top.localRepairAggIfUp()
  return top, hit$_\texttt{left}$, hit$_\texttt{right}$
\end{lstlisting}
\caption{\label{fig_evict_inner_algo}Finger B-Tree evict inner: algorithm.}
\end{minipage}\hspace*{\columnsep}\begin{minipage}{\columnwidth}
  \centering
  \includegraphics[scale=0.41]{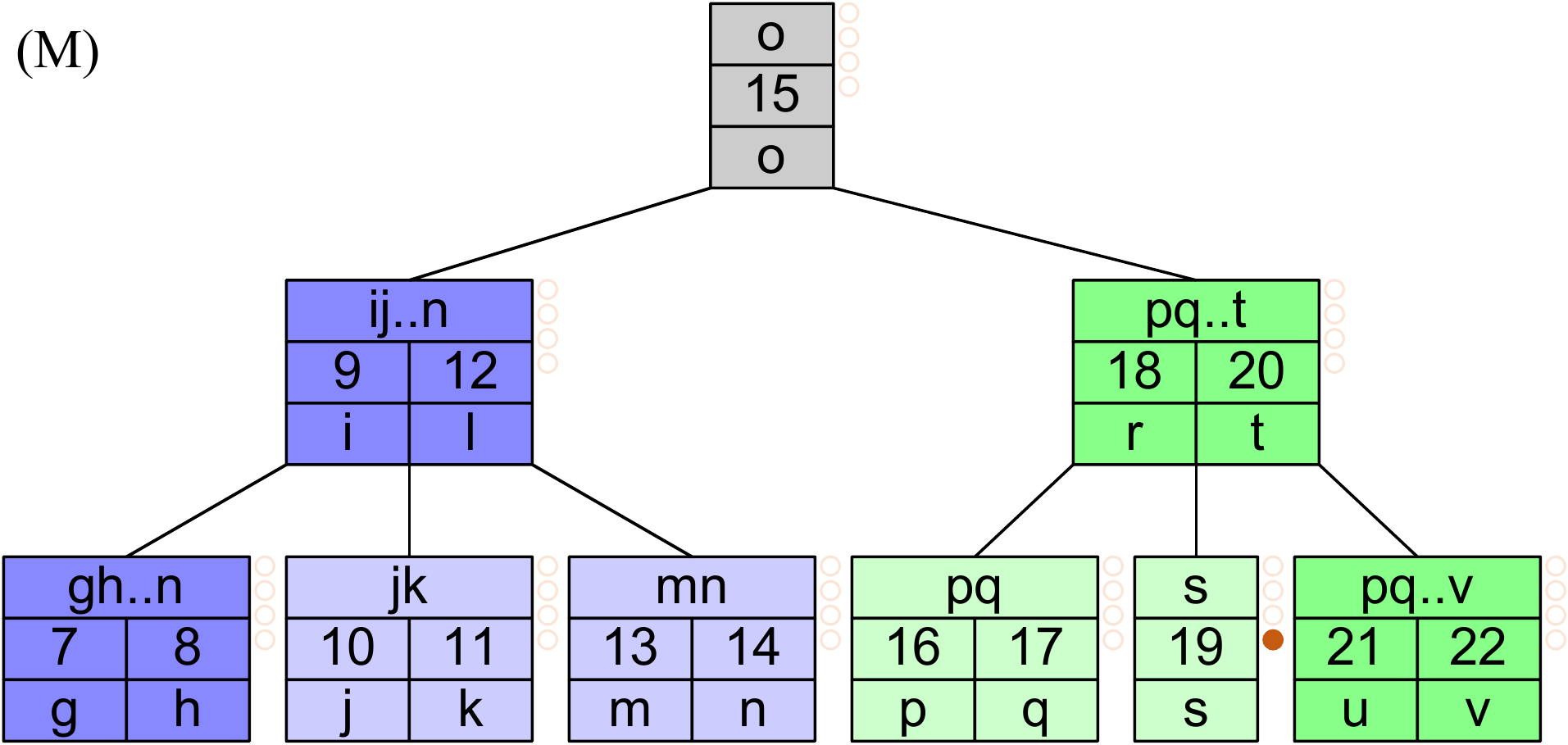}\\
  \noindent\mbox{\small Step M$\to$N, out-of-order evict \textsf{9:i}. Spent 0, billed 1.}\\
  \includegraphics[scale=0.41]{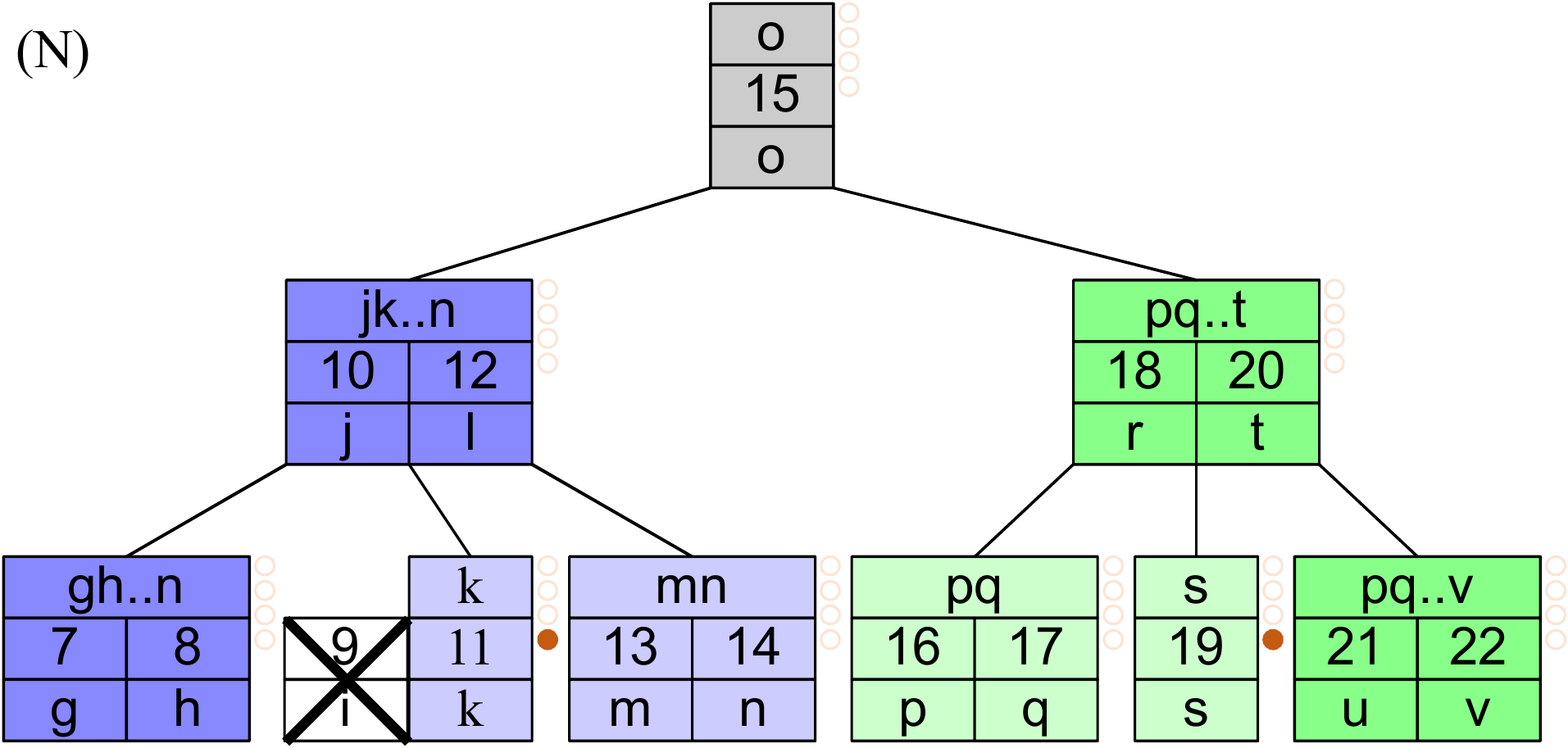}
  \vspace*{-3mm}
  \caption{\label{fig_evict_inner_example}Finger B-tree evict inner: example.}
\end{minipage}
\end{figure*}

Figure~\ref{fig_algo} shows most of the algorithm, excluding only
\lstinline{evictInner}, which will be presented later.  While
rebalancing always works bottom-up, aggregate repair works in the
direction of the partial aggregates: either up for up-agg or
inner-agg, or down for left-agg or right-agg.  Our algorithm
piggybacks the repair of up-aggs onto the local insert or evict and
onto rebalancing, and then repairs the remaining aggregates
separately.  To facilitate the handover from the piggybacked phase
to the dedicated phase of aggregate repair, the rebalancing routines
return a triple
\mbox{$\langle$\lstinline{top}, \lstinline{hit}$_\texttt{left}$, \lstinline{hit}$_\texttt{right}\rangle$}, for instance, in Line~9.
Node \lstinline{top} is where rebalancing topped out, and if it has an
up-agg, it is the last node whose aggregate has already been repaired.
Booleans \lstinline{hit}$_\texttt{left}$ and
\lstinline{hit}$_\texttt{right}$ indicate whether rebalancing affected
the left or right spine, determining whether aggregates on the
respective spine have to be repaired.

To keep the algorithm more readable, we factored out the case of
evicting from a non-leaf node into function \lstinline{evictInner} in
Figure~\ref{fig_evict_inner_algo}. To evict something from an inner
node, Line~82 evicts a substitute from a leaf instead, and Line~83
writes that substitute over the evicted slot.  Function
\lstinline{evictInner} creates an obligation to repair an extra node
during rebalancing, handled by parameter \lstinline{toRepair} on Line~52 in
the same figure.  Function \lstinline{evictInner} can
only be triggered for out-of-order eviction, because in-order
evictions always happen at the left finger, which is a leaf.

The following theorems state our correctness guarantees and the time
complexity; their proofs appear in Appendix~\ref{sec:appdxproof}.

\begin{theorem}\label{trm_finger_correctness}
  In a finger B-tree with aggregates that contains
  \mbox{$\tv{t_1}{v_1},\ldots,\tv{t_n}{v_n}$}, operation
  \lstinline{query()} returns \mbox{$v_1\otimes\ldots\otimes v_n$}.
\end{theorem}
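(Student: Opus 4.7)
The plan is to split the proof into two parts: verifying that the four aggregation invariants of Section~\ref{sec:intuition} hold after every operation, and verifying that \lstinline{query()} produces the correct output under those invariants. The second part is a short case analysis. If the root is a leaf, the tree consists of a single node whose stored aggregate is, by a degenerate reading of the invariants, the product of its timed values in time order, and \lstinline{query()} returns it directly. Otherwise, unfolding via Figure~\ref{fig_partials}: \lstinline{leftFinger.agg} equals $\Pi_\swarrow$ at the left finger, which covers the entire leftmost subtree of the root (there is no $z_0$ to exclude because the finger is a leaf); \lstinline{root.agg} is the inner aggregate covering the root's own timed values interleaved with the $\Pi_\uparrow$ aggregates of its middle children; and \lstinline{rightFinger.agg} symmetrically covers the rightmost subtree. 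Because timestamps in a B-tree are strictly ordered across subtrees, concatenating these three aggregates left-to-right recovers $v_1\otimes\ldots\otimes v_n$, with associativity (but not commutativity) of $\otimes$ doing all the regrouping work.

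The heavier part is invariant preservation, which I would prove by induction on the sequence of operations. The empty tree is a trivial base case. For the inductive step, I would assume the invariants hold before an \lstinline{insert($t,v$)} or \lstinline{evict($t$)} and follow the control flow in Figures~\ref{fig_algo}--\ref{fig_evict_inner_algo}: a local change without rebalance, a rebalance consisting of one or more splits, merges, or moves, a height increase or decrease, and the specialized \lstinline{evictInner} path. For each case I would establish two sub-claims: every node whose subtree contents or spine membership changed has its aggregate freshly recomputed by \lstinline{localRepairAgg} (either piggybacked during rebalancing or during the dedicated walk in \lstinline{repairAggs}); and every other node's stored value still satisfies its invariant because both the value it stores and the kind of aggregate it is required to store are unchanged. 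The tracked pairs $\langle$\lstinline{top},\lstinline{hit}$_\texttt{left},$\lstinline{hit}$_\texttt{right}\rangle$ are then the bridge: they tell the repair phase exactly which upward paths and spine walks are still dirty.

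The main obstacle will be bookkeeping around how rebalancing alters spine membership, which in turn changes which \emph{kind} of aggregate a node is required to store. A height increase, for example, promotes the old root to a non-root spine node whose required aggregate switches from inner to left- or right-aggregate; a height decrease and a split near the root produce analogous shifts, and a move or merge can reassign spine status to a child coming from a non-spine neighbor. The invariant I would carry through the case analysis is that the flags \lstinline{hit}$_\texttt{left}$ and \lstinline{hit}$_\texttt{right}$ returned by rebalancing, together with the traversals in \lstinline{repairAggs}, cover exactly the set of nodes whose stored aggregates are now stale, whether because of content change, spine-membership change, or both. The \lstinline{evictInner} case needs extra attention because rebalancing may top out below the inner node where the substitute was written, leaving that node's up-aggregate stale; I would verify that the patch loop on Lines~92--95 is always triggered in this situation and that it walks high enough to hand off a clean \lstinline{top} to \lstinline{repairAggs}, after which the argument for the other cases applies verbatim.
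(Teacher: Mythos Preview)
Your proposal is correct, and your ``second part'' (the case analysis on whether the root is a leaf) is essentially identical to what the paper actually proves. The difference is that the paper's proof consists \emph{only} of that second part: it takes the aggregation invariants of Section~\ref{sec:intuition} as given and simply reads off that \lstinline{leftFinger.agg} $\otimes$ \lstinline{root.agg} $\otimes$ \lstinline{rightFinger.agg} equals $v_1\otimes\ldots\otimes v_n$. There is no inductive argument in the paper's proof that \lstinline{insert} and \lstinline{evict} preserve the invariants; that preservation is asserted implicitly by the algorithm's design narrative but never formally established.

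Your heavier first part---the induction over operations, with case analysis over local changes, splits, merges, moves, height changes, and \lstinline{evictInner}, and the careful tracking of spine-membership shifts via \lstinline{hit}$_\texttt{left}$/\lstinline{hit}$_\texttt{right}$---is exactly the argument one would need to make the correctness claim airtight, and your identification of the subtle points (spine reclassification after height change, the patch loop in \lstinline{evictInner}) is accurate. What you gain is a genuinely complete proof; what the paper gains is brevity, at the cost of leaving the invariant-maintenance burden to the reader's trust in the pseudocode.
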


\begin{theorem}\label{trm_finger_complexity}
  In a finger B-tree with aggregates, \lstinline{query()} costs at most~$O(1)$
  time, and \lstinline{insert($t,v$)} and \lstinline{evict($t$)} take time
  $T_\textrm{search}+T_\textrm{rebalance}+T_\textrm{repair}$, where
  \begin{itemize}[topsep=2pt]
    \item $T_\textrm{search}$ is $O(\log d)$, with $d$ being the
      distance to the start or end of the window, whichever is closer;
    \item $T_\textrm{rebalance}$ is amortized $O(1)$ and worst-case
      $O(\log n)$; and
    \item $T_\textrm{repair}$ is $O(T_\textrm{search} + T_\textrm{rebalance})$.
  \end{itemize}
\end{theorem}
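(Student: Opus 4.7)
\textbf{Proof plan for Theorem~\ref{trm_finger_complexity}.} The query bound is immediate: by the aggregation invariants, \lstinline{leftFinger.agg}, \lstinline{root.agg}, and \lstinline{rightFinger.agg} together hold $\Pi_\swarrow$, $\Pi_{\hat{\scriptscriptstyle |}}$, and $\Pi_\searrow$ for the whole tree (or \lstinline{root.agg} alone when the root is a leaf), so \lstinline{query()} combines a constant number of stored values in $O(1)$. The rest of the proof treats \lstinline{insert} and \lstinline{evict} by separately bounding search, rebalance, and repair.

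For $T_\textrm{search}$, I would follow the standard finger-tree argument~\cite{guibas_et_al_1977}. Using the parent pointers and the left/right fingers, \lstinline{searchNode($t$)} walks up from the nearer finger until it reaches a node whose key range contains $t$, then walks down to the target node~$y$. Let $h$ be the height at which the walk turns around. A B-tree with \lstinline{MIN_ARITY} $\geq 2$ has at least $2^{h-1}$ leaves in every height-$h$ subtree, so if $y$ is within distance $d$ of the nearer finger, the turn-around height satisfies $h = O(\log(1+d))$. Both the upward and downward legs visit $O(h)$ nodes, each in $O(1)$ work (since arity is bounded), giving $T_\textrm{search}=O(\log d)$. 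The worst case $d=\Theta(n)$ recovers $O(\log n)$.

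For $T_\textrm{rebalance}$, I would use the coin (potential) argument already set up in the paper, essentially the accounting of Huddleston--Mehlhorn~\cite{huddleston_mehlhorn_1982} with the choice \lstinline{MAX_ARITY}$=2\cdot$\lstinline{MIN_ARITY}. Define the potential as the number of coins stored at nodes, where coins accumulate on nodes whose arity is close to a boundary (overfull or minimal). A single \lstinline{insert} deposits a constant number of coins along the insertion node; each split of an overfull node is paid for by coins freed from that node, and only the final split along the propagating chain requires a fresh deposit because it produces a node with arity near the middle of the allowed range. The symmetric argument applies to merges and moves for \lstinline{evict}, and the height-increase/decrease cases each amount to one extra unit charged to the root. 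Formalizing the invariant that billed coins per operation are at most~$2$ (as claimed in the caption-level accounting in Figures~\ref{fig_finger}--\ref{fig_dec_height}) yields amortized $O(1)$. The worst-case $O(\log n)$ is just the tree height, which bounds the longest propagation chain.

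For $T_\textrm{repair}$, I would split the work into three pieces and show each is $O(T_\textrm{search}+T_\textrm{rebalance})$. First, \lstinline{rebalanceForInsert} and \lstinline{rebalanceForEvict} piggyback a local $\Pi_\uparrow$-repair on every node they touch, so the up-agg repairs along the rebalancing chain cost $O(T_\textrm{rebalance})$. Second, once rebalancing tops out at \lstinline{top}, the loop in \lstinline{repairAggs} continues to walk upward repairing up-aggs; this walk lies on the search path from the inserted/evicted node to the nearest ancestor that is either a spine node or the root, so its length is bounded by $T_\textrm{search}+T_\textrm{rebalance}$. Third, whenever the left or right spine is affected (\lstinline{hit}$_\texttt{left}$ or \lstinline{hit}$_\texttt{right}$ is set, or \lstinline{top} itself is a spine node), we walk downward along that spine to the finger, repairing $\Pi_\swarrow$ or $\Pi_\searrow$ at each node. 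The length of this downward walk is at most the height at which rebalancing or the search path reached the spine, which again is $O(T_\textrm{search}+T_\textrm{rebalance})$. The \lstinline{evictInner} helper adds at most a walk from the substitute leaf back up to the inner node storing the timed value; that walk is on the search path, so it contributes $O(T_\textrm{search})$. Summing the three pieces gives the stated bound.

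The main obstacle I expect is the rebalancing amortized analysis: making the coin invariant precise enough to cover all six cases illustrated in Figures~\ref{fig_finger}--\ref{fig_dec_height} (in-order and out-of-order insert/evict, split, merge, move, height change) and verifying that coins deposited by an operation near the window boundary are not overcharged when that operation actually uses the finger-search shortcut. The repair argument is more delicate than it first looks because \lstinline{evictInner} creates a repair obligation on a node that is neither the rebalance start nor on the direct search path, but the code's \lstinline{toRepair} mechanism and the post-rebalance walk up to \lstinline{node} keep this bounded within the search length.
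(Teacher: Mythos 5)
Your proof takes essentially the same approach as the paper's: the $O(1)$ query bound from a constant number of stored aggregates, the finger-search subtree-size argument for $T_\textrm{search}$, the Huddleston--Mehlhorn coin potential for $T_\textrm{rebalance}$ (which the paper delegates to Lemma~\ref{lemma:restruct-cost}), and the observation that repairs touch only nodes on the search path or involved in rebalancing, giving $T_\textrm{repair} = O(T_\textrm{search}+T_\textrm{rebalance})$. One small slip --- the walk from the substitute leaf up to the inner node in \lstinline{evictInner} is not literally ``on the search path'' --- does not affect the conclusion, since that walk's length is bounded by the inner node's height, which is $O(T_\textrm{search})$, a point your closing paragraph already acknowledges.
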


%%% Local Variables:
%%% mode: latex
%%% TeX-master: "paper"
%%% End:

\section{Window Sharing}
\label{sec:sharing}

\begin{figure}[!t]
\begin{lstlisting}[firstnumber=last]
fun query($t_\texttt{from}$ : Time, $t_\texttt{to}$ : Time) : Agg
  node$_\texttt{from}$, node$_\texttt{to}$ $\gets$ searchNode($t_\texttt{from}$), searchNode($t_\texttt{to}$)
  node$_\texttt{top}$ $\gets$ leastCommonAncestor(node$_\texttt{from}$, node$_\texttt{to}$)
  return queryRec(node$_\texttt{top}$, $t_\texttt{from}$, $t_\texttt{to}$)

fun queryRec(node : Node, $t_\texttt{from}$ : Time, $t_\texttt{to}$ : Time) : Agg
  if $t_\texttt{from}=-\infty$ and $t_\texttt{to}=+\infty$ and node.hasAggUp()
    return node.agg
  res $\gets$ $\onef{}$
  if not node.isLeaf()
    $t_\texttt{next}$ $\gets$ node.getTime(0)
    if $t_\texttt{from}<t_\texttt{next}$
      res = res $\otimes$ queryRec(node.getChild(0),
                           $\;t_\texttt{from}$,
                           $\;t_\texttt{next}\le t_\texttt{to}$ ? $+\infty$ : $t_\texttt{to})$
  for $i$ $\in$ [0, ..., node.arity - 2]
    $t_i$ $\gets$ node.getTime($i$)
    if $t_\texttt{from}\le t_i$ and $t_i\le t_\texttt{to}$
      res $\gets$ res $\otimes$ node.getValue(i)
    if not node.isLeaf() and $i+1<$ node.arity$\,\mathop{-}2$
      $t_{i+1}$ $\gets$ node.getTime($i+1$)
      if $t_i<t_\texttt{to}$ and $t_\texttt{from}<t_{i+1}$
        res $\gets$ res $\otimes$ queryRec(node.getChild($i+1$),
                              $\;t_\texttt{from}\le t_i$ ? $-\infty$ : $t_\texttt{from}$,
                              $\;t_{i+1}\le t_\texttt{to}$ ? $+\infty$ : $t_\texttt{to}$)
  if not node.isLeaf()
    $t_\texttt{curr}$ $\gets$ node.getTime(node.arity - 2)
    if $t_\texttt{curr}<t_\texttt{to}$
      res = res $\otimes$ queryRec(node.getChild(node.arity - 1),
                           $\;t_\texttt{from}\le t_\texttt{curr}$ ? $-\infty$ : $t_\texttt{from}$,
                           $\;t_\texttt{to}$)
  return res
\end{lstlisting}
\caption{\label{fig_range_query}Range query algorithm.}
\end{figure}
This section explains how to use a single finger B-tree to efficiently answer
aggregations on subwindows of different sizes on the fly. Applications are
numerous.
One common basic example is a simple anomaly detection workflow that
compares two related aggregations: one on a large window representing the normal
``stable'' behavior and the other on a smaller window representing the most
recent behavior. Then, an alert is triggered when the aggregates differ
substantially. Whereas in this example, the sizes of the windows are known ahead
of query time, in many other applications---e.g., interactive data
exploration---queries are ad hoc.

We propose to implement window sharing via range que\-ries, as defined
at the end of Section~\ref{sec:problem}. This has many
benefits: The window contents need to be saved only once regardless of how many
subwindows are involved. Thus, each insert or evict needs to be performed only
once on the largest window. This approach can accommodate an arbitrary number of
shared window sizes. For instance, many users can register queries over
different window sizes. Importantly, queries can be ad hoc and interactive,
which would otherwise not be possible to support using multiple fixed instances.
Furthermore, the range-query formulation also accommodates the case where the
window boundary is not the current time (\mbox{$t_\texttt{to}\neq
  t_\texttt{now}$}). For instance, it can report results with some time-lag
dictated by punctuation or low watermarks.

To answer the range query
\mbox{\lstinline{query(}$t_\texttt{from}, t_\texttt{to}$\lstinline{)}}, the
algorithm, shown in Figure~\ref{fig_range_query}, uses recursion starting from
the least-common ancestor node whose subtree encompasses the queried range. The
main technical challenge is to avoid making spurious recursive calls. Because
the nodes already store partial aggregates, the algorithm should only recurse
into a node's children if the partial aggregates cannot be used directly.
Specifically, we aim for the algorithm to invoke at most two chains of recursive
calls, one visiting ancestors of \lstinline{node}$_\texttt{from}$ and the other
visiting ancestors of \lstinline{node}$_\texttt{to}$. The insight for preventing
spurious recursive calls is that one needs information about neighboring
timestamps in a node's parent to determine whether the node itself is subsumed
by the range. We encode whether the neighboring timestamp in the parent is
included in the range on the left or right by using $-\infty$ or $+\infty$,
respectively.

This strategy alone would have been similar to range query in an interval
tree~\cite{cormen_leiserson_rivest_1990}, albeit without explicitly storing the
ranges; however, our specially-designed partial aggregates add another layer of
details: not all nodes store agg-up values~$\Pi_\uparrow(y)$. But any nodes that
lack $\Pi_\uparrow(y)$ are guaranteed to be on one of the two recursion chains,
because if a query involves spines of the entire window, then those spines
coincide with edges of the intersection between the window and the range.

\begin{theorem}\label{trm_range_correctness}
  In a finger B-tree with aggregates that contains
  \mbox{$\tv{t_1}{v_1},\ldots,\tv{t_n}{v_n}$}, the operation
  \mbox{\lstinline{query(}$t_\texttt{from},t_\texttt{to}$\lstinline{)}} returns
  the aggregate \mbox{$v_{i_\texttt{from}}\otimes\ldots\otimes
    v_{i_\texttt{to}}$}, where $i_\texttt{from}$ is the largest $i$ such that
  $t_\texttt{from}\le t_{i_\texttt{from}}$ and $i_\texttt{to}$ is the smallest
  $i$ such that $t_{i_\texttt{to}}\le t_\texttt{to}$.
  % \begin{itemize}
  %   \item $i_\texttt{from}$ is the largest $i$ such that
  %     $t_\texttt{from}\le t_{i_\texttt{from}}$
  %   \item $i_\texttt{to}$ is the smallest $i$ such that
  %     $t_{i_\texttt{to}}\le t_\texttt{to}$
  % \end{itemize}
\end{theorem}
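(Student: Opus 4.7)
\medskip\noindent\textbf{Proof proposal.}
The plan is to prove correctness of \lstinline{query(}$t_\texttt{from},t_\texttt{to}$\lstinline{)} by establishing an inductive invariant for the helper \lstinline{queryRec}, and then using the correctness of the LCA computation and the aggregation invariants from Section~\ref{sec:intuition} to lift it to the top-level call. Specifically, I would prove the invariant that, for any node~$y$ and any query bounds $a, b$ (each possibly $\pm\infty$), the call \lstinline{queryRec(}$y, a, b$\lstinline{)} returns $\bigotimes_{i} v_i$, aggregated in time order, over exactly those $\tv{t_i}{v_i}$ stored in the subtree rooted at~$y$ for which $a \le t_i \le b$, with the conventions that $a=-\infty$ subsumes the lower bound and $b=+\infty$ subsumes the upper bound.

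The induction is on the height of the subtree at~$y$. The base case is a leaf: the loop over \lstinline{node.arity - 1} timed values combines exactly those $v_i$ with $a \le t_i \le b$, which matches the invariant since a leaf has no children. For the inductive step at an internal node with keys $t_0 < \ldots < t_{a-2}$ and children $z_0, \ldots, z_{a-1}$, I would argue by tracking the time-ordered interleaving $z_0, t_0, z_1, t_1, \ldots, t_{a-2}, z_{a-1}$: each slot is handled exactly once by the algorithm. For an inner child $z_{i+1}$ sandwiched between $t_i$ and $t_{i+1}$, the algorithm recurses iff $t_i < b$ and $a < t_{i+1}$, i.e., iff the child's subtree can possibly contain a timestamp in $[a,b]$; it passes $-\infty$ in place of~$a$ exactly when $a\le t_i$ (so the entire left side of the child's range is already known to be in the query range), and symmetrically passes $+\infty$ in place of~$b$ when $t_{i+1}\le b$. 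By the inductive hypothesis, each recursive call returns the correct partial aggregate, and associativity of $\otimes$ allows us to combine them in the correct time order. The leftmost and rightmost children are handled by the dedicated pre- and post-loop blocks with the analogous bound-propagation rule.

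The second key ingredient is the safety of the early return on Line~89. When $a=-\infty$ and $b=+\infty$, the invariant requires returning $\Pi_\uparrow(y)$. The guard \lstinline{node.hasAggUp()} ensures that we read the stored \lstinline{node.agg} only when the aggregation invariants from Section~\ref{sec:intuition} guarantee it equals $\Pi_\uparrow(y)$; for spine nodes and the root, which instead store $\Pi_\swarrow$, $\Pi_\searrow$, or $\Pi_{\hat{\scriptscriptstyle |}}$, the guard fails and the algorithm recurses into all children. The paper's observation that nodes lacking $\Pi_\uparrow$ lie on one of at most two root-to-finger chains matches the structural fact that the recursion from the LCA descends along at most two chains until both bounds become $\pm\infty$, at which point every further descendant is necessarily a non-spine node that does have $\Pi_\uparrow$.

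Finally, for the top-level call, I would verify that the LCA of \lstinline{node}$_\texttt{from}$ and \lstinline{node}$_\texttt{to}$ is the shallowest node whose subtree contains every stored $\tv{t_i}{v_i}$ with $t_\texttt{from}\le t_i\le t_\texttt{to}$, so applying the invariant there yields the required $v_{i_\texttt{from}}\otimes\ldots\otimes v_{i_\texttt{to}}$; the empty-range case follows because the loop bodies contribute nothing and each recursive descent returns $\onef{}$. The hard part will be the bookkeeping for the $\pm\infty$ propagation: carefully verifying in each of the three code blocks (leftmost child, middle values and interior children, rightmost child) that the bound passed to the recursive call is exactly what the inductive invariant needs, and that no stored value is either skipped or double-counted, especially at the boundary indices where the pre-loop, loop, and post-loop regions meet.
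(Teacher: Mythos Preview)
Your proposal is correct and follows essentially the same approach as the paper: the paper's proof is the two-sentence sketch ``By induction. Each recursive call returns the aggregate of the intersection between its subtree and the queried range,'' and your invariant for \lstinline{queryRec} is exactly this statement made precise, with the induction on subtree height, the $\pm\infty$ bound-propagation analysis, and the \lstinline{hasAggUp()} case all being the details that the paper leaves implicit. You are simply fleshing out what the paper asserts without elaboration.
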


\begin{proof}
  By induction. Each recursive call returns the aggregate of the intersection
  between its subtree and the queried range.
\end{proof}

\begin{theorem}\label{trm_range_complexity}
  In a finger B-tree with aggregates that contains
  \mbox{$\tv{t_1}{v_1},\ldots,\tv{t_n}{v_n}$}, the operation
  \mbox{\lstinline{query(}$t_\texttt{from},t_\texttt{to}$\lstinline{)}}
  takes time \mbox{$O(\log d_\texttt{from} + \log d_\texttt{to} + \log n_\texttt{sub})$}, where
  \begin{itemize}[topsep=0pt]
    \item $i_\texttt{from}$ is the largest index $i$ such that
      $t_\texttt{from}\le t_{i_\texttt{from}}$
    \item $i_\texttt{to}$ is the smallest index $i$ such that
      $t_{i_\texttt{to}}\le t_\texttt{to}$
    \item $d_\texttt{from} = \min(i_\texttt{from}, n-i_\texttt{from})$ and
      $d_\texttt{to} = \min(i_\texttt{to}, n-i_\texttt{to})$ are the distances
      to the window boundary
    % \item $d_\texttt{from} = \min(i_\texttt{from}, n-i_\texttt{from})$ is the
    %   distance from the closer window boundary to $i_\texttt{from}$
    % \item $d_\texttt{to} = \min(i_\texttt{to}, n-i_\texttt{to})$ is the
    %   distance from the closer window boundary to $i_\texttt{to}$
    \item $n_\texttt{sub} = i_\texttt{to} - i_\texttt{from}$ is the size
      of subwindow being queried.
  \end{itemize}
\end{theorem}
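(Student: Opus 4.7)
The plan is to decompose the cost of the algorithm in Figure~\ref{fig_range_query} into three parts: the two \lstinline{searchNode} calls, the computation of the least common ancestor \lstinline{node}$_\texttt{top}$, and the recursion in \lstinline{queryRec}. The two searches cost $O(\log d_\texttt{from})$ and $O(\log d_\texttt{to})$ by the same finger-search analysis already invoked in Theorem~\ref{trm_finger_complexity}, and \lstinline{node}$_\texttt{top}$ is obtainable within that same budget by walking up along the two search paths until they coincide.

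The main work is then to bound \lstinline{queryRec}. First I would note that each invocation performs only $O(1)$ non-recursive work, since arity is bounded by a constant. Next I would argue that the recursion forms at most two descending chains, one toward \lstinline{node}$_\texttt{from}$ and one toward \lstinline{node}$_\texttt{to}$: inspecting the code, a recursive call into a child is issued only when the bracketing separators in the parent fail to certify that the child's entire subtree lies inside $[t_\texttt{from}, t_\texttt{to}]$, which can happen for at most one child per parent on each chain. Siblings fully inside the range are absorbed through their stored partial aggregate in $O(1)$ time, and siblings fully outside are simply skipped.

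To bound the length of each chain, I would appeal to the B-tree size invariant. Consider the descent from \lstinline{node}$_\texttt{top}$ toward \lstinline{node}$_\texttt{from}$. At every intermediate node on this chain, the siblings lying on the inner side (toward \lstinline{node}$_\texttt{to}$) have their whole subtrees sandwiched between the path's separators and one of \lstinline{node}$_\texttt{top}$'s in-range separators, hence they sit fully inside the range. By the size invariant, any such inner sibling at descent depth $k$ below \lstinline{node}$_\texttt{top}$ contributes at least $\mbox{\lstinline{MIN_ARITY}}^{k-1}$ values to $n_\texttt{sub}$, giving $k = O(\log n_\texttt{sub})$ for the steps that do have an inner sibling. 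A step at which the path has no inner sibling means the descent is hugging the innermost edge of the current subtree; such a step can be charged to the finger search from the corresponding boundary and thus absorbed into $\log d_\texttt{from}$. A symmetric argument handles the chain to \lstinline{node}$_\texttt{to}$.

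The main obstacle will be reconciling the recursion with the fact that spine nodes and the root store $\Pi_\swarrow$, $\Pi_\searrow$, or $\Pi_{\hat{\scriptscriptstyle |}}$ instead of $\Pi_\uparrow$, so the base-case short-circuit in \lstinline{queryRec} cannot fire on them even when the whole subtree ought to be absorbed, forcing one extra level of descent. I would dispatch this by the observation the paper sketches: any subtree whose root lacks an up-aggregate must lie on a spine of the whole tree, which forces the queried range to extend to that window boundary and places \lstinline{node}$_\texttt{from}$ or \lstinline{node}$_\texttt{to}$ on the same spine. The extra recursion then hugs that spine and its length is already paid for by the finger-search term $\log d_\texttt{from}$ or $\log d_\texttt{to}$. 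Summing all contributions yields the claimed $O(\log d_\texttt{from} + \log d_\texttt{to} + \log n_\texttt{sub})$ bound.
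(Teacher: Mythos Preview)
Your proposal is correct and follows the same skeleton as the paper---finger searches, then locate the LCA, then bound the two recursion chains---but it is considerably more careful than the paper's three-sentence sketch, and in one place it actually patches a gap the paper leaves open. The paper simply asserts that the distance from \lstinline{node}$_\texttt{from}$ (or \lstinline{node}$_\texttt{to}$) to the LCA is $O(\log n_\texttt{sub})$; this is false in general (take $i_\texttt{from}$ and $i_\texttt{to}$ adjacent but straddling the root: $n_\texttt{sub}=O(1)$ while the LCA is the root at height $\Theta(\log n)$). Your decomposition of each chain into steps that have an inner sibling (contributing $\Omega(\minarity^{k-1})$ to $n_\texttt{sub}$, hence at most $O(\log n_\texttt{sub})$ such steps) versus edge-hugging steps (charged against $\log d_\texttt{from}$ or $\log d_\texttt{to}$) is exactly what is needed to make the bound go through in that degenerate case. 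Your discussion of the spine/root nodes lacking $\Pi_\uparrow$ is also a real issue the paper's proof omits, and your resolution matches the observation in the prose preceding the theorem.

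One small imprecision: you say \lstinline{node}$_\texttt{top}$ is obtainable ``within that same budget,'' referring to the finger-search cost $O(\log d_\texttt{from}+\log d_\texttt{to})$. That is not quite right when $d_\texttt{from},d_\texttt{to}$ are tiny but $n_\texttt{sub}$ is large (e.g., $i_\texttt{from}=1$, $i_\texttt{to}=n$): the walk up to the LCA then costs $\Theta(\log n)=\Theta(\log n_\texttt{sub})$, not $O(1)$. The fix is simply to charge the LCA walk to the full budget $O(\log d_\texttt{from}+\log d_\texttt{to}+\log n_\texttt{sub})$ rather than to the search alone; this does not affect your final bound.
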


\begin{proof}
  Using finger searches, Line~2 takes $O(\log d_\texttt{from} + \log
  d_\texttt{to})$. Now the distance from either \lstinline{node}$_\texttt{from}$
  or \lstinline{node}$_\texttt{to}$ to the least-common ancestor (LCA) is at
  most $O(\log n_\texttt{sub})$. Therefore, locating the LCA takes at most
  $O(\log n_\texttt{sub})$, and so do subsequent recursive calls in
  \lstinline{queryRec} that traverse the same paths.
\end{proof}

In particular, when a query ends at the current time (i.e., when $t_\texttt{\itshape
  to} = t_\texttt{\itshape now}$), the theorem says that the query takes $O(\log
n_\texttt{\itshape sub})$ time, where $n_\texttt{\itshape sub}$ is the size of
the subwindow being queried.

%%% Local Variables:
%%% mode: latex
%%% TeX-master: "paper"
%%% End:

\section{Results}
\label{sec:results}

% Note to future Scott: you ran all experiments on h0811.

%- throughput on y-axis
%  - varying degrees of disorder on x-axis
%    - curve family kind and arity of tree (classic B-tree, finger B-tree)
%    - fixed large window size 4M == 2 ** 22
%    - varying aggregation operator in multiple figures
%  - varying window size on x-axis
%    - curve family kind and arity of tree (classic B-tree, finger B-tree)
%      - for FIFO case, also include curves for TwoStacks, DABA
%    - varying degrees of disorder in multiple figures
%    - fixed aggregation operator geomean
%- latency average and standard deviation
%  - varying degrees of disorder on x-axis
%    - curve family kind and arity of tree (2-3, B-tree with/without finger)
%    - fixed large window size 4M == 2 ** 22
%    - varying aggregation operator in multiple figures
%  - varying degrees of disorder on x-axis
%    - curve family kind and arity of tree (2-3, B-tree with/without finger)
%    - roller-coaster variable window size, varying maximum to 4M == 2 ** 22
%    - varying aggregation operator in multiple figures
%- (maybe: throughput for varying degrees of window sharing)
%- (maybe: some performance experiments with real data~\cite{mutschler_ziekow_jerzak_2013})
%- (maybe: compare against an alternative approach)
%  - holding buffer \cite{srivastava_widom_2004} + a fast FIFO aggregator
%  - one fast FIFO aggregator for each stream \cite{krishnamurthy_et_al_2010}
%  - priority queue for window contents + Reactive for aggregation

\begin{figure*}[!t]
\center
\includegraphics[width=0.32\textwidth]{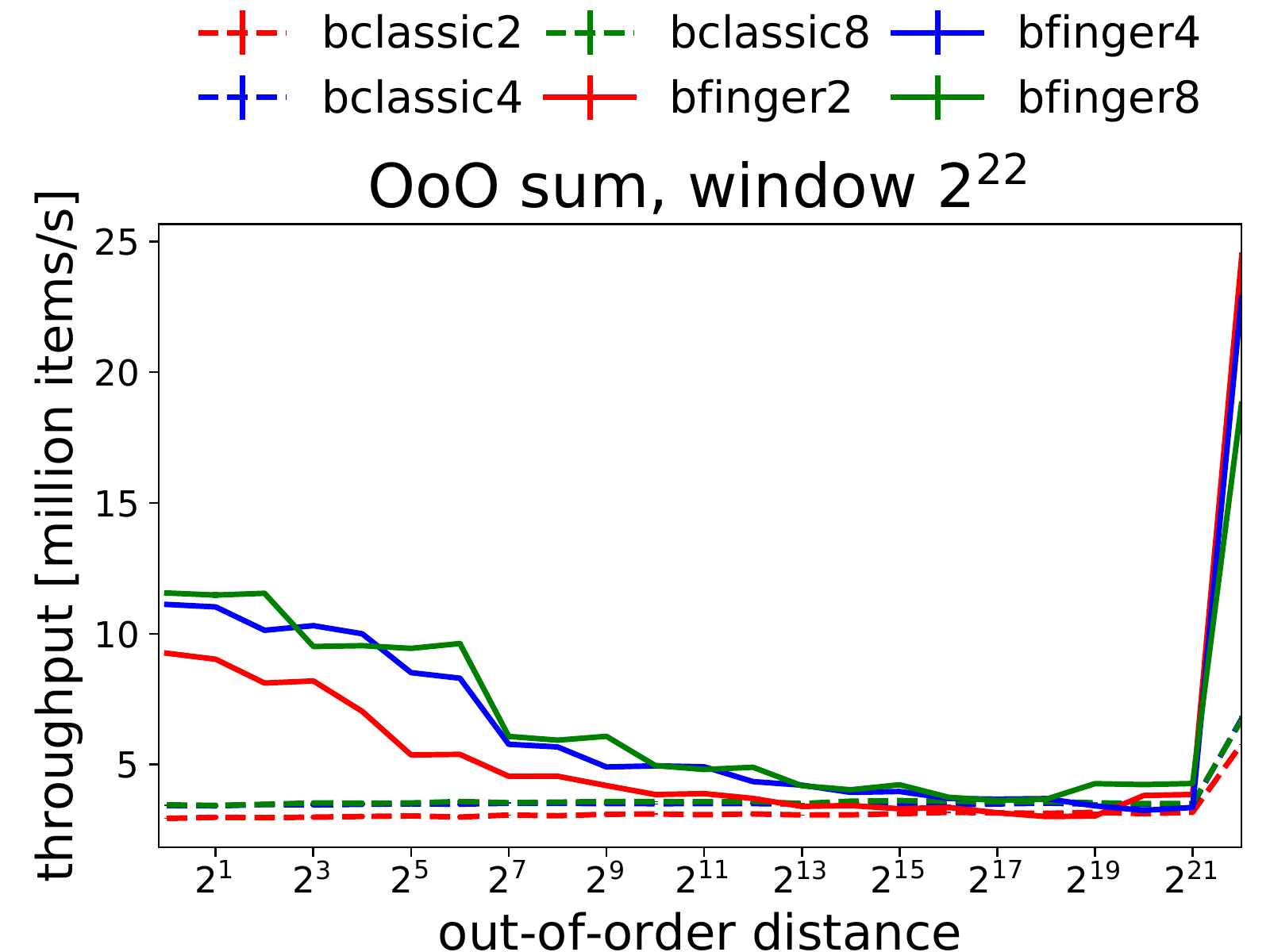}
\includegraphics[width=0.32\textwidth]{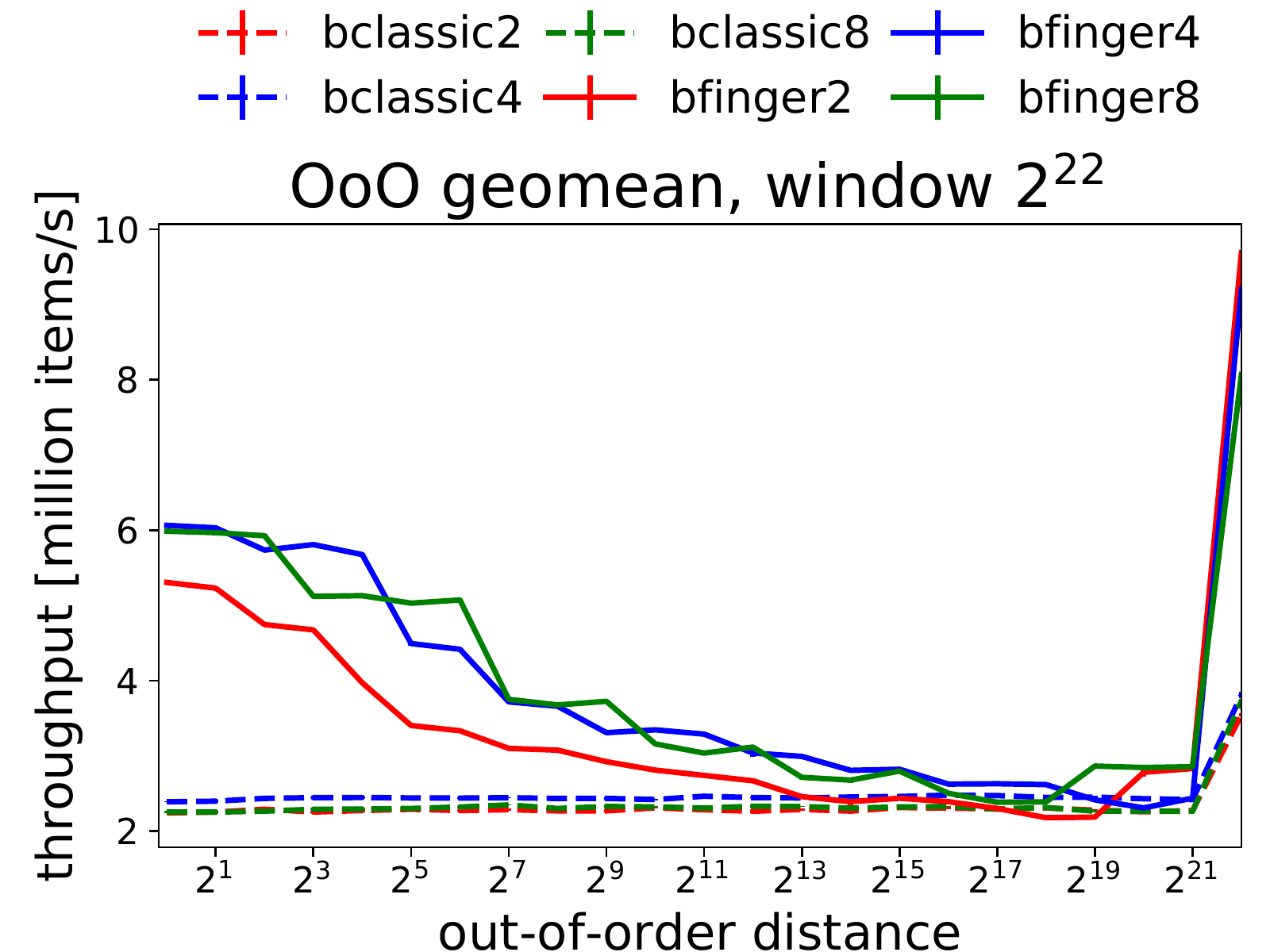}
\includegraphics[width=0.32\textwidth]{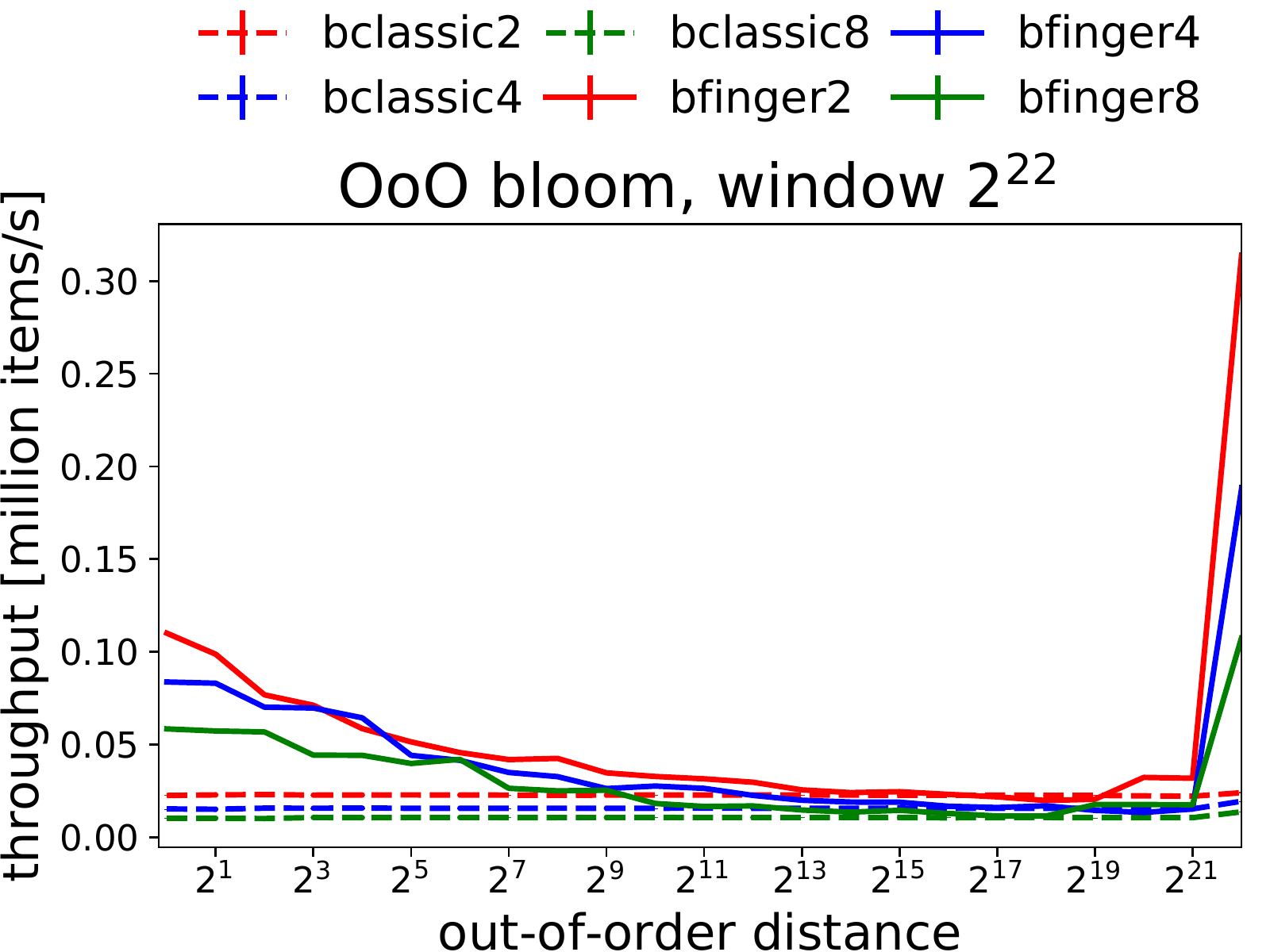}
\caption{Out-of-order distance experiments.}
\label{fig_res_distance}
\end{figure*}

We implemented both OoO SWAG variants in C++: the baseline classic
B-tree augmented with aggregates and the finger B-tree aggregator
(\dsName{}). We present experiments with competitive min-arity values:
$2$, $4$ and $8$. Higher values for min-arity were never competitive
in our experiments. Our experiments run outside of any particular
streaming framework so we can focus on the aggregation algorithms
themselves. Our load generator produces synthetic data items with
random integers. The experiments perform rounds of \lstinline{evict},
\lstinline{insert}, and \lstinline{query} to maintain a sliding window
that accepts a new data item, evicts an old one, and produces a result
each round.

We present results with three aggregation operators $\otimes$ and
their corresponding monoids, each representing a different category of
computational cost. The operator \textsf{sum} performs an integer sum
over the window, and its computational cost is less than that of tree
traversals and manipulations. The operator \textsf{geomean} performs a
geometric mean over the window. For numerical stability, this requires
a floating point log on insertion and floating point additions during
data structure operations. It represents a middle ground in
computational cost. The most expensive operator, \textsf{bloom}, is a
Bloom filter~\cite{bloom_1970} where the partial aggregations maintain a bitset of size
$2^{14}$. It represents aggregation operators where the computational
cost of performing an aggregation easily dominates the cost of
maintaining the SWAG data structure.

We ran all experiments on a machine with an Intel Xeon E5-2697 at 2.7
GHz running Red Hat Enterprise Linux Server 7.5 with a 3.10.0 kernel.
We compiled all experiments with \lstinline{g++} 4.8.5 with
optimization level \lstinline{-O3}.

\subsection{Varying Distance}

We begin by investigating how \lstinline{insert}'s out-of-order
distance affects throughput. The distance varying experiments,
Figure~\ref{fig_res_distance}, maintain a window with a constant size
of $n=2^{22}=4,194,304$ data items. The $x$-axis is the out-of-order distance $d$
between the newest timestamp already in the window and the time\-stamp
created by our load generator. Our adversarial load generator
pre-populates the window with high timestamps and then spends the
measured portion of the experiment producing low timestamps. This
regime ensures that after the pre-population with high timestamps, the
out-of-order distance of each subsequent insertion is precisely~$d$.

This experiment confirms the prediction of the theory. The classic
B-tree's throughput is mostly unaffected by the change in distance,
but the finger B-tree's throughput starts out significantly higher and
smoothly degrades, following a $\log d$ trend. All variants see an
uptick in performance when $d=n$, that is, when the distance is the
size $n$ of the window. This is a degenerate special case. When $n=d$,
the lowest timestamp to evict is always in the left-most node in the
tree, so the tree behaves like a last-in first-out (LIFO) stack, and
inserting and evicting it requires no tree restructuring.

The min-arity that yields the best-performing B-tree varies with the aggregation
operator. For expensive operators, such as \textsf{bloom}, smaller min-arity
trees perform better. The reason is that as the min-arity grows, the number of
partial aggregations the algorithm needs to perform inside of a node also increases.
When the aggregation cost dominates all others, trees that require fewer total
aggregations will perform better. On the flip side, for cheap operators, such as
\textsf{sum}, trees that require fewer rebalance and repair
operations will perform better.

The step-like throughput curves for the finger B-trees is a function
of their min-arity: larger min-arity means longer sections where the
increased out-of-order distance still affects only a subtree with the
same height. When the throughput suddenly drops, the increase in $d$
meant an increase in the height of the affected subtree, causing more
rebalances and updates.

\subsection{Latency}

\begin{figure*}[!t]
\center
\includegraphics[width=0.32\textwidth]{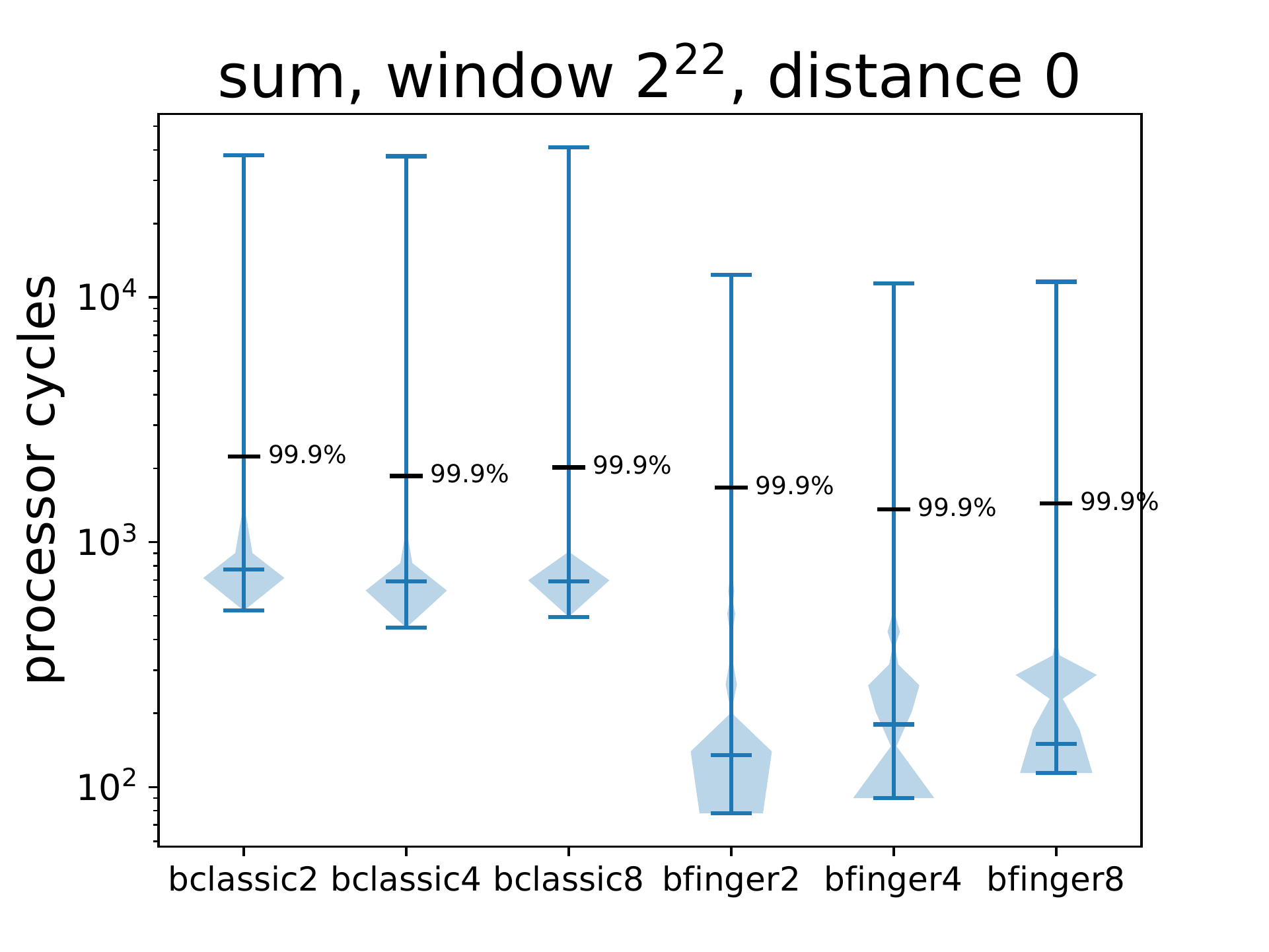}
\includegraphics[width=0.32\textwidth]{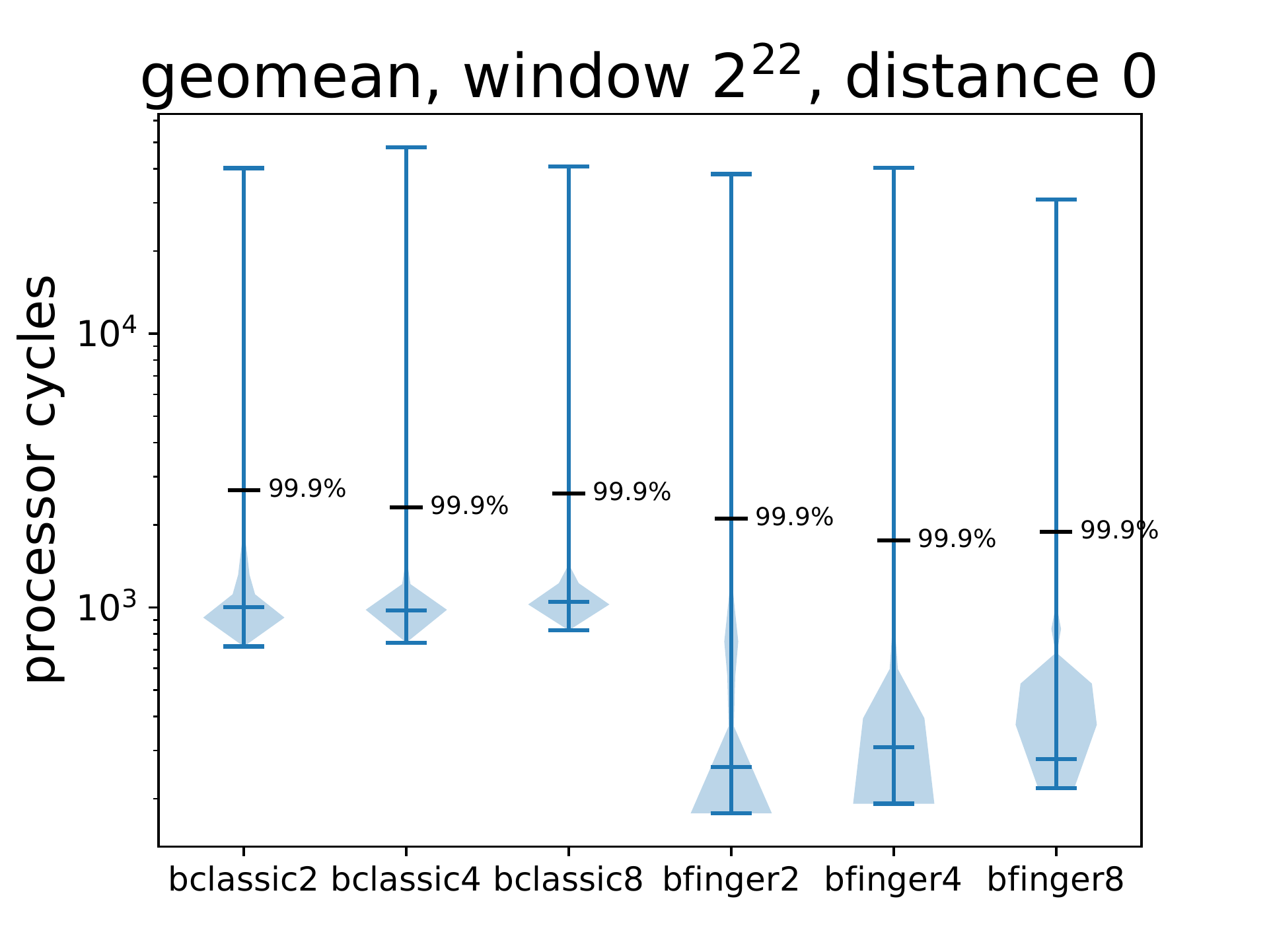}
\includegraphics[width=0.32\textwidth]{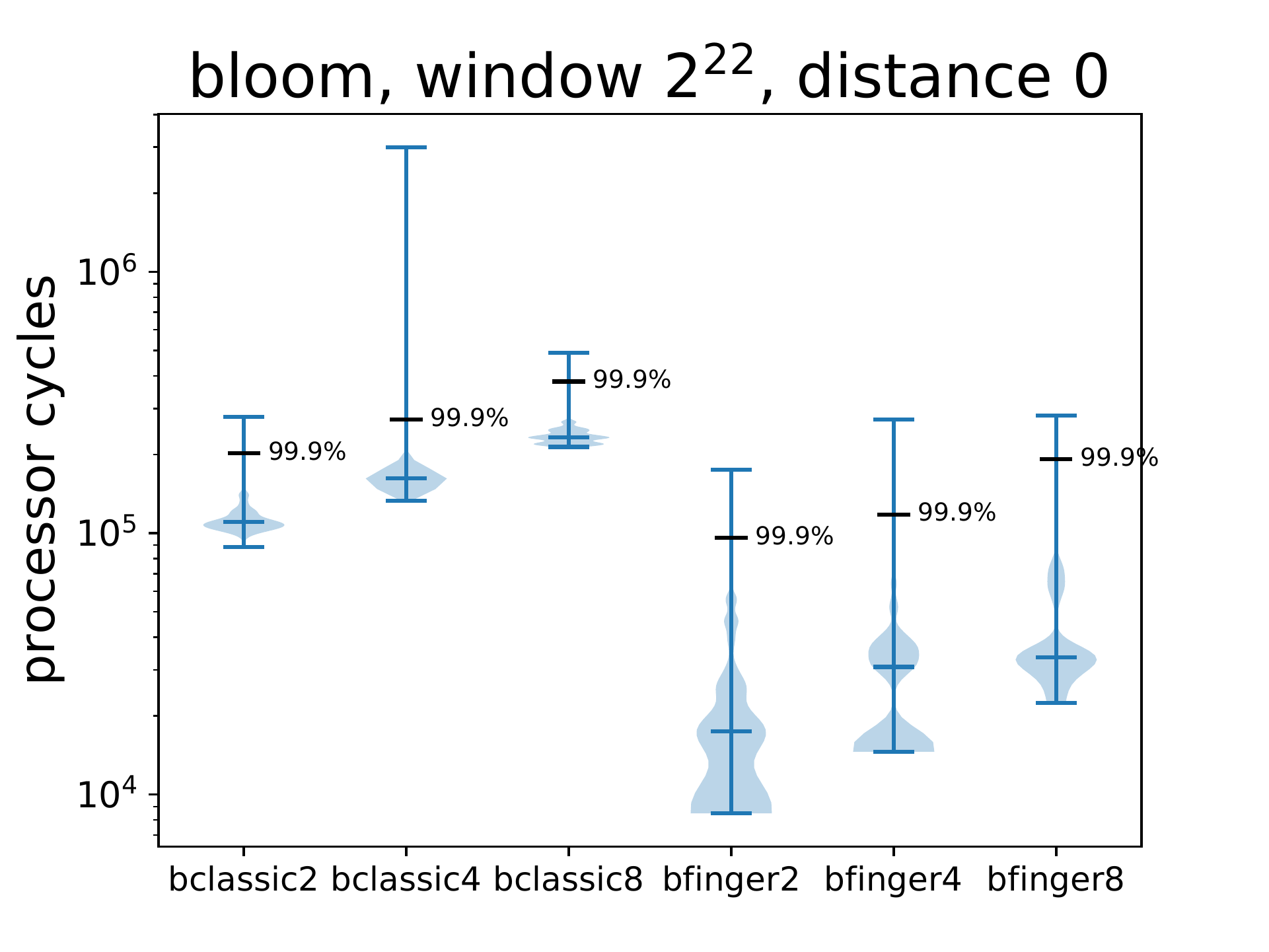}
\includegraphics[width=0.32\textwidth]{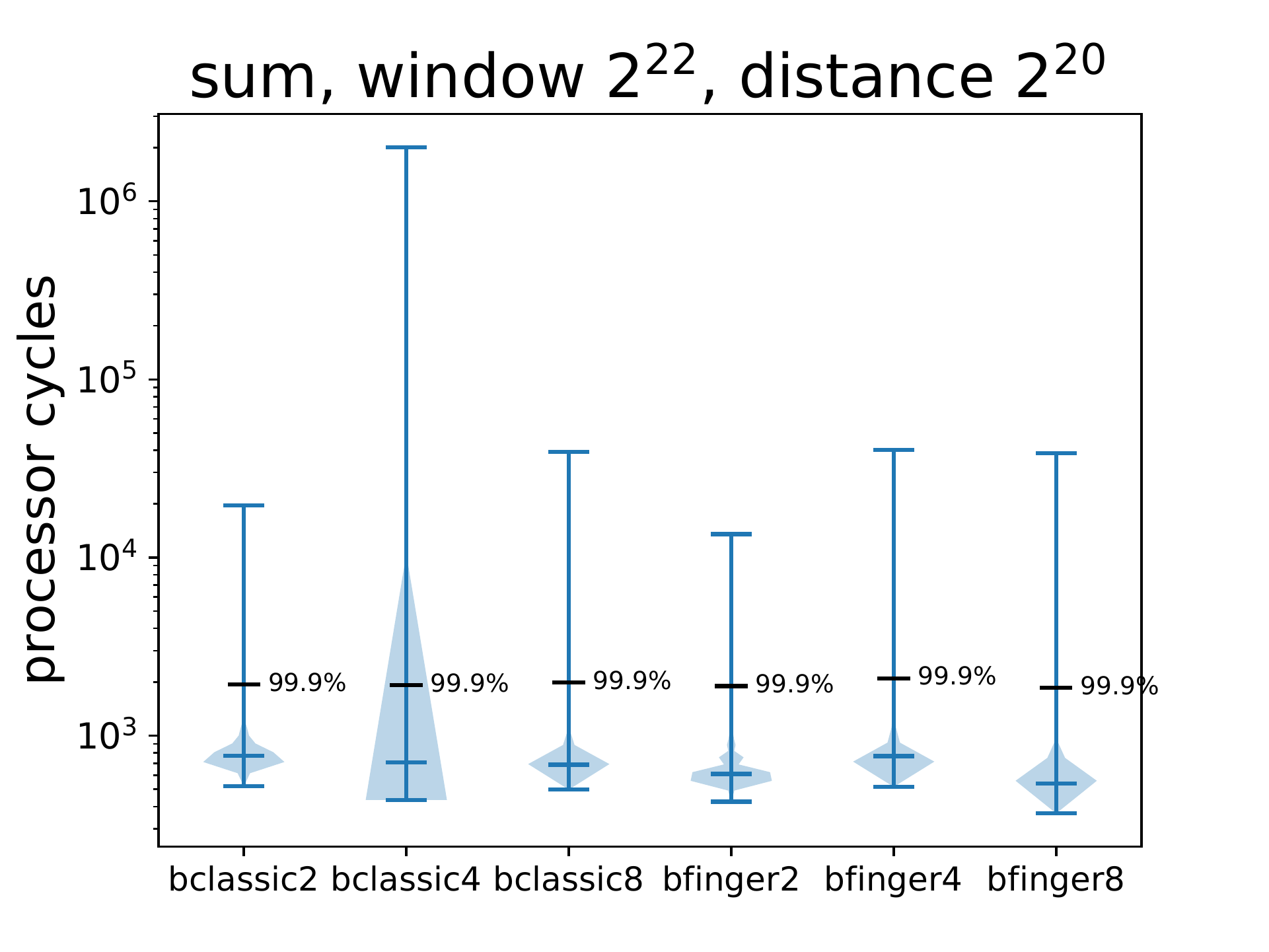}
\includegraphics[width=0.32\textwidth]{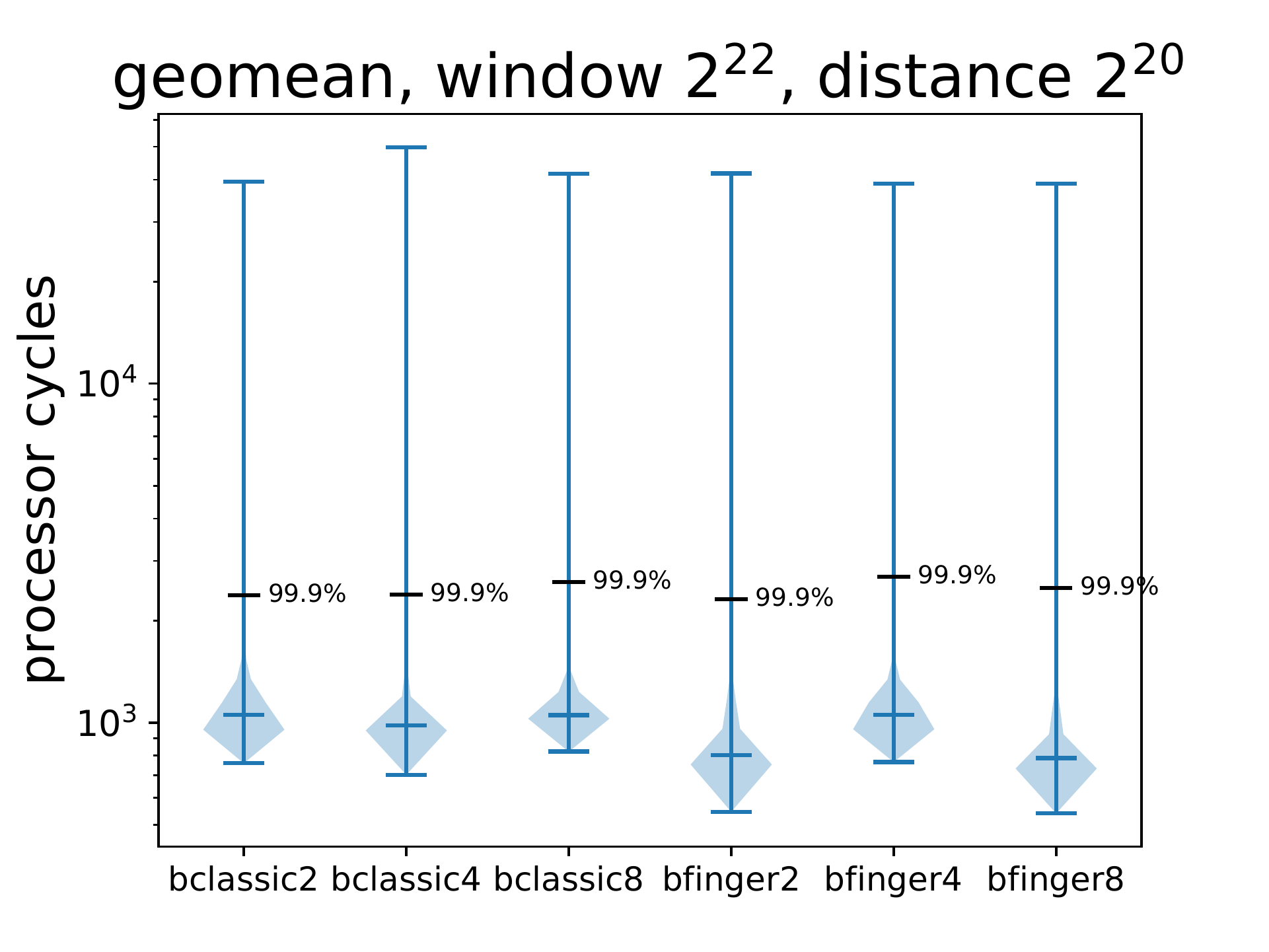}
\includegraphics[width=0.32\textwidth]{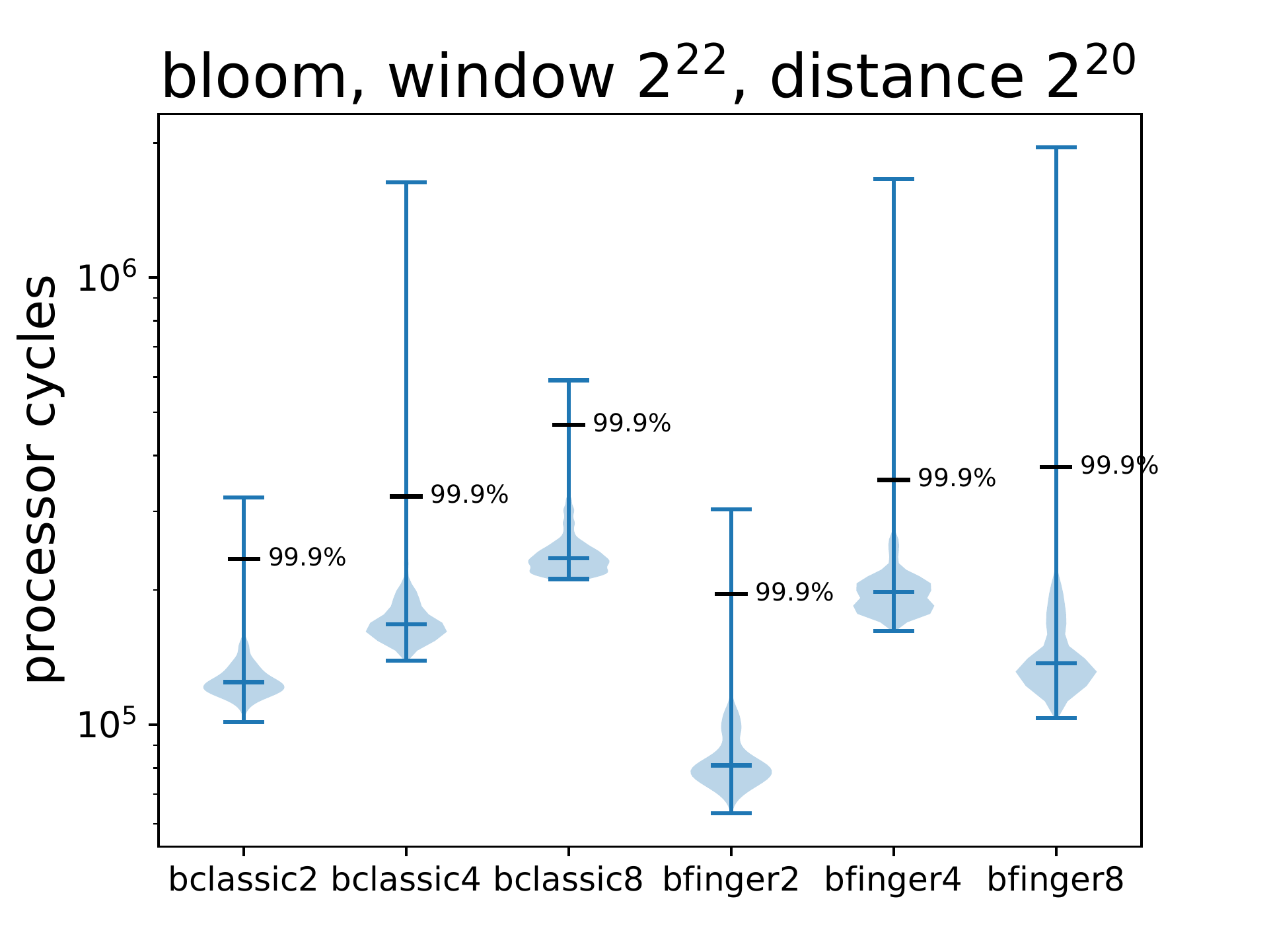}
\caption{Latency experiments.}
\label{fig_res_latency}
\end{figure*}

The worst-case latency for both classic and finger B-trees is $O(\log
n)$, but we expect that the finger variants should significantly
reduce average latency. The experiments in
Figure~\ref{fig_res_latency} confirm this expectation. All latency
experiments are with a fixed window of size $2^{22}$. The top set of
experiments use an out-of-order distance of $d=0$ and the bottom set
use an out-of-order distance of $d=2^{20}=1,048,576$. (We chose the latter
distance because it is among the worst-performing in the throughput
experiments.) The experimental setup is the same as for the throughput
experiments, and the latency is for an entire round of
\lstinline{evict}, \lstinline{insert}, and \lstinline{query}. The
$y$-axis is the number of processor cycles for a round, in log scale.
Since we used a 2.7 GHz machine, $10^3$ cycles take 370 nanoseconds
and $10^6$ cycles take 370 microseconds.
The blue bars represent the median latency, the shaded blue regions
represent the distribution of latencies, and the black bar is the
$99.9$th percentile. The range is the minimum and maximum latency.

When the out-of-order distance is~$0$ and the aggregation operator is
cheap or only moderately expensive, the worst-case latency in practice
for the classic and finger B-trees is similar. This is expected, as
the time is dominated by tree operations, and they are worst-case
$O(\log n)$. However, the minimum and median latencies are orders of
magnitude better for the finger B-trees.  This is also expected, since
in the case of $d=0$, the fingers enable amortized constant
updates. When the aggregation operator is expensive, the finger
B-trees have significantly lower latency, because they have to
repair fewer partial aggregates.

With an out-of-order distance of $d=2^{20}$ and cheap or moderately
expensive operators, the classic and finger B-trees have similar
latency. This is expected: as $d$ approaches $n$, the worst-case
latency for finger B-trees approaches $O(\log n)$. Again, with
expensive operators, the minimum, median, and $99.9$th percentile of
the finger B-tree with min-arity $2$ is orders of magnitude lower than
that of classic B-trees. There is, however, a curious effect clearly
present in the \textsf{bloom} experiments with finger B-trees, but
still observable in the others: min-arity $2$ has the lowest latency;
it gets significantly worse with min-arity $4$, then improves with
min-arity~$8$. Recall that the root is not subject to min-arity---in
other words, it may be slimmer. With $d=2^{20}$, depending on the
arity of the root, some aggregation repairs walk almost to the root
and then back down a spine while others walk to the root and no
further. The former case, which involves twice a spine, is generally more
expensive than the latter, which is usually a shorter path. The
frequency of the expensive case is a function of the window size, tree
arity, and out-of-order distance, and these factors do not interact
linearly.

\begin{figure*}[!t]
\center
\includegraphics[width=0.32\textwidth]{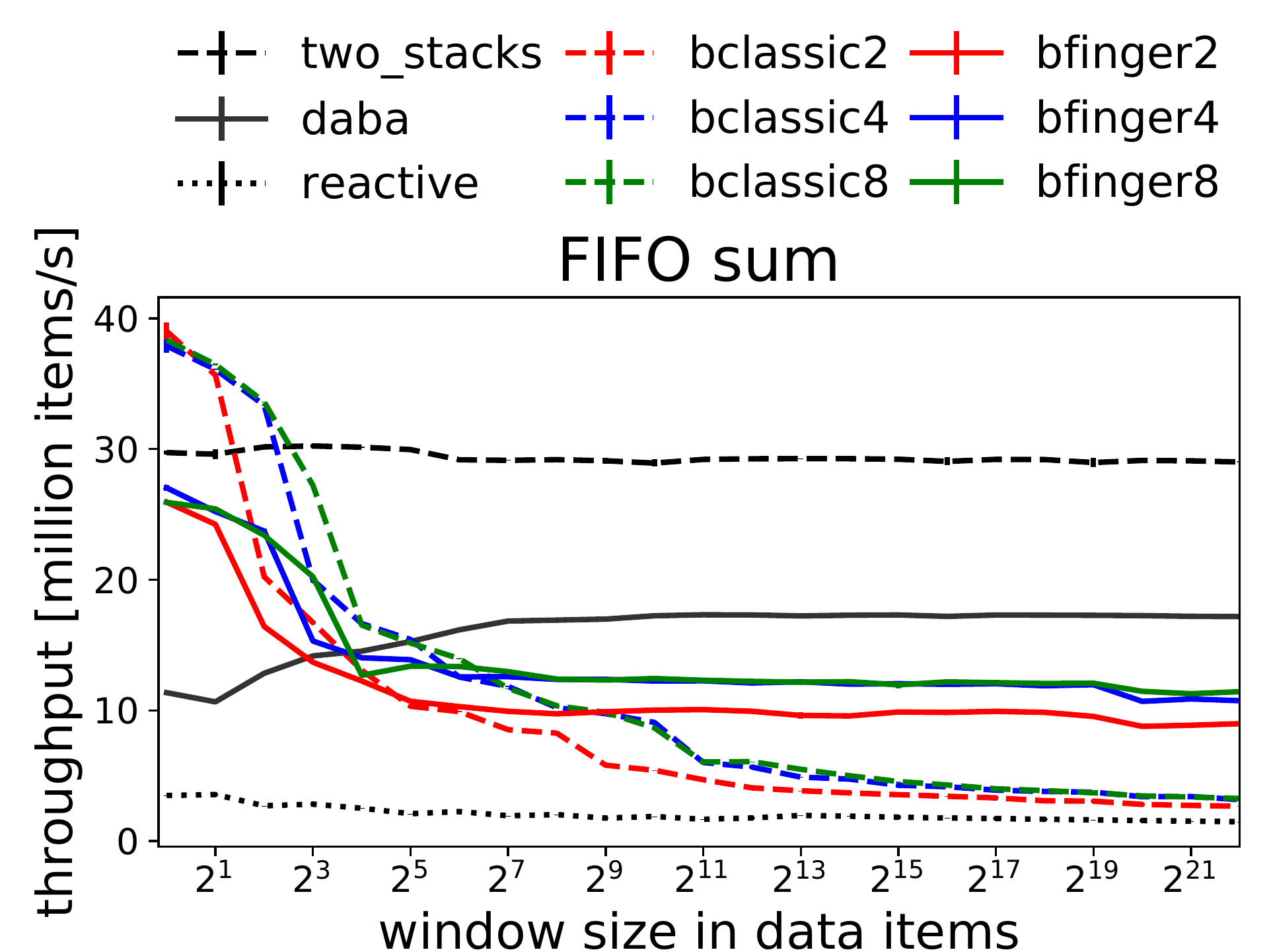}
\includegraphics[width=0.32\textwidth]{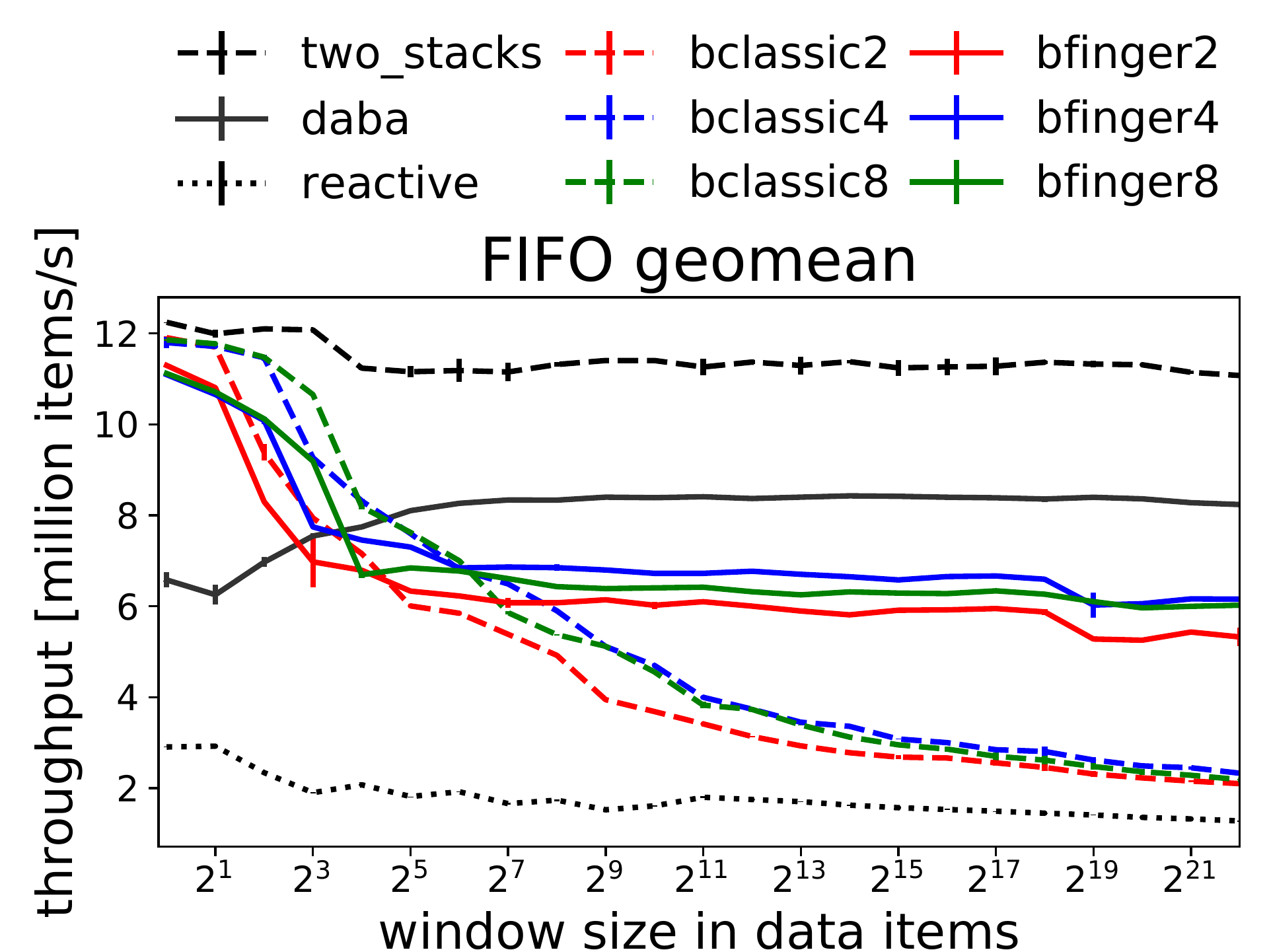}
\includegraphics[width=0.32\textwidth]{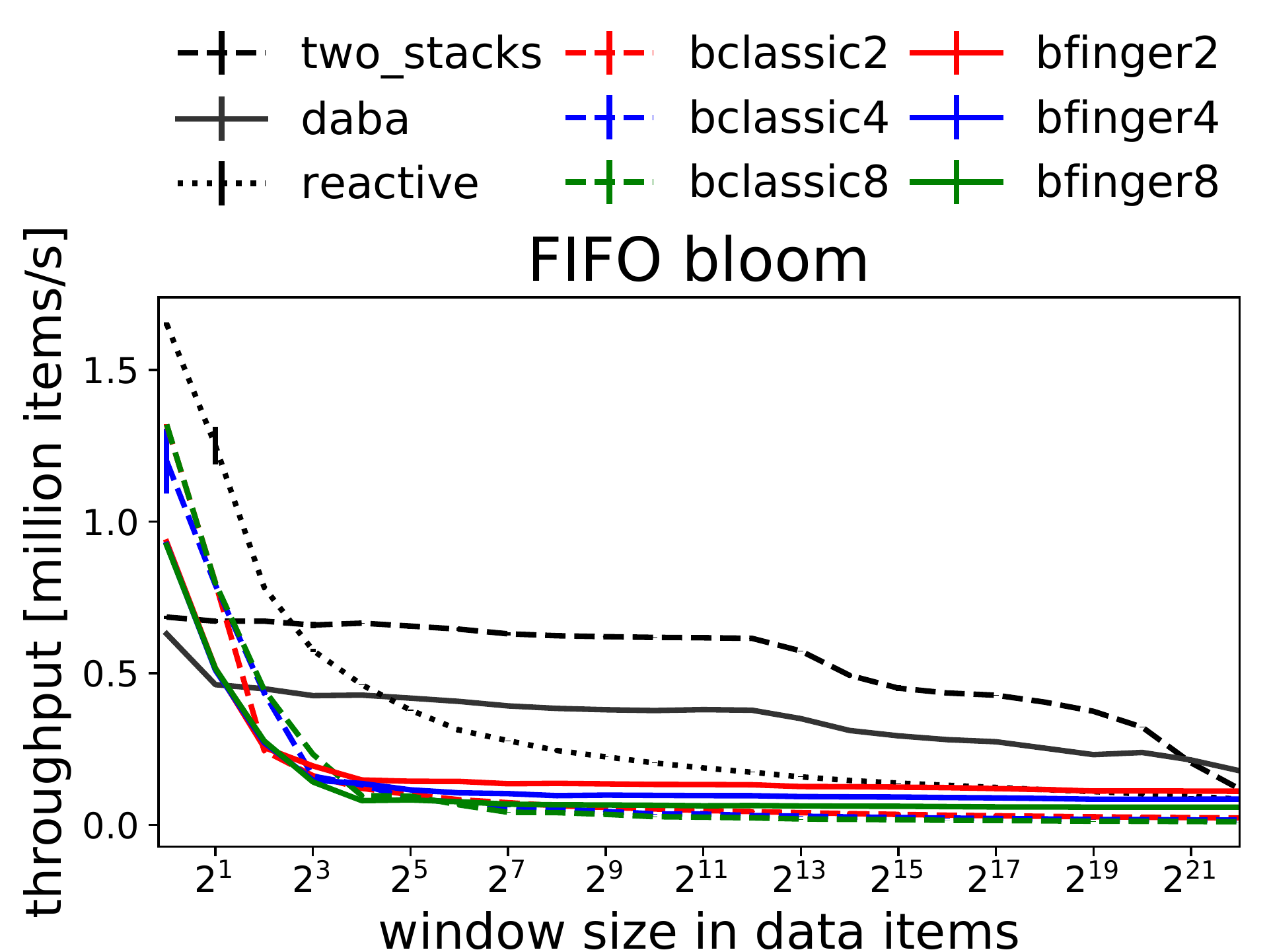}
\caption{FIFO experiments.}
\label{fig_res_fifo}
\end{figure*}

\subsection{FIFO}
A special case for \dsName{} is when $d=0$; with in-order data, our finger B-tree
aggregator (\dsName) enjoys amortized constant time performance. Figure~\ref{fig_res_fifo}
compares the B-tree-based SWAGs against the state-of-the art SWAGs optimized for
first-in, first-out, completely in-order data. Two-stacks only works on in-order
data and is amortized $O(1)$ with worst-case $O(n)$~\cite{adamax_2011}. The
De-Amortized Bankers Aggregator (DABA) also only works on in-order data and is
worst-case $O(1)$~\cite{tangwongsan_hirzel_schneider_2017}. The Reactive
Aggregator supports out-of-order evict but requires in-order insert and is
amortized $O(\log n)$ with worst-case~$O(n)$~\cite{tangwongsan_et_al_2015}. The
$x$-axis represents increasing window size~$n$.

Two-stacks and DABA perform as seen in prior work: for most window sizes,
two-stacks with amortized $O(1)$ time bound has the best throughput. DABA is
generally second best, as it does a little more work on each operation to
maintain worst-case constant performance.

The finger B-tree variants demonstrate constant performance as the window size
increases. The best finger B-tree variants stay within $30\%$ of DABA for
\textsf{sum} and \textsf{geomean}, but are about $60\%$ off of DABA with a more
expensive operator like \textsf{bloom}. In general, finger B-trees are able to
maintain constant performance with completely in-order data, but the extra work
of maintaining a tree means that SWAGs specialized for in-order data
consistently outperform them.

\begin{figure*}[!t]
\center
\includegraphics[width=0.32\textwidth]{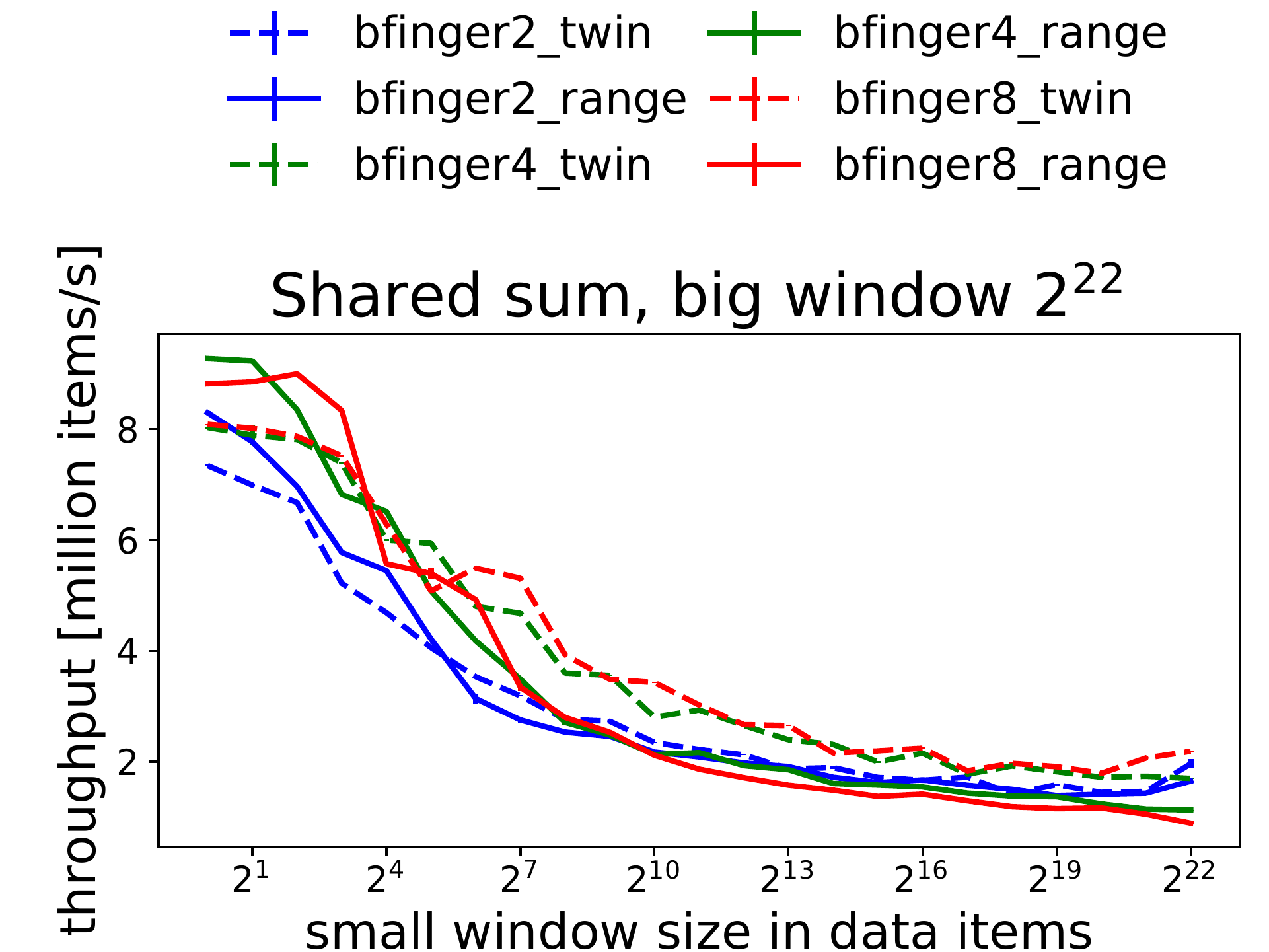}
\includegraphics[width=0.32\textwidth]{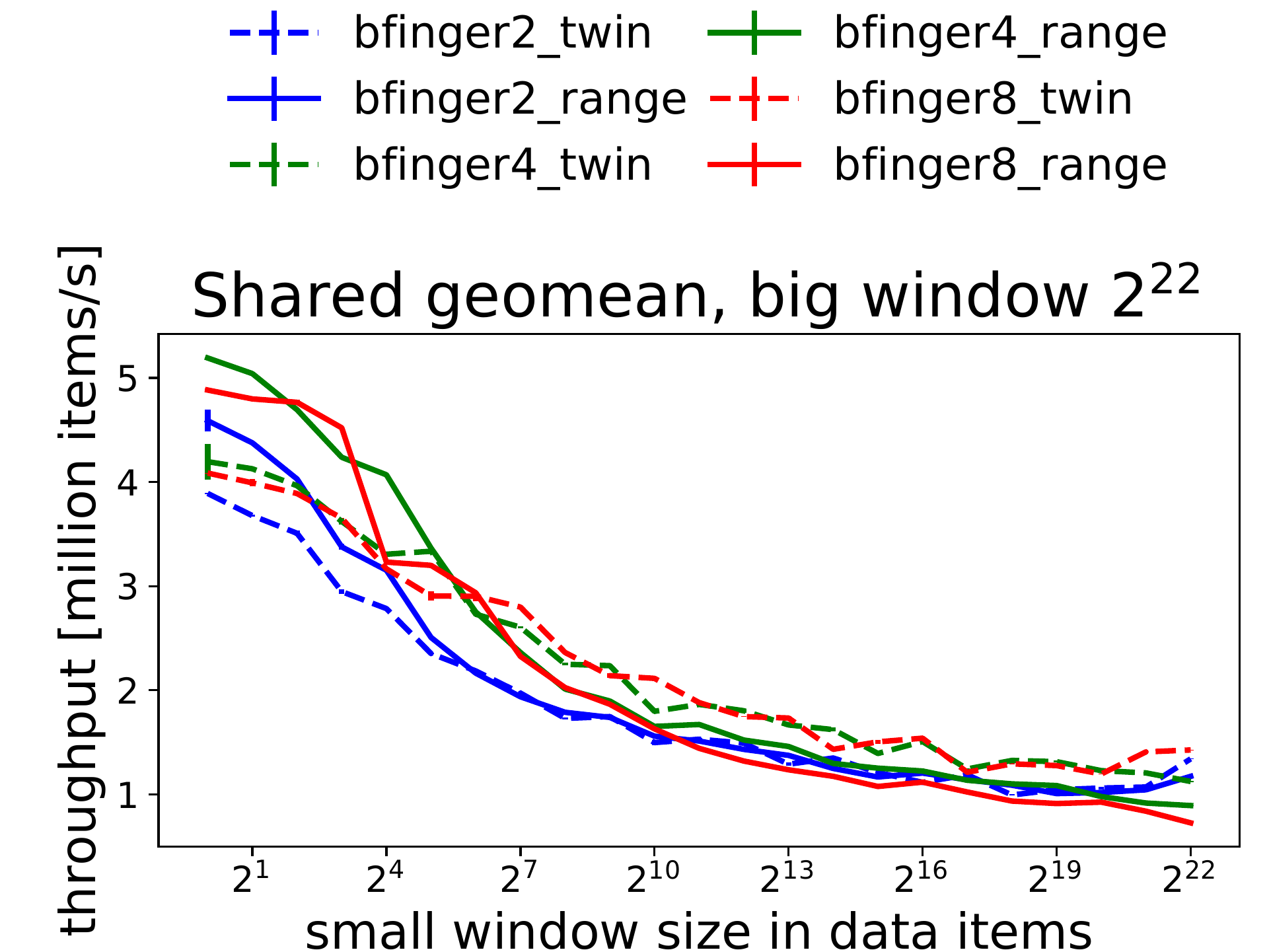}
\includegraphics[width=0.32\textwidth]{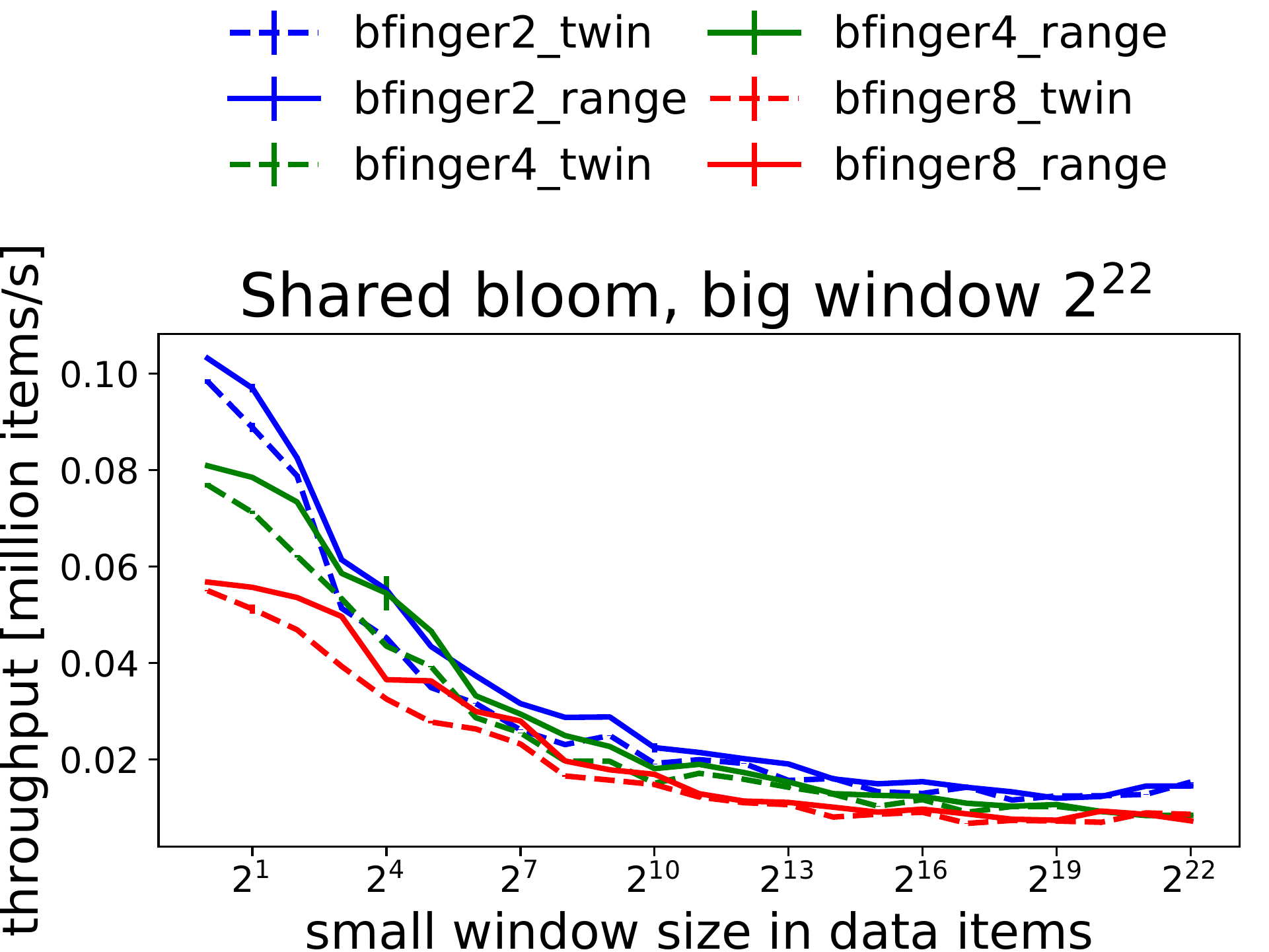}
\caption{Window sharing experiments. Out-of-order distance also varies as $n/2$ where $n$ is 
    the small window size.}
\label{fig_res_sharing}
\end{figure*}

The classic B-trees clearly demonstrate $O(\log n)$ behavior as the
window size increases. Reactive does demonstrate $O(\log n)$ behavior,
but it is only obvious with \textsf{bloom}. For \textsf{sum} and
\textsf{geomean}, the fixed costs dominate.  Reactive was designed to
avoid using pointer-based data structures under the premise that the
extra memory accesses would harm performance. To our surprise, this is
not true: on our hardware, the extra computation required to avoid
pointers ends up costing more. For \textsf{bloom},
Reactive outperforms all of the B-tree based SWAGs because it is
essentially a min-arity 1, max-arity 2 tree. As seen in other results,
for the most expensive aggregation operators, reducing the total number of
aggregation operations matters more to performance than data structure
updates.

\subsection{Window Sharing}

One of the benefits of finger B-trees is that they can support a
range-query interface while maintaining logarithmic performance for
queries over that range. A range-query interface enables window
sharing: the same window can be used for multiple queries over
different ranges. An obvious benefit from window sharing is reduced
space usage, but we also wanted to investigate if it could improve
runtime performance. As Figure~\ref{fig_res_sharing} shows, window
sharing did not consistently improve runtime performance.

The experiments maintain two queries: a big window fixed to size
$2^{22}$, and a small window whose size $n_\text{small}$ varies from
$1$ to $2^{22}$, shown on the $x$-axis. The workload consists of
out-of-order data items where the out-of-order distance $d$ is half of
the small window size, i.e., $d=n_\text{small}/2$. The \textsf{\_twin}
experiments maintain two separate trees, one for each window size. The
\textsf{\_range} experiments maintain a single tree, using a standard
query for the big window and a range query for the small window.

Our experiment performs out-of-order insert and in-order evict, so
insert costs $O(\log d)$ and evict costs $O(1)$. Hence, on average,
each round of the \textsf{\_range} experiment costs $O(\log d)$ for
insert, $O(1)$ for evict, and $O(1)+O(\log n_\text{small})$ for query
on the big window and the small window. On average, each round of the
\textsf{\_twin} experiment costs $2\cdot O(\log d)$ for insert,
$2\cdot O(1)$ for evict, and $2\cdot O(1)$ for query on the big and
small window. Since we chose $d=n_\text{small}/2$, this works out to a
total of $O(\log d)$ per round in both the \textsf{\_range} and the
\textsf{\_twin} experiments. There is no fundamental reason why window
sharing is slightly more expensive in practice. A more optimized code
path might make range queries slightly less expensive, but we would
still expect them to remain in the same ballpark.

By picking $d=n_\text{small}/2$, our experiments demonstrate the case
where window sharing is the most likely to outperform the twin
experiment. Since it did not outperform the twin experiment, we
conclude that window sharing is unlikely to have a consistent
performance benefit.  We could have increased the number of shared
windows to the point where maintaining multiple non-shared windows
performed worse because of the memory hierarchy, but that is the same
benefit as reduced space usage. We conclude that the primary benefits
of window sharing in this context are reduced space usage and the
ability to construct queries against arbitrarily sized windows
on the fly.

%%% Local Variables:
%%% mode: latex
%%% TeX-master: "paper"
%%% End:

\section{Related Work}
\label{sec:related}

This section describes work related to out-of-order sliding window
aggregation, sliding-window aggregation with window sharing, and
finger trees.

\myparagraph{Out-of-Order Stream Processing}
Processing out-of-order (OoO) streams is a popular research topic with a variety
of approaches. But there are surprisingly few incremental algorithms for OoO
stream processing. Truviso~\cite{krishnamurthy_et_al_2010} handles stream data
sources that are out-of-order with respect to each other but where input values
are in-order with respect to the stream they arrive on. The algorithm runs
separate stream queries on each source followed by consolidation. In contrast,
with \dsName{}, each individual stream input value can have its own
independent OoO behavior.
Chandramouli~et~al.~\cite{chandramouli_goldstein_maier_2010} describe how to
perform pattern matching on out-of-order streams but do not tackle sliding
window aggregation. Finally, the Reactive
Aggregator~\cite{tangwongsan_et_al_2015} performs incremental sliding-window
aggregation and can handle OoO evict in $O(\log n)$ time. In contrast,
\dsName{} can handle both OoO insert and OoO evict, and
takes sub-$O(\log n)$ time.

One approach to OoO streaming is \emph{buffering}: hold input stream values in a
buffer until it is safe to release them to the rest of the stream
query~\cite{srivastava_widom_2004}. Buffering has the advantage of not requiring
incremental operators in the query since the query only sees in-order data.
Unfortunately, buffering increases latency (since values endure non-zero delay)
and reduces quality (since bounded buffer sizes lead to outputs computed on
incomplete data). One can reduce the delay by optimistically performing
computation over transactional memory~\cite{brito_et_al_2008} and performing
commits in-order. Finally, one can tune the trade-off between quality and
latency by adaptively adjusting buffer sizes~\cite{ji_et_al_2015}. In contrast
to buffering approaches, \dsName{} can handle arbitrary
lateness without sacrificing quality nor significant latency.

Another approach to OoO streaming is \emph{retraction}: report outputs
quickly but revise them if they are affected by late-arriving inputs.
At any point, results are accurate with respect to stream input values
that have arrived so far. An early streaming system that embraced this approach
was Borealis~\cite{abadi_et_al_2005}, where stateful operators used
stored state for retraction. Spark Streaming also takes this approach:
it externalizes state from operators and handles stragglers like
failures, invalidating parts of the query~\cite{zaharia_et_al_2013}.
Pure retraction requires OoO algorithms such as OoO sliding window
aggregation, but the retraction literature does not show how to do
that efficiently, as the na\"ive approach of recomputing from scratch
would be inefficient for large windows. Our paper is complementary,
describing an efficient OoO sliding window aggregation algorithm that
could be used with systems like Borealis or Spark Streaming.

Using a \emph{low watermark} (lwm) is an approach to OoO streaming that combines
buffering with retraction. The lwm approach allows OoO values to flow through
the query but limits state requirements at individual operators by limiting the
OoO distance. CEDR proposed 8 timestamp-like fields to support a spectrum of
blocking, buffering, and retraction~\cite{barga_et_al_2007}. Li et
al.~\cite{li_et_al_2008} formalized the notion of a lwm based on the related
notion of punctuation~\cite{tucker_et_al_2003}. StreamInsight, which was
inspired by CEDR, offered a state-management interface to operator developers
that could be used for sliding-window aggregation. Subsequently,
MillWheel~\cite{akidau_et_al_2013}, Flink~\cite{carbone_et_al_2015},
and Beam~\cite{akidau_et_al_2015} also
adopted the lwm concept. The lwm provides some guarantees but leaves it to the
operator developer to handle OoO values. Our paper describes an efficient
algorithm for an OoO aggregation operator, which could be used with systems like
the ones listed above.

\myparagraph{Sliding Window Aggregation with Sharing}
All of the following papers focus on sharing over streams with the
same aggregation operator, e.g., monoid \mbox{$(S,\otimes,\onef{})$}.
The Scotty algorithm supports sliding-window aggregation over out-of-order
streams, while sharing windows with both different sizes and slice
granularities~\cite{traub_et_al_2018}.  For instance, Scotty might
share a window of size 60~minutes and granularity 3~minutes with a
session window whose gap timeout is set to 5~minutes. When a tuple
arrives out-of-order, older slices may need to be updated, fused, or
created.  Scotty
relies upon an aggregate store (e.g., based on a balanced tree) to
maintain slice aggregates. One caveat is that the aggregation
operator~$\otimes$ must be commutative; otherwise, one needs to keep
around the tuples from which a slice is pre-aggregated. Our \dsName{}
algorithm does not make any commutativity assumption. For commutative
operators, \dsName{} could serve as a more efficient
aggregate store for Scotty, thus combining the benefits of Scotty's
stream slicing with asymptotically faster final aggregation.

Other prior work on window sharing requires in-order
streams. The B-Int algorithm uses base intervals, which can be viewed
as a tree structure over ordered data, and supports sharing of windows
with different sizes~\cite{arasu_widom_2004}. Krishnamurthi et
al.~\cite{krishnamurthy_wu_franklin_2006} show how to share windows
that differ not just in size but also in granularity.
Cutty windows are a more efficient
approach to sharing windows with different sizes and
granularities~\cite{carbone_et_al_2016}, and their paper explains how to extend
the Reactive Aggregator~\cite{tangwongsan_et_al_2015} for sharing. The FlatFIT
algorithm performs sliding window aggregation in amortized constant time and
supports window sharing, addressing different granularities with the same
technique as Cutty windows~\cite{shein_chrysanthis_labrinidis_2017}. Finally,
the SlickDeque algorithm focuses on the special case where $x\otimes y$ always
returns one of either $x$ or $y$, and offers window sharing with $O(1)$ time
complexity assuming friendly input data
distributions~\cite{shein_chrysanthis_labrinidis_2018}.
In contrast to the above work, \dsName{} combines window sharing
with out-of-order processing.  It directly supports sliding window
aggregation over windows of different sizes.

\myparagraph{Finger Trees}
Our \dsName{} algorithm uses techniques from the literature on finger trees, combining and
extending them to work with sliding window aggregation. Guibas et
al.~\cite{guibas_et_al_1977} introduced finger trees in 1977. A \emph{finger}
can be viewed as a pointer to some position in a tree that makes tree operations
(usually search, insert, or evict) near that position less expensive. Guibas et
al.~used fingers on \mbox{B-trees}, but without aggregation. Huddleston and
Mehlhorn~\cite{huddleston_mehlhorn_1982} offer a proof that the amortized cost
of insertion or eviction at distance~$d$ from a finger is $O(\log d)$. Our proof
is inspired by Huddleston and Mehlhorn, but simplified and addressing a
different data organization: we support values to be stored at interior nodes,
whereas Huddleston and Mehlhorn's trees store values only in leaves. Kaplan and
Tarjan~\cite{kaplan_tarjan_1996} present a purely functional variant of finger
trees. The \emph{hands} data structure is an implementation of fingers that is
external to the tree, thus saving space, e.g., for parent
pointers~\cite{blelloch_maggs_woo_2003}. We did not adopt this techniques,
because in a B-tree, nodes are wider and thus, there are fewer nodes and
consequently fewer parent pointers in total. Finally, Hinze and
Paterson~\cite{hinze_paterson_2006} present purely functional finger trees with
amortized time complexity $O(1)$ at distance~1 from a finger. They describe
caching a monoid-based measure at tree nodes, but this cannot be directly used
for sliding-window aggregation. Our paper is the first to use finger trees for
fast out-of-order sliding window aggregation. The main novelty is to use and
maintain position-aware partial sums.

%%% Local Variables:
%%% mode: latex
%%% TeX-master: "paper"
%%% End:

\section{Conclusion}
\label{sec:concl}

\dsName{} is a novel algorithm for sliding window aggregation
over out-of-order streams. The algorithm is based on finger B-trees
with position-aware partial aggregates. It works with any associative
aggregation operator, does not restrict the kinds of out-of-order
behavior, and also supports window sharing. This paper includes 
proofs of correctness and algorithmic complexity
bounds of our new algorithm. The proofs demonstrate that \dsName{} strictly
outperforms the prior state-of-the-art in theory and that it is as good as
the lower bound algorithmic complexity for this problem. In addition,
experimental results demonstrate that \dsName{} yields
excellent throughput and latency in practice.  Whereas in the past,
streaming applications that required out-of-order sliding window
aggregation had to make undesirable trade-offs to reach their
performance requirements, our new algorithm enables them to work
out-of-the-box for a broad range of circumstances.

\bibliographystyle{ACM-Reference-Format}
\bibliography{ref}
\appendix
\section{Running Time Lower Bound}
\label{sec:appdx-lb}
\newcommand*{\discrep}[0]{\delta}
\newcommand*{\permset}[0]{\mathcal{G}}  % todo: change to a better font
This appendix proves Theorem~\ref{thm:swag-lb}, establishing a lower bound on any
OoO SWAG implementation. For a permutation $\pi$ on an ordered set $X$, denote
by $\pi_i$, $i = 1, \dots, |X|$, the $i$-th element of the permutation. Let
$\discrep_i(\pi)$ be the number of elements among $\pi_1, \pi_2, \dots,
\pi_{i-1}$ that are greater in value than $\pi_i$---that is, $\discrep_i(\pi) =
|\{j < i \mid \pi_j > \pi_i\}|$. This measure coincides with our notion of
out-of-order distance: if elements with timestamps $\pi_1, \pi_2, \dots$ are
inserted into OoO SWAG in that order, the $i$-th element has out-of-order distance
$\discrep_i(\pi)$.

For an ordered set $X$ and $d \geq 0$, let $\permset_d(X)$ denote the set of
permutations $\pi$ on $X$ such that $\max_i \discrep_i(\pi) \leq d$---i.e.,
every element is out of order by at most $d$. We begin the proof by bounding the
size of such a permutation set.

\begin{lemma}
  For an ordered set $X$ and $0 \leq d \leq |X|$,
  \[
    \left|\permset_d(X)\right|= d!(d+1)^{|X|-d}.
  \]
\end{lemma}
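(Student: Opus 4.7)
The plan is to count $|\permset_d(X)|$ by constructing the permutation one element at a time in decreasing order of value, showing that at each step the constraint localizes to a simple count of admissible positions.

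Assume without loss of generality that $X = \{1, 2, \dots, n\}$ with $n = |X|$. I would build the permutation by inserting the values $n, n-1, \dots, 1$ one at a time into a growing sequence: after inserting the top $k$ values, we have a sequence of length $k$ consisting of $\{n-k+1, \dots, n\}$ in some order. The central observation is that the constraint $\discrep_i(\pi) \leq d$ on an element $v = \pi_i$ depends only on the values greater than $v$ that sit to the left of~$v$ in the final permutation. Since those larger values are exactly the ones already inserted when $v$ is processed, and since inserting a smaller value later does not change the set of larger values to the left of any already-placed element, the constraints decouple cleanly across insertions.

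From this, the plan is to show two small facts. First, inserting a smaller value $v$ into the current sequence preserves $\discrep$ for all previously inserted elements (they gain no new larger predecessor). Second, if $v$ is inserted at position $p$ in the current sequence of length $k-1$ (so the new sequence has length $k$, with positions $1, \dots, k$), then $\discrep$ of $v$ in the final permutation equals $p-1$, because every element already in the sequence is larger than $v$ and the $p-1$ elements left of $v$'s slot will remain left of it forever. Hence the constraint $\discrep_v \leq d$ is exactly $p \leq d+1$, giving $\min(d+1, k)$ admissible positions for the $k$-th insertion.

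Multiplying these independent choices yields
\[
  |\permset_d(X)| = \prod_{k=1}^{n} \min(d+1, k) = \Bigl(\prod_{k=1}^{d} k\Bigr)\cdot \Bigl(\prod_{k=d+1}^{n} (d+1)\Bigr) = d!\,(d+1)^{n-d},
\]
matching the claimed formula, with the boundary case $d = n$ collapsing to $n!$ as expected. Finally, I would argue the bijection is exact: every $\pi \in \permset_d(X)$ arises from exactly one sequence of insertions, namely the one obtained by reading off the position of each value $v$ in the restriction of $\pi$ to values $\geq v$. The main subtlety, and thus the step I would be most careful about, is the decoupling argument: namely, verifying that once a larger value has been placed, subsequent insertions of smaller values neither create nor destroy any ``larger-to-the-left'' pair involving it. This is immediate from the definition of $\discrep$, but deserves to be stated explicitly so the product formula is clearly justified.
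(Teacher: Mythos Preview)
Your proposal is correct and takes essentially the same approach as the paper: both arguments build the permutation by inserting the smallest remaining element into a permutation of the larger ones, observing that this insertion has exactly $\min(|X|, d+1)$ admissible positions and does not disturb the $\discrep$-values of the already-placed larger elements. The paper phrases this as an induction on $|X|$ removing $x_0 = \min X$, while you unroll the same recursion as a product over successive insertions in decreasing order of value, arriving at the identical formula $\prod_{k=1}^{n}\min(d+1,k)$.
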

\begin{proof}
  The base case is $\left|\permset_0(\emptyset)\right| = 1$---the empty
  permutation. For non-empty $X$, let $x_0 = \min X$ be the smallest element in
  $X$. Then, every $\pi \in \permset_d(X)$ can be obtained by inserting $x_0$
  into one of the first $\min(|X|, d+1)$ indices of a suitable $\pi' \in
  \permset_d(X\setminus\{x_0\})$. In particular, each $\pi' \in
  \permset_d(X\setminus\{x_0\})$ gives rise to exactly $\min(|X|, d+1)$ unique
  permutations in $\permset_d(X)$. Hence, $\left|\permset_d(X)\right| =
  \left|\permset_d(X\setminus\{x_0\}) \right|\cdot \min(|X|, d+1)$. This expands
  to
  \[
    |\permset_d(X)| = \prod_{k=1}^{|X|} \min(k, d+1)
    = \Biggl(\prod_{k=1}^d k\Biggr)\Biggl(\prod_{k=d+1}^{|X|} d+1\Biggr), 
  \]
  which means $|\permset_d(X)|= d!(d+1)^{|X|-d}$,
    completing the proof.
\end{proof}

We will now prove Theorem~\ref{thm:swag-lb} by providing a reduction that sorts
any permutation $\pi \in \permset_d(\{1, 2, \dots, m\})$ using OoO SWAG.

\begin{proof}[Proof~of~Theorem~\ref{thm:swag-lb}]
  Fix $X = \{1, 2, \dots, m\}$. Let $A$ be a OoO SWAG implementation
  instantiated with the operator $x \otimes y = x$. When queried, this
  aggregation produces the first element in the sliding window.
  Now let $\pi$ be any permutation in $\permset_d(X)$. We will sort $\pi$ using
  $A$. First, insert $m$ elements $\tv{\pi_1}{\pi_1}, \tv{\pi_2}{\pi_2}, \dots,
  \tv{\pi_m}{\pi_m}$ into $A$. By construction, each insertion has out-of-order
  distance at most $d$. Then, query and evict $m$ times, reminiscent of heap sort.
  At this point, $\pi$ has been sorted using a total of $3m$ OoO SWAG
  operations.

  By a standard information-theoretic argument~(see, e.g.,
  \cite{cormen_leiserson_rivest_1990}), sorting a permutation in $\permset_d(X)$
  requires, in the worst case, $\Omega(\log |\permset_d(X)|)$ time. There
  are two cases to consider: If $d \leq \frac{m}2$, we have
  $|\permset_d(X)| \geq (1+d)^{m - d} \geq (1+d)^{m - m/2} = (1+d)^{m/2}$, so
  $\log |\permset_d(X)| \geq \Omega(m \log (1+d))$. Otherwise, we have $m \geq d
  > \frac{m}{2}$ and $|\permset_d(X)| \geq d! \geq (m/2)!$. Using Stirling's
  approximation, we know $\log |\permset_d(X)| = \Omega(m \log m)$, which is
  $\Omega(m \log (1+d))$ since $2m \geq 1+d$. In either case, $\log
  |\permset_d(X)| \geq \Omega(m \log (1+d))$.
\end{proof}

%%% Local Variables:
%%% mode: latex
%%% TeX-master: "paper"
%%% End:

\section{FiBA Correctness \& Complexity}\label{sec:appdxproof}

This appendix proves Theorem~\ref{trm_finger_correctness} (\dsName{}
correctness) and Theorem~\ref{trm_finger_complexity} (\dsName{}
algorithmic complexity).

\begin{proof}[Proof~of~Theorem~\ref{trm_finger_correctness}]
  There are two cases. If the root has no children (is a leaf), the inner
  aggregate stored at the root represents the aggregation of all the values
  inside the root node. Otherwise, by the aggregation invariants, we have the
  following observations: (1)~the aggregation at the right (left) finger is the
  aggregation of all values in the subtree that is the rightmost (leftmost) child of
  the root; and (2)~the aggregation at the root, represented by an inner
  aggregate, is the aggregation of all values in the tree excluding those
  covered by~(1).
  Therefore, \lstinline{query()}, which returns \lstinline{leftFinger.agg}
  $\otimes$ \lstinline{root.agg} $\otimes$ \lstinline{rightFinger.agg}, returns
  the aggregation of the values in the entire tree, in time order.
\end{proof}

\begin{proof}[Proof~of~Theorem~\ref{trm_finger_complexity}]
  The \lstinline{query()} operation performs at most \emph{two} $\otimes$
  operations; it clearly runs in $O(1)$ time.

  The search cost $T_\textrm{search}$ is bounded is follows. Let $y_0$ be the
  node at the finger where searching begins and recursively define $y_{i+1}$ as
  the parent of $y_i$. This forms a sequence of nodes on the spine on which
  searching takes place. Recall that
  \mbox{$\minarity\,=\,$\lstinline{MIN_ARITY}} is a constant. Because the
  subtree rooted at $y_i$ has $\Omega(\minarity^i)$ keys and the key we are
  searching is at distance $d$, we know the key belongs in the subtree rooted at
  some $y_{i^*}$, where $i^* = O(\log d)$. Thus, it takes $i^*$ steps to walk up
  the spine and at most another $i^*$ to locate the spot in the subtree as all
  leaves are at the same depth, bounding $T_\textrm{search}$ by
  $O(\log_\minarity d)$. The rebalance cost $T_\textrm{rebalance}$ is given by
  Lemma~\ref{lemma:restruct-cost} in the following section. Finally, following
  the aggregation invariants, a partial aggregation is affected only if it is
  along the search path or involved in rebalancing. Therefore, the number of
  affected nodes that requires repairs is bounded by $\Theta(T_\textrm{search} +
  T_\textrm{rebalance})$. Treating $\minarity$ as bounded by a constant,
  $T_\textrm{repair}$ is $O(T_\textrm{search} + T_\textrm{rebalance})$,
  concluding the proof.
\end{proof}

%%% Local Variables:
%%% mode: latex
%%% TeX-master: "paper"
%%% End:

\balance
\section{Tree Rebalancing Cost}
To maintain the size invariants described earlier, all tree data structures
used in this paper require some restructuring after each update operation.
This section shows that restructuring only costs amortized $O(1)$ time
per operation.
More specifically, we prove the following lemma:
\begin{lemma}
  \label{lemma:restruct-cost}
  Let $\minarity \geq 2$. The amortized cost due to tree rebalancing in a B-tree
  with nodes of arity between \mbox{\lstinline{MIN_ARITY}$\,=\minarity$} and
  \mbox{\lstinline{MAX_ARITY}$\,=2\minarity$} (inclusive), starting with an empty
  tree initially, is $O(1)$ per OoO SWAG operation.
\end{lemma}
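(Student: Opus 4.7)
The plan is to use the potential (coin) argument foreshadowed in Section~\ref{sec:algorithm}. I would maintain the following invariant, restored after every OoO SWAG operation: every non-root node of arity $2\mu$ holds 2 coins, every non-root node of arity $\mu$ holds 1 coin, and every other node holds 0 coins. The root is exempt because it is permitted arities as low as~2. Counting each split, merge, and move as one unit of actual rebalancing work (a ``spent'' coin), and letting $B$, $R$, $S$ denote coins billed, refunded, and spent during a single insert or evict, we have the bookkeeping identity
\[ B - R = (\text{coins after}) - (\text{coins before}) + S. \]
It suffices to prove $B - R \leq 2$ for every insert and every evict; summing over a sequence starting from the empty tree then bounds the total of $S$ (and hence the total rebalancing work) by $O(N)$, giving the desired $O(1)$ amortized cost per operation.

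The bulk of the proof is case analysis. For an operation causing no structural change, only one node's arity shifts by $\pm 1$ and $S = 0$, so the coin delta---and hence $B - R$---is at most~2. For a single non-root split, the 2 coins at the overflowing arity-$2\mu$ node pay for the split, while the two fragments (arities $\mu, \mu + 1$) combined need only $1 + 0 = 1$ coin; this leaves one coin freed to absorb any arity change at the parent, with at most 2 new coins needed if the parent hits~$2\mu$. For a cascade of $k$ splits along a spine, each intermediate level exhibits the same local balance (2 coins released, 1 coin placed, 1 coin spent on the split), so the cascade pays for itself, and only the topmost level contributes to $B - R$, which remains $\leq 2$.

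The evict side is symmetric and even more generous. A merge consumes two arity-$\mu$ siblings (total $1 + 1 = 2$ coins) and produces one arity-$(2\mu - 1)$ node (0 coins), releasing 2 coins per cascade level against only 1 spent---so a pure merge cascade is coin-positive and refunds coins overall. A move, which terminates any cascade and happens at most once per evict, shifts one entry from a sibling of arity $>\mu$ into the underflowing node; the worst case (sibling of arity $\mu+1$ shrinking to $\mu$) still keeps $B - R \leq 2$. The height-increase and height-decrease cases each require separate bookkeeping because the old root transitions between coin-exempt and coin-bearing status; as the captions in Figures~\ref{fig_inc_height} and~\ref{fig_dec_height} already indicate, both fit within the $\leq 2$ billed budget when the height change is counted together with the accompanying split or merge as a single unit of rebalancing work.

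The main obstacle lies in selecting the asymmetric weights (1 coin at $\mu$ versus 2 coins at $2\mu$): a symmetric 2-2 scheme fails because the isolated insert that pushes arity from $2\mu - 1$ to $2\mu$ would already bill 2 coins and leave no budget for a cascade above it. The asymmetry mirrors the asymmetry of B-tree rebalancing itself, in that a split creates exactly one new boundary node per level while a merge consumes two; consequently, insert cascades release one coin per level (matching the one coin spent) and evict cascades release two (beating the one coin spent). Modulo careful verification against the cases sketched in Figures~\ref{fig_finger}--\ref{fig_dec_height}---each of which the paper has helpfully annotated with its spent/billed/refunded counts---this argument closes the lemma.
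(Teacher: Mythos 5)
Your proposal is correct and is essentially the same accounting argument the paper uses; in fact, your coin function (2 at arity $2\mu$, 1 at arity $\mu$ for non-root, 0 otherwise, root exempt) agrees exactly with the paper's invariant restricted to the resting arities $[\mu, 2\mu]$. The one structural difference is how transient arities are handled: the paper also assigns explicit coin counts to the overflow/underflow states $a = 2\mu + 1$ (4 coins) and $a = \mu - 1$ (2 coins, non-root), which lets it show that each individual split or merge step is fully self-financed by the coins at the node being restructured (the 4-coin reserve covers 1 spent + 1 for fragment $r$ + 2 for the parent's incremented arity). You instead restore the invariant only at operation boundaries and analyze a cascade as a whole, showing the per-level balance is zero for splits and positive for merges, with only the topmost level contributing up to 2 billed coins. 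These are two presentations of the same potential function, and both give $B - R \leq 2$ per operation; the paper's version makes the per-step bookkeeping more local, while yours avoids defining coins for arities outside the size invariants. Your treatment of move and height changes as at-most-once-per-operation constant-cost events matches the paper's, so the argument closes.
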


Important to the proof are the following observations:
\begin{itemize}
\item The only internal operations that can alter the tree structure are
  \lstinline{split}, \lstinline{merge}, \lstinline{move},
  \lstinline{heightIncrease}, and \lstinline{heightDecrease}. Each of these
  operations costs $O(1)$ time in the worst-case and only references nodes it
  has direct pointers to.

\item In a tree with minimum arity $\minarity$ and maximum arity $2\minarity$,
  during the intermediate steps, the arity of a node may be $\minarity - 1$ or
  $2\minarity + 1$. Hence, even in intermediate stages, each node always has arity
  between $\minarity-1$ and $2\minarity + 1$ (inclusive).
\item Finally, since $\minarity \geq 2$, we have $\minarity + 1 < \minarity + \minarity = 2\minarity$.
\end{itemize}

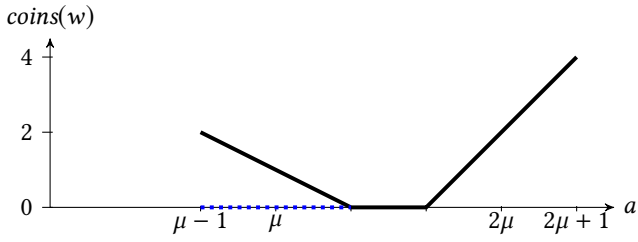
\begin{figure}[t]
  \tikzset{
  % Define standard arrow tip
  >=stealth',
  % Define style for different line styles
  axis/.style={<->},
  curve line/.style={line width=1.5pt},
}
 \begin{tikzpicture}[xscale=1,yscale=0.5]
    % Axis
    \coordinate (y) at (0,4.5);
    \coordinate (x) at (7.5,0);
    \draw[<->] (y) node[above] {$\textit{coins}(w)$} -- (0,0) --  (x) node[right]
    {$a$};
    % ticks
    \foreach \x in {2,...,7}
      \draw (\x,1pt) -- (\x,-3pt)
        node[anchor=north] {};
    \foreach \y in {0,2,4}
      \draw (1pt,\y) -- (-3pt,\y) 
        node[anchor=east] {\y}; 

    \node at (2, -12pt)  {$\minarity-1$};
    \node at (3, -12pt)  {$\minarity$};
    \node at (6, -12pt)  {$2\minarity$};
    \node at (7, -12pt)  {$2\minarity+1$};

    \coordinate (left_end) at (2, 2);
    \coordinate (left_ground) at (4, 0);
    \coordinate (right_ground) at (5, 0);
    \coordinate (right_end) at (7, 4);
    \draw[black, curve line] (left_end) -- (left_ground) -- (right_ground) -- (right_end);
    \draw[blue, dotted, curve line] (2, 0) -- (left_ground);
  \end{tikzpicture}
  \caption{The black line shows the number of coins on a non-root node $w$ with
    arity~$a$. The number on the root node only differs in the blue dotted portion.}
  \label{fig:schematic-coin-invariant}
\end{figure}  

 \newcommand*{\hingeLoss}[1]{\bm{[}{#1}\bm{]}^+}
\begin{proof}[Proof~of~Lemma~\ref{lemma:restruct-cost}]
  This proof is a specialization of the rebalancing cost lemma
  in~\cite{huddleston_mehlhorn_1982}. We prove this lemma by showing that if
  each \lstinline{insert} and \lstinline{evict} is billed two coins, the
  following invariant on the amount of ``money'' can be maintained for every
  B-tree node. Let $w$ be a node with arity $a$.
  We aim for $w$ to maintain a reserve of $\textit{coins}(w)$ coins, where
  \[\textit{coins}(w)=\left\{\begin{array}{ll}
    4 & \mbox{if $a=2\minarity+1$}\\
    2 & \mbox{if $a=2\minarity$ or ($a=\minarity-1$ and $w$ is not the root)}\\
    1 & \mbox{if $a=\minarity$ and $w$ is not the root}\\
    0 & \mbox{if $a<2\minarity$ and ($a>\minarity$ or $w$ is the root)}
  \end{array}\right.\]
 as illustrated in Figure~\ref{fig:schematic-coin-invariant}.
  Now when \lstinline{insert} or \lstinline{evict} is called, the data structure
  locates a node in the tree where an entry is either added (due to
  \lstinline{insert}) or removed (due to \lstinline{evict}). In either case,
  $\textit{coins}(\cdot)$ of this node never changes by more than $2$,
  so $2$ coins suffice to cover the difference. However, this action
  may subsequently trigger
  a chain of \lstinline{split}s or \lstinline{merge}s. Below, we argue that the
  coin reserve on each node is sufficient to pay for such \lstinline{split}s and
  \lstinline{merge}s.
  
  When \lstinline{split} is called on a node $w$, it must be the case that $w$
  has arity $2\minarity + 1$. Therefore, $w$ itself has a reserve of $4$ coins.
  When $w$ is split, it is split into two nodes $\ell$ and $r$, with one entry
  promoted to \lstinline{$w$.parent}, the parent of $w$. Node $\ell$ will have arity $\minarity+1$
  and node $r$ will have arity $\minarity$. Because $\minarity < \minarity+1 < 2\minarity$,
  we have $\textit{coins}(\ell)=0$ and node $\ell$ needs no coin. But node $r$ has $\textit{coins}(r) =
  1$, so it will need $2$ coins. Moreover, now that the arity of \lstinline{$w$.parent} is
  incremented, node \lstinline{$w$.parent} may need up to 2 additional coins. Out of 4 coins $w$
  has, use $1$ to pay for the split, give $1$ to $r$, and give up to $2$ to
  \lstinline{$w$.parent}. Any excess is refunded.
  
  When \lstinline{merge} is called on a node $w$, it must be the case that $w$
  has arity $\minarity-1$ and the sibling it is about to merge with has
  arity~$\minarity$. Therefore, between these two nodes, we have $2 + 1 = 3$ coins in
  reserve. Once merged, the node has arity $\minarity + \minarity - 1 = 2\minarity-1$,
  so it needs $0$ coins. As a result of merging, the parent of $w$ loses one
  child, so it may potentially need $1$ coin. Out of $3$ coins in reserve, use
  $1$ to pay for the merge and give up to $1$ to \lstinline{$w$.parent}. Any excess is refunded.

  Finally, note that each of \lstinline{heightIncrease},
  \lstinline{heightDecrease}, and \lstinline{move} can take place at most once
  per a single OoO SWAG update. The internal operations
  \lstinline{heightIncrease} and \lstinline{heightDecrease} are easy to account
  for. For \lstinline{move}, when called on a node $w$, it must be the case that
  $w$ has arity $\minarity - 1$, and the sibling it is interacting with has arity $a'$,
  where $\minarity \leq a' \leq 2\minarity$. So, $w$ has $2$ coins. Once moved, $w$ has arity
  $\minarity$, so it needs only $1$ coin, leaving $1$ coin for the sibling to use. The
  sibling of $w$ will loose one arity, so it needs at most $1$ additional coin
  (either going from arity $\minarity + 1$ to $\minarity$, or $\minarity$ to
  $\minarity-1$). This concludes the proof.
\end{proof}

%%% Local Variables:
%%% mode: latex
%%% TeX-master: "paper"
%%% End:

%\bibliographystyle{abbrv}

\end{document}